\newcommand{\tinyspace}{\mspace{1mu}}
\newcommand{\op}[1]{\operatorname{#1}}
\newcommand{\abs}[1]{\left\lvert\tinyspace #1 \tinyspace\right\rvert}
\newcommand{\norm}[1]{\left\lVert\tinyspace #1 \tinyspace\right\rVert}
\newcommand{\setft}[1]{\mathrm{#1}}
\newcommand{\density}[1]{\setft{D}\left(#1\right)}
\newcommand{\unitary}[1]{\setft{U}\left(#1\right)}
\newcommand{\spn}{\op{span}}
\def\dif{\mathrm{d}}
\def\complex{\mathbb{C}}
\def\real{\mathbb{R}}
\def\natural{\mathbb{N}}
\def\I{\mathbb{1}}
\newenvironment{mylist}[1]{\begin{list}{}{
    \setlength{\leftmargin}{#1}
    \setlength{\rightmargin}{0mm}
    \setlength{\labelsep}{2mm}
    \setlength{\labelwidth}{8mm}
    \setlength{\itemsep}{0mm}}}
    {\end{list}}
\def\ot{\otimes}
\newcommand{\out}[2]{| #1\rangle\langle #2 |}
\newcommand{\Innerm}[3]{\left\langle #1 \left| #2 \right| #3 \right\rangle}
\newcommand{\Pa}[1]{\left(#1\right)}
\newcommand{\Br}[1]{\left[#1\right]}
\newcommand{\set}[1]{\{#1\}}
\newcommand{\Set}[1]{\left\{#1\right\}}
\newcommand{\ket}[1]{|#1\rangle}
\DeclareMathOperator{\trace}{Tr}
\newcommand{\Ptr}[2]{\trace_{#1}\Pa{#2}}
\newcommand{\Tr}[1]{\Ptr{}{#1}}
\def\cA{\mathcal{A}}\def\cB{\mathcal{B}}\def\cE{\mathcal{E}}
\def\cH{\mathcal{H}}
\def\cU{\mathcal{U}}
\def\bP{\mathbf{P}}\def\bQ{\mathbf{Q}}
\def\rH{\mathrm{H}}
\def\rS{\mathrm{S}}
\def\rU{\mathrm{U}}
\def\sE{\mathscr{E}}
\def\sL{\mathscr{L}}
\newtheorem{thrm}{Theorem}[section]
\newtheorem{lem}[thrm]{Lemma}
\newtheorem{prop}[thrm]{Proposition}
\newtheorem{cor}[thrm]{Corollary}
\theoremstyle{definition}
\newtheorem{remark}[thrm]{Remark}
\numberwithin{equation}{section}
\newcounter{questionnumber}
\newcommand{\aop}{Ann. Phys.~}
\newcommand{\jmp}{J. Math. Phys.~}
\newcommand{\jpa}{J. Phys. A~}
\newcommand{\natphy}{Nature Phys.~}
\newcommand{\natcom}{Nature Commun.~}
\newcommand{\njp}{New. J. Phys.~}
\newcommand{\prl}{Phys. Rev. Lett.~}
\newcommand{\pra}{Phys. Rev. A~}
\newcommand{\pre}{Phys. Rev. E~}
\newcommand{\rmp}{Rev. Math. Phys.~}
\begin{document}

\title{Average of uncertainty-product for bounded observables}

\author{Lin Zhang$^1$\footnote{E-mail: linyz@zju.edu.cn; godyalin@163.com},\quad Jiamei Wang$^2$
\footnote{E-mail: wangjm@ahut.edu.cn}\\
  {\it\small $^1$Institute of Mathematics, Hangzhou Dianzi University, Hangzhou 310018, PR~China}\\
  {\it \small $^2$Department of Mathematics, Anhui University of
Technology, Ma Anshan 243032, PR China}}
\date{}
\maketitle
\maketitle \mbox{}\hrule\mbox\\
\begin{abstract}

The goal of this paper is to calculate exactly the average of
uncertainty-product of two bounded observables and to establish its
typicality over the whole set of finite dimensional quantum pure
states. Here we use the uniform ensembles of pure and isospectral
states as well as the states distributed uniformly according to the
measure induced by the Hilbert-Schmidt norm. Firstly, we investigate
the average uncertainty of an observable over isospectral density
matrices. By letting the isospectral density matrices be of
rank-one, we get the average uncertainty of an observable restricted
to pure quantum states. These results can help us check how large
the gap is between the uncertainty-product and any obtained lower
bounds about the uncertainty-product. Although our method in the
present paper cannot give a tighter lower bound of
uncertainty-product for bounded observables, it can help us drop any
one that is not tighter than the known one substantially.
\\~\\
\textbf{Keywords:} uncertainty relation; random quantum state;
observable
\end{abstract}
\maketitle \mbox{}\hrule\mbox

\section{Introduction}

Uncertainty principle (aka Heisenberg's uncertainty relation) is one
of basic constraints in quantum mechanics. It means that we cannot
principally obtain precise measurement outcomes simultaneously when
we measure two incomparable observables at the same time. The
mathematical formulation of uncertainty relation is in terms of any
of a variety of inequalities, where a fundamental limit to the
precision with which certain pairs of physical properties of a
particle, i.e. complementary variables, such as position $\hat x$
and momentum $\hat p$, can be known simultaneously. The uncertainty
relation \cite{heisenberg1927zp}, introduced by Heisenberg in 1927,
relates the standard deviation of momentum $\Delta \hat p$ and the
standard deviation of position $\Delta \hat x$, it indicates that
the more precisely the momentum of some particle is determined, the
less precisely its position can be known, and vice versa.
Specifically, the quantitative relation of such two standard
deviations was derived by Kennard \cite{Kennard1927} later that
year:
\begin{eqnarray}
\Delta\hat x\cdot\Delta\hat p\geqslant \frac{\hbar}2,
\end{eqnarray}
where $\Delta\hat p = \sqrt{\langle \hat p^2\rangle - \langle \hat
p\rangle^2}$ and $\Delta\hat x = \sqrt{\langle \hat x^2\rangle -
\langle \hat x\rangle^2}$.

The most common general form of the uncertainty principle is the
Robertson-Schr\"{o}dinger uncertainty relations
\cite{robertson1929pr,schrodinger1930}. In order to state it
explicitly, we need some notions. The precision to which the value
of an observable $A$ can be known is quantified by its uncertainty
function
\begin{eqnarray}
\Delta A(\rho):=\sqrt{\langle A^2\rangle_\rho - \langle
A\rangle^2_\rho}
\end{eqnarray}
where $\langle O\rangle_\rho:=\Tr{O\rho}$ for any observable $O$.
Furthermore, the precision to which the values of two observables
$A$ and $B$ can be known simultaneously is limited by the
Robertson-Schr\"{o}dinger uncertainty relation
\begin{eqnarray}\label{eq:R-S}
\Pa{\Delta A(\rho) \cdot \Delta B(\rho)}^2\geqslant
\Pa{\langle\set{A,B}\rangle_\rho-\langle A\rangle_\rho\langle
B\rangle_\rho}^2 + \langle [A,B]\rangle^2_\rho,
\end{eqnarray}
where $\set{A,B}:=\frac12(AB+BA)$ and
$[A,B]:=\frac1{2\sqrt{-1}}(AB-BA)$. We see from
Robertson-Schr\"{o}dinger uncertainty relation that this uncertainty
relation depends on the state under consideration. There are a lot
of literatures devoting to improve the right hand side (rhs) of the
above inequality \cite{Berta2010nap,Salimi2016}. Moreover, recently
many researchers proposed new perspective, instead of description of
uncertainty-product, they used the sum of uncertainty
\cite{Huang2012,maccone2014}, and its various generalizations
\cite{chen2016qip,qin2016scr}, etc. Besides, many researchers
generalize the uncertainty relation from pure state to isospectral
mixed states by employing symplectic geometric tools
\cite{Andersson2014jmp}. Many contributions are given to another
reformulation of uncertainty relation, for instance entropic
uncertainty relation \cite{tomamichel2011prl,wehner2010njp} and its
applications \cite{coles2015arxiv}. A connection is also established
between entropic uncertainty and wave-particle duality
\cite{coles2014natcom}. There are literatures devoted to study the
connection among uncertainty, and entanglement
\cite{berta2014pra,guhne2004pra,Huang2010,Huang2012}, and the
reversibility of measurement \cite{berta2015arxiv}.

The purpose of this paper is to give a new perspective to
state-independent uncertainty relation in terms of representation
theory of unitary group and random matrix theory. Caution: because
observables may be unbounded, for instance, the position operator
$\hat x$, in physical regime, an unbounded observable may take
infinity at some state. Throughout this paper, we will focus on
bounded observables. Consider the following particular statistical
ensembles: The used distribution of random state is uniform
distribution induced by Hilbert-Schmidt measure defined over the set
of all density matrices. By using tools from representation theory
of unitary group and random matrix theory, we can give an exact
calculation of such average value (in the pure state case or mixed
state case, respectively) and consider its typicality under some
restriction. Theoretically, as the typicality suggests that without
measuring such bounded observables, we may claim that at most
sampled states, one can get their uncertainty-product is close to
their average value with overwhelming probability. Equivalently,
their uncertainty-product deviates their average value with
exponentially small probability. Our method proposed here in fact
can help check how large the gap is between the uncertainty-product
and any obtained lower bounds about the uncertainty-product.
Specifically, except calculate the average of uncertainty-product,
we also calculate the averages of the obtained lower bounds of
uncertainty-product. Clearly the obtained lower bounds are
state-dependent.

This paper is organized as follows. In
Sect.~\ref{sect:metric-on-state-space}, we will introduce various
measures on state space. Specifically, there is a unique probability
which is unitarly invariant on the pure state space. But, however,
there is no unique unitarily-invariant probability measure over the
mixed state space because of the existence of environment.
Sect.~\ref{sect:motivation} discusses the motivation why we take the
average over corresponding state ensembles. Sect.~\ref{sect:iso-ave}
deals with the isospectral average of uncertainty-product of two
bounded observables over the set of isospectral quantum states.
Furthermore, separately, we consider the average of
uncertainty-product for a random pure state, and also for a random
mixed state. In Sect.~\ref{sect:concentration}, we make a discussion
about the concentration of measure phenomenon about the quantity,
i.e., the uncertainty-product of two bounded observables over the
set of mixed states. Finally, some necessary materials for reasoning
of our results are provided in the Appendix, see
Sect.~\ref{sect:appendix}, for example, two specific examples in
lower dimensions are provided in Sect.~\ref{sect:examples}.

\section{Measures on the state
spaces}\label{sect:metric-on-state-space}

Given a measure $\mu$ on the set of quantum states, one can
calculate the corresponding averages over all states with respect to
this measure \cite{Zyczkowski2001jpa}. We will consider the set of
pure quantum states. For a $d$-dimensional Hilbert space $\cH_d$,
the set of pure states consists of all unit vectors in $\cH_d$. On
this set, there exists a unique measure which is unitarily
invariant, i.e., uniform probability measure $\dif \mu(\psi)$ or
induced by normalized Haar measure $\dif\mu_{\mathrm{Haar}}(U)$ over
the unitary group $\rU(d)$. Indeed, any random pure state
$\ket{\psi}$ is generated by a random unitary matrix $U\in\rU(d)$ on
any fixed pure state $\ket{\psi_0}$ via $\ket{\psi}=U\ket{\psi_0}$.
The uniform ensemble of pure quantum states of finite-dimensional
Hilbert space studied extensively in the context of foundations of
quantum statistical mechanics, entanglement theory or various
protocols/features of quantum information theory. Related
literatures are too numerous to mention. Here we mention our two
works using such particular ensemble to investigate the typicality
of quantum coherence and average entropy of isospectral quantum
states, see \cite{uttam2016pra,zhang2015arxiv}. Then we can define
the average value of some function $f$ on the set of pure states as
follows:
\begin{eqnarray}
\langle f(\psi)\rangle := \int_{\mathbb{S}^k}f(\psi)\dif \mu(\psi) =
\int_{\rU(d)} f(U\psi_0)\dif\mu_{\mathrm{Haar}}(U).
\end{eqnarray}

Unlike the case of pure states, it is known that there exist various
measures on the set of mixed states, $\density{\cH_d}$, the set of
all positive semidefinite matrices with unit trace. As a matter of
fact, one assumes naturally the distributions of eigenvalues and
eigenvectors of a quantum state $\rho$, via the spectral
decomposition $\rho=U\Lambda U^\dagger$, are independent. Thus any
probability measure $\mu$ on $\density{\cH_d}$ will be of product
form: $\dif\mu(\rho) = \dif\nu(\Lambda)\times
\dif\mu_{\mathrm{Haar}}(U)$, where $\dif\mu_{\mathrm{Haar}}(U)$ is
the unique Haar measure \cite{gessner2013pre} on the unitary group
and $\nu$ defines the distribution of eigenvalues without unique
choice for it. The utility of $\nu$ in the average entropy or
average coherence can be found in
\cite{Zhang2016,zhang2016arxiv,Zyczkowski2001jpa}.

The measures used frequently over the $\density{\cH_d}$ can be
obtained by partially tracing over the Haar-distributed pure states
in the higher dimension Hilbert space $\cH_d\ot \cH_k$, say
$\complex^d\ot\complex^k$. In order to be convenience we suppose
that $d\leqslant k$. Following \cite{Zyczkowski2001jpa}, the joint
probability density function of spectrum
$\Lambda=\Set{\lambda_1,\ldots,\lambda_d}$ of $\rho$ is given by
\begin{eqnarray}
\dif\nu_{d,k} (\Lambda) =
C_{d,k}\delta\Pa{1-\sum^d_{j=1}\lambda_j}\prod_{1\leqslant
i<j\leqslant
 d}(\lambda_j-\lambda_i)^2\prod^d_{j=1}\lambda^{k-d}_j\theta(\lambda_j)\dif\lambda_j,
\end{eqnarray}
where the theta function $\theta$ ensures that $\rho$ is positive
definite, $C_{d,k}$ is the normalization constant, given by
\begin{eqnarray}
C_{d,k} =
\frac{\Gamma(dk)}{\prod^{d-1}_{j=0}\Gamma(d-j+1)\Gamma(k-j)}.
\end{eqnarray}
In particular, in the present paper we will consider a special case
where $d=k$, which corresponds to the Hilbert-Schmidt measure, a
flat metric over the $\density{\cH_d}$, denoted by
$\dif\mu_{\rH\rS}(\rho)$. We also denote $\dif \nu_{d,k}=\dif \nu$
and $C_{d,k}=C_{\rH\rS}$ if $d=k$. Thus we have
\begin{eqnarray}
\dif\mu_{\rH\rS}(\rho) = \dif\nu(\Lambda)\times
\dif\mu_{\mathrm{Haar}}(U),
\end{eqnarray}
where $\rho=U\Lambda U^\dagger$.

For convenience, we let $A, B$ be observables,  $\rho=U\Lambda
U^\dagger$, and introduce the following symbol for convenience:
\begin{eqnarray}\label{eq:expect-UU}
t_k=\Tr{\Lambda^k}=\Tr{\rho^k},\quad\sE_k(\Lambda):= \int (U\Lambda
U^\dagger)^{\ot k} \dif\mu_{\mathrm{Haar}}(U).
\end{eqnarray}

\section{Motivation}\label{sect:motivation}

In order to explain why we take the average of uncertainty-product
for bounded observables, some words are needed. Denote
$L_0(A,B,\rho):=\Pa{\langle\set{A,B}\rangle_\rho-\langle
A\rangle_\rho\langle B\rangle_\rho}^2 + \langle
[A,B]\rangle^2_\rho$. Clearly Eq.~\eqref{eq:R-S} becomes $\Pa{\Delta
A(\rho) \cdot \Delta B(\rho)}^2\geqslant L_0(A,B,\rho)$. If one is
obtained another lower bound, say $L(A,B,\rho)$, via some
mathematical methods, then $\Pa{\Delta A(\rho) \cdot \Delta
B(\rho)}^2\geqslant L(A,B,\rho)$. Now we need to compare lower
bounds $L_0$ and $L$. If $L(A,B,\rho)\geqslant L_0(A,B,\rho)$, then
we can say the lower bounds of the uncertainty principle are
improved, that is, we get a tighter lower bound. However, such
improvement sometimes is not essential, it is possible that
\begin{eqnarray}
\int L(A,B,\rho)\dif\mu_{\rH\rS}(\rho)= \int
L_0(A,B,\rho)\dif\mu_{\rH\rS}(\rho).
\end{eqnarray}
This shows that $L(A,B,\rho)=L_0(A,B,\rho)$ is satisfied almost
every except a zero measure in state space by the Measure Theory.
This is not real improvement. In fact, there are two observables
such that the lower bound of the uncertainty principle
Eq.~\eqref{eq:R-S} cannot be improved, see below Eq.~\eqref{eq:AAA}.
This example tell us that getting a universal uncertainty principle
for any observables in which the lower bound is really improved,
compared with $L_0$, seems impossible. At least, the statement is
applicable for Eq.~\eqref{eq:AAA}. However, if $\Pa{\Delta A(\rho)
\cdot \Delta B(\rho)}^2\geqslant L(A,B,\rho)\geqslant L_0(A,B,\rho)$
and
\begin{eqnarray}
\int L(A,B,\rho)\dif\mu_{\rH\rS}(\rho)> \int
L_0(A,B,\rho)\dif\mu_{\rH\rS}(\rho),
\end{eqnarray}
then we say that the uncertainty principle $\Pa{\Delta A(\rho) \cdot
\Delta B(\rho)}^2\geqslant L(A,B,\rho)$ really improves the
uncertainty principle $\Pa{\Delta A(\rho) \cdot \Delta
B(\rho)}^2\geqslant L_0(A,B,\rho)$. Therefore $L(A,B,\rho)$ is
tighter than $L_0(A,B,\rho)$ substantially. This is what we want.
But there is another situation that appears. We maybe get a new one
$\hat L(A,B,\rho)$ without knowing the relationship between $\hat L$
and $L_0$. But we can still determine wether or not
\begin{eqnarray}
\int \hat L(A,B,\rho)\dif\mu_{\rH\rS}(\rho)> \int
L_0(A,B,\rho)\dif\mu_{\rH\rS}(\rho).
\end{eqnarray}
If it were the case, then $\hat L(A,B,\rho)> L_0(A,B,\rho)$ would
hold in a subset of the state space. Improvement of uncertainty
principle is possible limited to local range.

\section{Isospectral average of uncertainty-product}\label{sect:iso-ave}

In this section, we focus on the ensemble of isospectral density
matrices. This ensemble has been recently studied in various
contexts of quantum information. In fact, we also do some work in
this field \cite{zhang2015arxiv}.

Consider the set of all isospectral density matrices
$\cU_\Lambda:=\set{\rho:\rho=U\Lambda U^\dagger, U\in\unitary{d}}$
with a fixed spectrum $\Lambda=\set{\lambda_1,\ldots,\lambda_d}$,
where $\lambda_j\geqslant0$ for each $j$ and
$\sum^d_{j=1}\lambda_j=1$. Now we can explicitly compute the average
(squared) uncertainty of observable $A$ over the set of isospectral
density matrices $\cU_\Lambda$ as follows:
\begin{eqnarray}
&&\int\Delta A(\rho)^2
\dif\mu_{\mathrm{Haar}}(U)=\Tr{A^2\sE_1(\Lambda)} - \Tr{A^{\ot
2}\sE_2(\Lambda)},
\end{eqnarray}
where $\cE_k(\Lambda)$ is from \eqref{eq:expect-UU}. The details of
computation about $\cE_k(\Lambda)$, where $k=1,2,3,4$, are gathered
in the Appendix, i.e., Section~\ref{sect:appendix}.

From the relations~\eqref{e1} and ~\eqref{e2}, we see that
\begin{eqnarray}
\int\Delta A(\rho)^2 \dif\mu_{\mathrm{Haar}}(U) =
\frac{d-\Tr{\Lambda^2}}{d^2-1}\Br{\Tr{A^2} - \frac1d(\Tr{A})^2}.
\end{eqnarray}
By ~\eqref{e7}, we have
\begin{eqnarray}
\int_{\density{\cH_d}}\Delta A(\rho)^2 \dif\mu_{\rH\rS}(\rho) =
\frac d{d^2+1}\Br{\Tr{A^2} - \frac1d(\Tr{A})^2}.
\end{eqnarray}
On the other hand, for any state $\rho\in\density{\cH_d}$,
\begin{eqnarray}\label{eq:prod-uncertainty}
\Delta A(\rho)^2\cdot\Delta B(\rho)^2 &=& \Tr{\Br{A^2\ot
B^2}\rho^{\ot 2}} +
\Tr{\Br{A^{\ot 2}\ot B^{\ot 2}}\rho^{\ot 4}} \notag\\
&&- \Tr{\Br{A^2\ot B^{\ot 2}}\rho^{\ot 3}} - \Tr{\Br{B^2\ot A^{\ot
2}}\rho^{\ot 3}}.
\end{eqnarray}
Thus
\begin{eqnarray}
&&\int\Delta A(\rho)^2\cdot\Delta B(\rho)^2\dif\mu_{\mathrm{Haar}}(U) \notag\\
&&= \Tr{\Br{A^2\ot
B^2}\sE_2(\Lambda)} +
\Tr{\Br{A^{\ot 2}\ot B^{\ot 2}}\sE_4(\Lambda)} \notag\\
\label{6}&&~~~~- \Tr{\Br{A^2\ot B^{\ot 2}}\sE_3(\Lambda)} - \Tr{\Br{B^2\ot
A^{\ot 2}}\sE_3(\Lambda)},
\end{eqnarray}
where $\rho\in \cU_\Lambda$.

With these identities, we calculate the the averaged
uncertainty-product over the isospectral density matrices. By the
tedious but simple calculations, we have the following result:
\begin{thrm}\label{th:iso}
For two observables $A$ and $B$ on $\cH_d$, the average of
uncertainty-product over the set of all isospectral density matrices $\rho$
on $\cH_d$ is given by a symmetric function in arguments $A$ and $B$
\begin{eqnarray}\label{eq:8-symmetric}
\int\Delta A(\rho)^2\cdot\Delta B(\rho)^2\dif\mu_{\mathrm{Haar}}(U) =
\sum^8_{j=1}\omega_j(\Lambda)\cdot\Omega_j(A,B),
\end{eqnarray}
where $\Omega_j(A,B)$ are symmetric in arguments $A$ and $B$ for
each $j$: $\Omega_j(A,B)=\Omega_j(B,A)$ and
\begin{eqnarray}
\Omega_1(A,B)&=& \Tr{A}^2\Tr{B}^2,\\
\Omega_2(A,B)&=& \Tr{A^2}\Tr{B}^2 +\Tr{A}^2\Tr{B^2},\\
\Omega_3(A,B)&=& \Tr{AB}\Tr{A}\Tr{B}, \\
\Omega_4(A,B)&=& \Tr{A^2}\Tr{B^2}, \\
\Omega_5(A,B)&=& \Tr{AB}\Tr{AB}, \\
\Omega_6(A,B)&=& \Tr{A^2B}\Tr{B}+ \Tr{A}\Tr{AB^2},\\
\Omega_7(A,B)&=& \Tr{A^2B^2},\\
\Omega_8(A,B)&=&\Tr{ABAB},
\end{eqnarray}
and $\omega_j(\Lambda)$ are given by the following:
\begin{eqnarray}
\omega_1(\Lambda) &=& \frac1{24}\Delta^{(4)}_4 + \frac38\Delta^{(3,1)}_4 + \frac16\Delta^{(2,2)}_4 + \frac38\Delta^{(2,1,1)}_4 + \frac1{24}\Delta^{(1,1,1,1)}_4,\\
\omega_2(\Lambda) &=& \Pa{\frac1{24}\Delta^{(4)}_4 + \frac18\Delta^{(3,1)}_4 - \frac18\Delta^{(2,1,1)}_4 - \frac1{24}\Delta^{(1,1,1,1)}_4} - \Pa{\frac16\Delta^{(3)}_3+\frac23\Delta^{(2,1)}_3+\frac16\Delta^{(1,1,1)}_3},\\
\omega_3(\Lambda) &=& \frac16\Delta^{(4)}_4 + \frac12\Delta^{(3,1)}_4 - \frac12\Delta^{(2,1,1)}_4 - \frac16\Delta^{(1,1,1,1)}_4,\\
\omega_4(\Lambda) &=&\Pa{\frac1{24}\Delta^{(4)}_4 - \frac18\Delta^{(3,1)}_4 + \frac16\Delta^{(2,2)}_4 - \frac18\Delta^{(2,1,1)}_4 + \frac1{24}\Delta^{(1,1,1,1)}_4}  \notag\\
&&+\Pa{\frac{\Delta^{(2)}_2}2
+ \frac{\Delta^{(1,1)}_2}2} - \Pa{\frac13\Delta^{(3)}_3-\frac13\Delta^{(1,1,1)}_3},\\
\omega_5(\Lambda) &=& \frac1{12}\Delta^{(4)}_4 - \frac14\Delta^{(3,1)}_4 + \frac13\Delta^{(2,2)}_4 - \frac14\Delta^{(2,1,1)}_4 + \frac1{12}\Delta^{(1,1,1,1)}_4,\\
\omega_6(\Lambda) &=& \Pa{\frac16\Delta^{(4)}_4 - \frac13\Delta^{(2,2)}_4 + \frac16\Delta^{(1,1,1,1)}_4} - \Pa{\frac13\Delta^{(3)}_3-\frac13\Delta^{(1,1,1)}_3},
\end{eqnarray}
\begin{eqnarray}
\omega_7(\Lambda) &=& \Pa{\frac16\Delta^{(4)}_4 - \frac12\Delta^{(3,1)}_4 + \frac12\Delta^{(2,1,1)}_4 - \frac16\Delta^{(1,1,1,1)}_4} \notag\\
&&+ \Pa{\frac{\Delta^{(2)}_2}2 - \frac{\Delta^{(1,1)}_2}2} - 2\Pa{\frac13\Delta^{(3)}_3 - \frac23\Delta^{(2,1)}_3+\frac13\Delta^{(1,1,1)}_3},\\
\omega_8(\Lambda) &=& \frac1{12}\Delta^{(4)}_4 -
\frac14\Delta^{(3,1)}_4 + \frac14\Delta^{(2,1,1)}_4 -
\frac1{12}\Delta^{(1,1,1,1)}_4.
\end{eqnarray}
Here the meanings of the notations
$\Delta^{(4)}_4,\Delta^{(3,1)}_4,\Delta^{(2,1,1)}_4,\Delta^{(1,1,1,1)}_4$
can be found from \eqref{eq:delta4} to \eqref{eq:delta1111}.
\end{thrm}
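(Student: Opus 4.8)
\emph{Proof strategy.}\quad The reduction is essentially in place already: identity~\eqref{eq:prod-uncertainty} rewrites $\Delta A(\rho)^2\Delta B(\rho)^2$ as a combination of traces against $\rho^{\ot2},\rho^{\ot3},\rho^{\ot4}$, and~\eqref{6} turns its Haar average into a combination of four traces against the operators $\sE_2(\Lambda),\sE_3(\Lambda),\sE_4(\Lambda)$ of~\eqref{eq:expect-UU}. The plan is to (i) expand each $\sE_k(\Lambda)$ in the basis of permutation operators on $\cH_d^{\ot k}$, (ii) evaluate the elementary traces that result, and (iii) collect the coefficients of the distinct trace-monomials.

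For step (i), since $\sE_k(\Lambda)=\int(U\Lambda U^\dagger)^{\ot k}\dif\mu_{\mathrm{Haar}}(U)$ commutes with $V^{\ot k}$ for every $V\in\unitary{d}$, Schur--Weyl duality places it in the span of the permutation operators $\set{W_\pi:\pi\in S_k}$, and it is block-scalar on the isotypic decomposition: $\sE_k(\Lambda)=\sum_{\lambda\vdash k}\Delta^{(\lambda)}_k(\Lambda)\,P_\lambda$, where $P_\lambda$ is the projector onto the $\lambda$-isotypic component and $\Delta^{(\lambda)}_k(\Lambda)$ is the symmetric function of the spectrum computed in the Appendix --- which is exactly what~\eqref{e1},~\eqref{e2},~\eqref{e7} and~\eqref{eq:delta4}--\eqref{eq:delta1111} record. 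Equivalently (the Weingarten calculus), using $P_\lambda=\tfrac{d_\lambda}{k!}\sum_{\pi\in S_k}\chi_\lambda(\pi)W_\pi$ with $d_\lambda=\chi_\lambda(e)$, one has $\sE_k(\Lambda)=\tfrac1{k!}\sum_{\pi\in S_k}\bigl(\sum_{\lambda\vdash k}d_\lambda\chi_\lambda(\pi)\Delta^{(\lambda)}_k(\Lambda)\bigr)W_\pi$; for instance the coefficient of the identity $W_e$ in $\sE_4(\Lambda)$ is $\tfrac1{24}\sum_{\lambda\vdash4}d_\lambda^2\,\Delta^{(\lambda)}_4=\tfrac1{24}\Delta^{(4)}_4+\tfrac38\Delta^{(3,1)}_4+\tfrac16\Delta^{(2,2)}_4+\tfrac38\Delta^{(2,1,1)}_4+\tfrac1{24}\Delta^{(1,1,1,1)}_4$, and this is already the full $\omega_1(\Lambda)$, since $\Omega_1=\Tr{A}^2\Tr{B}^2$ arises from no other term.

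For step (ii), inserting these expansions into~\eqref{6} makes every summand a trace of the form $\Tr{\Br{M_1\ot\cdots\ot M_k}W_\pi}$, where $(M_1,\ldots,M_4)=(A,A,B,B)$ in the $k=4$ term, $(M_1,M_2,M_3)=(A^2,B,B)$ and $(B^2,A,A)$ in the two $k=3$ terms, and $(M_1,M_2)=(A^2,B^2)$ in the $k=2$ term. The standard cycle--trace identity,
\[
  \Tr{\Br{M_1\ot\cdots\ot M_k}\,W_\pi}=\prod_{\text{cycles }(i_1\cdots i_\ell)\text{ of }\pi}\Tr{M_{i_1}\cdots M_{i_\ell}},
\]
then evaluates each one; running over the $2$ permutations of $S_2$, the $6$ of $S_3$ (twice), and the $24$ of $S_4$, each produces exactly one of the eight monomials $\Omega_j(A,B)$. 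For example, among the six $4$-cycles the two that interleave the $A$-slots with the $B$-slots give $\Tr{ABAB}=\Omega_8$ and the other four give $\Tr{A^2B^2}$; the eight $3$-cycles split four-and-four into $\Tr{A^2B}\Tr{B}$ and $\Tr{A}\Tr{AB^2}$, summing to $\Omega_6$; and the three double transpositions split as $\Omega_4$ (once) and $\Omega_5$ (twice).

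For step (iii), summing and grouping by $\Omega_j$, the accumulated coefficient of each $\Omega_j$ is a fixed linear combination of the character sums above, which the $S_k$ character tables collapse to precisely the stated $\omega_j(\Lambda)$: the $\Delta^{(\cdot)}_2$-terms come from the $k=2$ trace, the $\Delta^{(\cdot)}_3$-terms (carrying the overall minus sign of~\eqref{6}) from the pair of $k=3$ traces, and the $\Delta^{(\cdot)}_4$-terms from the $k=4$ trace. The symmetry $\Omega_j(A,B)=\Omega_j(B,A)$ asserted in~\eqref{eq:8-symmetric} needs no extra argument: the $k=2$ and $k=4$ traces are invariant under the relabeling $A\leftrightarrow B$ (conjugation by the permutation swapping the $A$-block with the $B$-block), and the two $k=3$ traces are interchanged by it. The real obstacle is not any single step but the bookkeeping that links them: because the tensor slots carry the repeated labels $A,A,B,B$, permutations of a single cycle type land in several distinct $\Omega_j$, so one cannot reduce to conjugacy classes while evaluating the traces and must track all $24$ elements of $S_4$ (and all $6$ of $S_3$) one by one before re-expressing the totals through the $\Delta$-symbols; this is elementary but lengthy, and the two low-dimensional computations in Section~\ref{sect:examples} provide an independent check on the resulting coefficients.
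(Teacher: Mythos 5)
Your proposal is correct and follows essentially the same route as the paper: expand each $\sE_k(\Lambda)$ via Schur--Weyl duality as $\sum_\lambda \Delta^{(\lambda)}_k C_\lambda$ with $C_\lambda$ written in the permutation basis, evaluate the resulting traces $\Tr{\Br{M_1\ot\cdots\ot M_k}P(\pi)}$ by the cycle--trace identity, and collect coefficients of the eight monomials $\Omega_j$ (your spot checks of the identity coefficient, the $4$-cycles, the $3$-cycles, and the double transpositions all agree with the paper's tallies). The only cosmetic difference is that you first gather the coefficient of each permutation operator across all $\lambda$ before tracing, whereas the paper traces against each $C_\lambda$ separately; the arithmetic is identical.
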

The hard part of the proof centers around the calculations of
$\cE_k(\Lambda)$ by using Schur-Weyl duality. Among other things,
the key ingredient here is the Weingarten function, defined over the
permutation group $S_k$, see the definition \eqref{eq:weingarten}
for the unitary group. There are many ways that can be used to
define the Weingarten function, for instance, a sum over partitions
or equivalently, Young tableaux of $k\in\natural$ and the characters
of the symmetric group. In the case where permutation groups of
lower orders are considered (such as $k=2,3,4$ in our paper), the
Weingarten functions can be explicitly evaluated. When $k$ becomes
larger, the explicit evaluation of such function is considerably
complicated, and naturally the asymptotics is concerned. The proof
of Theorem~\ref{th:iso} is placed in Section~\ref{subsect:iso}.

\begin{remark}
Let $N^{-1}_d= d^2(d^2-1)(d^2-4)(d^2-9)$. We can write down more
specific expressions for $\omega_j(\Lambda)$, where
$t_k=\Tr{\Lambda^k}$ for natural number $k$.
\begin{eqnarray}
\omega_1(\Lambda) &=& N_d\Pa{(d^4-8d^2+6) -6d(d^2-4)t_2 +
3(d^2+6)t^2_2 +
8(2d^2-3)t_3 -30dt_4},\\
\omega_2(\Lambda) &=& N_d\Pa{-d(d^4-10d^2+14) + 2d^2(2d^2-13)t_2
-d(d^2+6)t^2_2-8d(d^2-4)t_3 + 10d^2t_4},\\
\omega_3(\Lambda) &=& N_d\Pa{-4d(d^2-4) + 4d^2(d^2+1)t_2 -4d(d^2+6)t^2_2
-16d(d^2+1)t_3 + 40d^2t_4},\\
\omega_4(\Lambda) &=& N_d\left((d^6-11d^4+19d^2+6)-d(3d^4-25d^2+12)t_2\right.
\\&&\left.+
(d^4-6d^2+18)t^2_2 + 4(d^4-5d^2-6)t_3 -2d(2d^2-3)t_4\right),\\
\omega_5(\Lambda) &=& N_d\Pa{2(d^2+6) -4d(d^2+6)t_2 +
2(d^4-6d^2+18)t^2_2
+ 16(2d^2-3)t_3-4d(2d^2-3)t_4},\\
\omega_6(\Lambda) &=& N_d\left(2(d^4-5d^2-6)-2d(d^4-d^2-12)t_2+
12(2d^2-3)t^2_2\right.\\&& \left.+ 8(d^4-3d^2+6)t_3 -12d(d^2+1)t_4\right),\\
\omega_7(\Lambda) &=& N_d\left(-d(d^3+4d^2-9d-16) +
d^2(d^4-d^2-32)t_2 \right.\\&&\left.-4d(2d^2-3)t^2_2 -4d(d^4-5d^2+4)t_3 + 4d^2(d^2+1)t_4\right),\\
\omega_8(\Lambda) &=& N_d\Pa{-10d + 20d^2t_2-2d(2d^2-3)t^2_2 -8d(d^2+1)t_3 + 2d^2(d^2+1)t_4}.
\end{eqnarray}
Because $A$ and $B$ are bounded observables, i.e., Hermitian
operators, we see that $\Tr{A},\Tr{B}$, and $\Tr{AB}$ are real
numbers and $\Tr{A^2}\geqslant0$, and $\Tr{B^2}\geqslant0$. Then
$\Tr{A}^2\Tr{B}^2\geqslant0,\Tr{A^2}\Tr{B}^2
+\Tr{A}^2\Tr{B^2}\geqslant0,\Tr{A^2}\Tr{B^2}\geqslant0$, i.e.,
$\Omega_j(A,B)\geqslant0$ for $j=1,2,4,5$ by the definition. In
addition, $\Tr{A^2B^2}=\Tr{BA^2B}\geqslant 0$ since
$BA^2B\geqslant0$. Thus $\Omega_7(A,B)\geqslant0$. Consider the
operator $X=AB+BA$. Clearly $X$ is a Hermitian operator. Moreover
$X^2\geqslant0$, thus $\Tr{X^2}\geqslant0$. Because $\Tr{X^2}
=2(\Tr{ABAB} - \Tr{A^2B^2})$, we have that
$\Tr{ABAB}\geqslant\Tr{A^2B^2}\geqslant0$. Hence
$\Omega_8(A,B)\geqslant0$. In summary, $\Omega_j(A,B)\geqslant0$ for
$j=1,2,4,5,7,8$. However, $\Omega_3(A,B)$ and $\Omega_6(A,B)$ are
not always non-negative.
\end{remark}

\begin{remark}\label{rem:free}
The rhs of \eqref{eq:8-symmetric} remind us of one of applications
to random matrix theory from free probability theory, established by
Voiculescu \cite{voiculescu2000}. Specifically, we can consider two
independent random observables $A$ and $B$ from Gaussian unitary
ensemble (GUE), according to free probability theory, $A$ and $B$
are asymptotic free (see the meaning of freeness in
\cite{nica2006}). Indeed, denote $\varphi(\cdot)=\frac1d\Tr{\cdot}$,
where $\Tr{\cdot}$ means the trace of matrix, when $d$ becomes large
enough, we have
\begin{eqnarray}
\varphi(ABAB)\simeq \varphi(A^2)\varphi(B)^2 +
\varphi(A)^2\varphi(B^2) - \varphi(A)^2\varphi(B)^2,
\end{eqnarray}
that is,
\begin{eqnarray}
\Omega_8(A,B)\simeq d^{-2}\Omega_2(A,B) - d^{-3}\Omega_1(A,B).
\end{eqnarray}
Similarly, we have
\begin{eqnarray}
&&\Omega_3(A,B)\simeq d^{-1}\Omega_1(A,B),\\
&&\Omega_5(A,B)\simeq d^{-2}\Omega_1(A,B),\\
&&\Omega_6(A,B)\simeq d^{-1}\Omega_2(A,B),\\
&&\Omega_7(A,B)\simeq d^{-1}\Omega_4(A,B).
\end{eqnarray}
Furthermore, we obtain that
\begin{eqnarray}
\int\Delta A(U\Lambda U^\dagger)^2\cdot\Delta B(U\Lambda
U^\dagger)^2\dif\mu_{\mathrm{Haar}}(U) &\simeq &
\Pa{\omega_1(\Lambda)+d^{-1}\omega_3(\Lambda)+d^{-2}\omega_5(\Lambda)-d^{-3}\omega_8(\Lambda)}\Omega_1(A,B)\notag\\
&&+\Pa{\omega_2(\Lambda)+d^{-1}\omega_6(\Lambda)+d^{-2}\omega_8(\Lambda)}\Omega_2(A,B)\notag\\
&&+\Pa{\omega_4(\Lambda)+d^{-1}\omega_7(\Lambda)}\Omega_4(A,B).
\end{eqnarray}
The calculation in Theorem~\ref{th:iso}, and the subsequent remark
suggest us that there are three terms, i.e., $\Omega_1(A,B),
\Omega_2(A,B)$, and $\Omega_4(A,B)$, as the dimension grows large,
play a leading role in estimating the average of uncertainty-product
within isospectral density matrices. This also tells us that if we
want to get a better lower bound about uncertainty-product, then
when we take average of any improved lower bound, we should get
larger coefficients of such three terms.

Besides, for a fixed $\Lambda$, we may view the left hand side of
\eqref{eq:8-symmetric} as a function of two random observables $A$
and $B$, for instance, from GUE or Wishart ensemble. We can also
consider the concentration of measure phenomenon about such two
observables. We leave these questions in the future research.
\end{remark}

\subsection{Average of uncertainty-product on pure states}

For the pure state case, the average of uncertainty-product is
easier to calculate. What we have obtained is the following:

\begin{thrm}\label{th:pure-case}
For two observables $A$ and $B$ on $\cH_d$, the average of
uncertainty-product taken over the whole set of all pure states in
$\cH_d$ is given by
\begin{eqnarray}
\int\Delta A(\psi)^2\cdot\Delta B(\psi)^2\dif\mu(\psi) =
\sum^8_{j=1}u_j\Omega_j(A,B),
\end{eqnarray}
where $\Omega_j(A,B)$ is from Theorem~\ref{th:iso}, and for
$K_d=(d(d+1)(d+2)(d+3))^{-1}$,
\begin{eqnarray}
&&u_1 = K_d,\quad u_2=-(d+2)K_d,\quad u_3 =4K_d,\quad u_4=(d^2+3d+1)K_d,\\
&&u_5 = 2K_d, \quad u_6=-2(d+1)K_d,\quad u_7=(d^2+d-2)K_d, \quad
u_8=2K_d.
\end{eqnarray}
We also have that
\begin{eqnarray}
\int\dif\mu(\psi)\Br{\Pa{\langle\set{A,B}\rangle_\psi-\langle
A\rangle_\psi\langle B\rangle_\psi}^2 + \langle [A,B]\rangle^2_\psi}
= \sum^8_{j=1}l_j\Omega_j(A,B),
\end{eqnarray}
where
\begin{eqnarray}
&&l_1 = K_d,\quad l_2 = K_d,\quad l_3 = -2(d+1)K_d,\quad l_4 = K_d,\quad l_5 = (d+1)(d+2)K_d, \\
&&\quad l_6 = -2(d+1)K_d,\quad l_7 = (d^2+d-2)K_d, \quad l_8 =
-2(2d+5)K_d.
\end{eqnarray}
\end{thrm}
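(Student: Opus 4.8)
The plan is to reduce both averages to the moments of a Haar–random pure state and then to do linear algebra in the eight–dimensional span of the symmetric invariants $\Omega_1,\dots,\Omega_8$ from Theorem~\ref{th:iso}. The only tool needed is the rank–one specialization of $\sE_k(\Lambda)$ in \eqref{eq:expect-UU}: taking $\Lambda=\mathrm{diag}(1,0,\dots,0)$ we have $\ket{\psi}\bra{\psi}=U\Lambda U^\dagger$ with $\ket{\psi}=U\ket{\psi_0}$ Haar–random, $t_k=\Tr{\Lambda^k}=1$ for every $k$, and
\begin{eqnarray}
\int (\ket{\psi}\bra{\psi})^{\ot k}\dif\mu(\psi)=\sE_k(\Lambda)=\frac{1}{d(d+1)\cdots(d+k-1)}\sum_{\pi\in S_k}W_\pi ,
\end{eqnarray}
where $W_\pi$ permutes the $k$ tensor factors; equivalently the right–hand side is the projector onto $\mathrm{Sym}^k(\cH_d)$ divided by $\binom{d+k-1}{k}$. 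This is the trivial case of the Weingarten calculus used for Theorem~\ref{th:iso} (the class functions attached to the one–part partition are constant), and it also follows directly from Schur–Weyl duality because $(\ket{\psi_0}\bra{\psi_0})^{\ot k}$ is a rank–one operator supported on the symmetric subspace. One could alternatively obtain the $u_j$ by plugging $t_2=t_3=t_4=1$ into the explicit $\omega_j(\Lambda)$ in the Remark, but the direct route below is cleaner and also handles the second identity.

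For the first identity I would start from the already established expansion \eqref{eq:prod-uncertainty} with $\rho=\ket{\psi}\bra{\psi}$, integrate term by term, and in each term replace $\sum_\pi W_\pi$ by the moment formula, using $\Tr{(O_1\ot\cdots\ot O_k)W_\pi}=\prod_{\text{cycles }c\text{ of }\pi}\Tr{\prod_{i\in c}O_i}$ (the product inside each cycle taken in cyclic order, orientation irrelevant here). The $k=2$ term becomes a combination of $\Omega_4,\Omega_7$ over denominator $d(d+1)$, the two $k=3$ terms produce $\Omega_2,\Omega_4,\Omega_6,\Omega_7$ over $d(d+1)(d+2)$, and the $k=4$ term produces all eight $\Omega_j$ over $d(d+1)(d+2)(d+3)$; clearing the common denominator $K_d^{-1}=d(d+1)(d+2)(d+3)$ and collecting coefficients gives the claimed $u_j$.

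For the second identity the first step is to express $L_0(A,B,\psi)$ as a polynomial in expectations of words in $A$ and $B$. Using $\langle\set{A,B}\rangle_\psi=\tfrac12(\langle AB\rangle_\psi+\langle BA\rangle_\psi)$ and $\langle[A,B]\rangle_\psi=\tfrac1{2\sqrt{-1}}(\langle AB\rangle_\psi-\langle BA\rangle_\psi)$, together with $\langle BA\rangle_\psi=\overline{\langle AB\rangle_\psi}$ for Hermitian $A,B$, a short computation yields
\begin{eqnarray}
L_0(A,B,\psi)=\langle AB\rangle_\psi\langle BA\rangle_\psi-\langle A\rangle_\psi\langle B\rangle_\psi\bigl(\langle AB\rangle_\psi+\langle BA\rangle_\psi\bigr)+\langle A\rangle_\psi^2\langle B\rangle_\psi^2 ,
\end{eqnarray}
which is again a sum of terms $\Tr{(O_1\ot\cdots\ot O_k)(\ket{\psi}\bra{\psi})^{\ot k}}$ with $k=2,3,4$: the $k=2$ piece has $O_1=AB$, $O_2=BA$; the two $k=3$ pieces have operators $A,B,AB$ and $A,B,BA$; and the $k=4$ piece is $A\ot A\ot B\ot B$. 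Integrating each with the moment formula and collecting over $K_d^{-1}$ produces $\sum_j l_j\Omega_j(A,B)$.

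The main obstacle is purely the combinatorial bookkeeping in the $k=4$ step: one must run through $S_4$ by cycle type ($1$ identity, $6$ transpositions, $8$ three–cycles, $3$ double transpositions, $6$ four–cycles) and, for each class and each assignment of operators, read off which $\Omega_j$ the cycle trace produces — for instance the six four–cycles applied to $A\ot A\ot B\ot B$ split as $4\Omega_7+2\Omega_8$ — while keeping the $d$–dependence straight when clearing the three different denominators $d(d+1)$, $d(d+1)(d+2)$, $d(d+1)(d+2)(d+3)$. A good consistency check on the final coefficients is the identity $L_0(A,A,\psi)=(\Delta A(\psi)^2)^2$: setting $B=A$ must collapse the second formula onto the first, which (using $\Omega_2=2\Omega_3$, $\Omega_4=\Omega_5$, $\Omega_7=\Omega_8$ at $B=A$) forces $l_1=u_1$, $2l_2+l_3=2u_2+u_3$, $l_4+l_5=u_4+u_5$, $l_6=u_6$, and $l_7+l_8=u_7+u_8$.
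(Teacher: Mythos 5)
Your method is exactly the paper's: $\int\psi^{\ot k}\dif\mu(\psi)$ is the normalized projector onto $\mathrm{Sym}^k(\cH_d)$, i.e.\ $\frac{1}{d(d+1)\cdots(d+k-1)}\sum_{\pi\in S_k}P(\pi)$, and one integrates \eqref{eq:prod-uncertainty} and $L_0$ term by term using cycle traces. Your reduction of $L_0$ to $\langle AB\rangle\langle BA\rangle-\langle A\rangle\langle B\rangle(\langle AB\rangle+\langle BA\rangle)+\langle A\rangle^2\langle B\rangle^2$ is a harmless repackaging of the paper's expansion via $\set{A,B}^{\ot2}$, $[A,B]^{\ot2}$ and $\set{A,B}\ot A\ot B$. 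So the plan is sound and would deliver the first half verbatim.

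However, you should actually \emph{execute} the $B=A$ consistency check you propose, because it fails against the stated coefficients: $(u_7+u_8)K_d^{-1}=d^2+d$ while $(l_7+l_8)K_d^{-1}=(d^2+d-2)-2(2d+5)=d^2-3d-12$, even though $L_0(A,A,\psi)=(\Delta A(\psi)^2)^2$ forces equality. Running your own cross-term integral locates the discrepancy: the two three-cycles in $S_3$ contribute
\begin{eqnarray}
\Tr{AB\set{A,B}}+\Tr{A\set{A,B}B}=\Tr{A^2B^2}+\Tr{ABAB}=\Omega_7(A,B)+\Omega_8(A,B),
\end{eqnarray}
whereas the paper's proof replaces this by $2\Tr{ABAB}+2\Tr{A^2B^2}$, doubling these two terms. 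With the corrected cross term one gets $l_7=(d^2+3d+4)K_d$ and $l_8=-2(d+2)K_d$ (all other $l_j$ unchanged), which does satisfy $l_7+l_8=u_7+u_8$ and still reproduces the value $\tfrac25$ in the Pauli example, since there $\Omega_7+\Omega_8=0$ hides the error. So your proposal is the right proof, but as written it stops short of the computation that would have caught this; carried through, it proves a corrected statement rather than the one displayed.
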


In the above theorem, we investigate average behavior of both sides
of Heisenberg-Robertson-Kennard relations on uniform pure state
ensemble. For the case of the average of product of uncertainties
(or the corresponding lower bounds for this quantity) over pure
Haar-distributed quantum states, the corresponding integrals are
very easy to perform as the integral
$$
\int\psi^{\ot k}\dif\mu(\psi)
$$
involved in all the averages are proportional to the projectors on
the symmetric powers of the relevant Hilbert space. This will be
clear in the proof, see \eqref{eq:symmetric-proj}. The details of
the proof of Theorem~\ref{th:pure-case} can be found in
Subsection~\ref{subsect:pure-case}.

\begin{remark}
In higher dimensional space, there are two terms playing major role
in the average uncertainty-product relative to other terms, i.e.,
$\Omega_4(A,B)=\Tr{A^2}\Tr{B^2}$ and $\Omega_7(A,B)=\Tr{A^2B^2}$.
However, the terms which play major role in the average lower bound
of uncertainty-product is $\Omega_5(A,B)=\Tr{AB}\Tr{AB}$, and
$\Omega_7(A,B)=\Tr{A^2B^2}$. Furthermore, we can derive that
\begin{eqnarray}\label{eq:non-negative}
&&\int\dif\mu(\psi)\Set{\Delta A(\psi)^2\cdot\Delta B(\psi)^2 -
\Br{\Pa{\langle\set{A,B}\rangle_\psi-\langle A\rangle_\psi\langle
B\rangle_\psi}^2 + \langle [A,B]\rangle^2_\psi}}\notag\\
&&=
-(d+3)K_d\Omega_2(A,B)+2(d+3)K_d\Omega_3(A,B)+d(d+3)K_d\Omega_4(A,B)\notag\\
&&~~~-d(d+3)K_d\Omega_5(A,B)+4(d+3)K_d\Omega_8(A,B).
\end{eqnarray}
By the nonnegativity of the left hand side of
\eqref{eq:non-negative}, we get the following inequality:
\begin{eqnarray}\label{eq:positive-diference-pure}
2\Omega_3(A,B)+d\Omega_4(A,B)+4\Omega_8(A,B)\geqslant\Omega_2(A,B)+d\Omega_5(A,B).
\end{eqnarray}
That is,
\begin{eqnarray}\label{eq:trace-ineq}
&&2\Tr{AB}\Tr{A}\Tr{B}+d\Tr{A^2}\Tr{B^2}+4\Tr{ABAB}\notag\\
&&\geqslant\Tr{A^2}\Tr{B}^2
+\Tr{A}^2\Tr{B^2}+d\Tr{AB}\Tr{AB}.
\end{eqnarray}
It seems difficult to show the above matrix trace inequality
\eqref{eq:trace-ineq} directly. This inequality about two
observables is what we want to get, i.e., uncertainty relation which
is independent of state.
\end{remark}

\begin{remark}
Naturally, a pure state $\ket{\psi}$ is called the \emph{average
state} with respect to uncertainty product of observables $(A,B)$ if
it satisfies that
\begin{eqnarray}\label{eq:pure-state-of-ave}
\Delta A(\psi)^2\cdot\Delta B(\psi)^2 =
\sum^8_{j=1}u_j\Omega_j(A,B).
\end{eqnarray}
What properties do such state have? Answering this question can
reveal principally why we do not need to take any measurements, and
we can guess the uncertainty about observables by taking average.
\end{remark}

\begin{cor}
For two observables $A$ and $B$ on $\complex^2$, the average of
uncertainty-product taken over the whole set of all pure states is
given by
\begin{eqnarray}
\int\Delta A(\psi)^2\cdot\Delta B(\psi)^2\dif\mu(\psi) &=&
\frac1{120}\Omega_1(A,B) - \frac1{30}\Omega_2(A,B) +
\frac1{30}\Omega_3(A,B) + \frac{11}{120}\Omega_4(A,B)\notag\\
&&+\frac1{60}\Omega_5(A,B) -\frac1{20}\Omega_6(A,B) +
\frac1{30}\Omega_7(A,B) + \frac1{60}\Omega_8(A,B).
\end{eqnarray}
We also have that
\begin{eqnarray}
&&\int\dif\mu(\psi)\Br{\Pa{\langle\set{A,B}\rangle_\psi-\langle
A\rangle_\psi\langle B\rangle_\psi}^2 + \langle
[A,B]\rangle^2_\psi}\notag\\
&&=\frac1{120}\Omega_1(A,B) + \frac1{120}\Omega_2(A,B) -
\frac1{20}\Omega_3(A,B) + \frac1{120}\Omega_4(A,B)\notag\\
&&~~~~ + \frac1{10}\Omega_5(A,B)- \frac1{20}\Omega_6(A,B) +
\frac1{30}\Omega_7(A,B) - \frac3{20}\Omega_8(A,B).
\end{eqnarray}
\end{cor}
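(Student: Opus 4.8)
The plan is to obtain this corollary as a direct specialization of Theorem~\ref{th:pure-case} to the qubit case, setting $d=2$; no new ideas are required beyond careful arithmetic, since the $\Omega_j(A,B)$ are kept symbolic and only the numerical coefficients $u_j$ and $l_j$ need to be evaluated. The first step is to compute the normalization constant $K_d=(d(d+1)(d+2)(d+3))^{-1}$ at $d=2$, which gives $K_2=(2\cdot3\cdot4\cdot5)^{-1}=\frac1{120}$.

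Next I would substitute $d=2$ into the eight coefficients from Theorem~\ref{th:pure-case}. For the uncertainty-product this yields $u_1=K_2=\frac1{120}$, $u_2=-(d+2)K_2=-\frac4{120}=-\frac1{30}$, $u_3=4K_2=\frac1{30}$, $u_4=(d^2+3d+1)K_2=\frac{11}{120}$, $u_5=2K_2=\frac1{60}$, $u_6=-2(d+1)K_2=-\frac6{120}=-\frac1{20}$, $u_7=(d^2+d-2)K_2=\frac4{120}=\frac1{30}$, and $u_8=2K_2=\frac1{60}$, which reproduces the first displayed identity. For the averaged Robertson--Schr\"{o}dinger lower bound, substituting $d=2$ into the $l_j$ gives $l_1=l_2=l_4=K_2=\frac1{120}$, $l_3=l_6=-2(d+1)K_2=-\frac1{20}$, $l_5=(d+1)(d+2)K_2=\frac{12}{120}=\frac1{10}$, $l_7=(d^2+d-2)K_2=\frac1{30}$, and $l_8=-2(2d+5)K_2=-\frac{18}{120}=-\frac3{20}$, which is the second displayed identity.

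Since the computation is purely mechanical, there is no real obstacle; the only thing to watch is the bookkeeping of the fractions. As independent sanity checks I would verify that the difference $\sum_j(u_j-l_j)\Omega_j(A,B)$ agrees, at $d=2$, with the $d$-dependent expression in~\eqref{eq:non-negative} (so that its nonnegativity is still exactly the content of~\eqref{eq:positive-diference-pure}), and I would cross-check against the worked low-dimensional examples collected in Section~\ref{sect:examples}, or against a direct average over qubit pure states written in Bloch form with the Bloch vector uniform on $\mathbb{S}^2$.
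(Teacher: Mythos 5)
Your proposal is correct and is exactly the paper's (implicit) derivation: the corollary is nothing more than Theorem~\ref{th:pure-case} evaluated at $d=2$, where $K_2=\frac{1}{120}$, and all eight coefficients $u_j$ and $l_j$ you compute match the stated values. The only minor caveat is that your suggested cross-check against Section~\ref{sect:examples} would not apply directly, since those examples average over mixed states with the Hilbert--Schmidt measure rather than over pure states; but this does not affect the validity of the proof.
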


Next, as an example, we take $A=\sigma_i$ and $B=\sigma_j$, where
$\sigma_i$ and $\sigma_j$ are any two different matrices from three
Pauli's matrices, using the above Corollary, then we get the average
of uncertainty-product of $A$ and $B$ is given by
\begin{eqnarray}
\int\Delta A(\psi)^2\cdot\Delta B(\psi)^2\dif\mu(\psi) = \frac25.
\end{eqnarray}
Moreover,
\begin{eqnarray}
\int\dif\mu(\psi)\Br{\Pa{\langle\set{A,B}\rangle_\psi-\langle
A\rangle_\psi\langle B\rangle_\psi}^2 + \langle
[A,B]\rangle^2_\psi}=\frac25.
\end{eqnarray}
This is surprising! As we have seen that the following inequality
\begin{eqnarray}
\Delta A(\psi)^2\cdot\Delta B(\psi)^2\geqslant
\Pa{\langle\set{A,B}\rangle_\psi-\langle A\rangle_\psi\langle
B\rangle_\psi}^2 + \langle [A,B]\rangle^2_\psi
\end{eqnarray}
holds for all pure state $\ket{\psi}$. From the above discussion, we
see that
\begin{eqnarray}
\int\dif\mu(\psi) f(\psi) =0,
\end{eqnarray}
where $f$ is defined by
\begin{eqnarray}
f(\psi) = \Delta A(\psi)^2\cdot\Delta B(\psi)^2-\Br{
\Pa{\langle\set{A,B}\rangle_\psi-\langle A\rangle_\psi\langle
B\rangle_\psi}^2 + \langle [A,B]\rangle^2_\psi},
\end{eqnarray}
which is obviously a non-negative function of the pure state
$\ket{\psi}$. By Lebesgue integration theory, we get that $f(\psi)$
vanishes almost everywhere except a zero-measure subset of all pure
states. In other words,
\begin{eqnarray}\label{eq:AAA}
\Delta A(\psi)^2\cdot\Delta B(\psi)^2 =
\Pa{\langle\set{A,B}\rangle_\psi-\langle A\rangle_\psi\langle
B\rangle_\psi}^2 + \langle [A,B]\rangle^2_\psi,\quad\text{a.e.}
\end{eqnarray}
From the above observation, we see that any desire to improve
universally the uncertainty-product seems impossible, at least in
the qubit case for two observables $\sigma_i$ and $\sigma_j$ chosen
from three Pauli's matrices.

\subsection{Average of uncertainty-product on the mixed states}

For the mixed state, comparing with the pure state, the calculation
is more complicated, we have the following result.
\begin{thrm}\label{th:HS-average}
For two observables $A$ and $B$ on $\cH_d$, the average of
uncertainty-product taken over the whole set of all density matrices
$\density{\cH_d}$ is given by
\begin{eqnarray}\label{eq:mixed-ave}
&&\int\Delta A(\rho)^2\cdot\Delta
B(\rho)^2\dif\mu_{\rH\rS}(\rho)=\sum^8_{j=1}\overline{\omega}_j
\cdot\Omega_j(A,B),
\end{eqnarray}
where $\overline{\omega_j}=\int\omega_j(\Lambda)\dif\nu(\Lambda)
(j=1,\ldots,8)$.
\end{thrm}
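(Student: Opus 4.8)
The plan is to deduce Theorem~\ref{th:HS-average} from Theorem~\ref{th:iso} by exploiting the product structure of the Hilbert--Schmidt measure. Under the spectral decomposition $\rho=U\Lambda U^\dagger$ one has $\dif\mu_{\rH\rS}(\rho)=\dif\nu(\Lambda)\times\dif\mu_{\mathrm{Haar}}(U)$. Since $A$ and $B$ are bounded and every eigenvalue $\lambda_j$ lies in $[0,1]$, the function $\Delta A(\rho)^2\cdot\Delta B(\rho)^2$ is bounded on $\density{\cH_d}$, so Fubini's theorem lets us write
\begin{eqnarray}
\int_{\density{\cH_d}}\Delta A(\rho)^2\cdot\Delta B(\rho)^2\dif\mu_{\rH\rS}(\rho) = \int\Pa{\int \Delta A(U\Lambda U^\dagger)^2\cdot\Delta B(U\Lambda U^\dagger)^2\dif\mu_{\mathrm{Haar}}(U)}\dif\nu(\Lambda).
\end{eqnarray}

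First I would substitute the inner Haar integral using \eqref{eq:8-symmetric} of Theorem~\ref{th:iso}, which evaluates it to $\sum_{j=1}^8\omega_j(\Lambda)\cdot\Omega_j(A,B)$. The factors $\Omega_j(A,B)$ are independent of $\Lambda$, so they pass through the outer integral; interchanging the finite sum with the integral (by linearity of the integral) yields
\begin{eqnarray}
\int_{\density{\cH_d}}\Delta A(\rho)^2\cdot\Delta B(\rho)^2\dif\mu_{\rH\rS}(\rho) = \sum^8_{j=1}\Pa{\int\omega_j(\Lambda)\dif\nu(\Lambda)}\Omega_j(A,B) = \sum^8_{j=1}\overline{\omega}_j\cdot\Omega_j(A,B),
\end{eqnarray}
which is exactly \eqref{eq:mixed-ave}. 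This proves the theorem as stated, modulo evaluating the constants $\overline{\omega}_j$.

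To obtain explicit values of $\overline{\omega}_j$ I would use the polynomial form of $\omega_j(\Lambda)$ recorded in the Remark after Theorem~\ref{th:iso}: each $\omega_j(\Lambda)$ is an affine combination of $1$, $t_2$, $t_2^2$, $t_3$, and $t_4$, with $t_k=\Tr{\Lambda^k}$. Hence it suffices to compute the spectral moments $\int t_2\dif\nu$, $\int t_2^2\dif\nu$, $\int t_3\dif\nu$, $\int t_4\dif\nu$ for the Hilbert--Schmidt ensemble $\nu=\nu_{d,d}$ (with $\int 1\,\dif\nu=1$), substitute them into the polynomials for $\omega_j$, and collect terms. These moments are standard for random density matrices: they follow from the Laguerre/Selberg-type integral defining $\nu_{d,k}$ at $k=d$, or, in the spirit of the Appendix, by writing $t_k=\Tr{\rho^k}$ with $\rho=\ptr{2}{\out{\psi}{\psi}}$ for a Haar-random $\ket{\psi}\in\complex^d\ot\complex^d$ and averaging via the Weingarten calculus on $\unitary{d}$. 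Conceptually the proof is entirely contained in Theorem~\ref{th:iso}; the only obstacle is the moment bookkeeping, and the one point requiring care is that the mixed moment $\int t_2^2\dif\nu$ must not be conflated with $\Pa{\int t_2\dif\nu}^2$.
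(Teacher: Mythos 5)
Your proposal is correct and follows essentially the same route as the paper: the authors likewise obtain \eqref{eq:mixed-ave} by combining Theorem~\ref{th:iso} with the product form $\dif\mu_{\rH\rS}(\rho)=\dif\nu(\Lambda)\times\dif\mu_{\mathrm{Haar}}(U)$, and then evaluate the $\overline{\omega}_j$ via the spectral moments of Proposition~\ref{prop:karol} together with Lemma~\ref{lem:lin} for $\langle t_2^2\rangle$ (precisely the point you correctly flag as not reducible to $\langle t_2\rangle^2$). No gaps.
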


\begin{proof}
The proof follows directly from Theorem~\ref{th:iso} by using
Proposition~\ref{prop:karol} and Lemma~\ref{lem:lin}.
\end{proof}

\begin{remark}
In fact, we can give the final formulae for $\overline{\omega_j}$'s.
We ignore the tedious but simple calculations.
\begin{eqnarray}
\overline{\omega}_1 &=&
N_d\Pa{d^4-20d^2+158-\frac{50}{d^2+1}+\frac{792}{d^2+2}-\frac{1512}{d^2+3}},\\
\overline{\omega}_2 &=&
N_d\Pa{-d^5+18d^3-118d-\frac{50d}{d^2+1}+\frac{504d}{d^2+3}},\\
\overline{\omega}_3 &=&
N_d\Pa{4d^3-80d+\frac{200d}{d^2+1}-\frac{1584d}{d^2+2}+\frac{2016d}{d^2+3}},\\
\overline{\omega}_4 &=&
N_d\Pa{d^6-17d^4+99d^2-316-\frac{50}{d^2+1}+\frac{396}{d^2+2}+\frac{504}{d^2+3}},\\
\overline{\omega}_5 &=& N_d\Pa{2d^2-40+\frac{100}{d^2+1}-\frac{792}{d^2+2}+\frac{1008}{d^2+3}},\\
\overline{\omega}_6 &=& N_d\Pa{-2d^4+38d^2-276-\frac{792}{d^2+2}+\frac{2016}{d^2+3}},\\
\overline{\omega}_7 &=& N_d\Pa{2d^5-d^4-28d^3+9d^2+136d+\frac{100d}{d^2+1}-\frac{672d}{d^2+3}},\\
\overline{\omega}_8 &=&
N_d\Pa{2d-\frac{100d}{d^2+1}+\frac{396d}{d^2+2}-\frac{336d}{d^2+3}}.
\end{eqnarray}
From the above formulae, we can see that in higher dimensional
space, $\Omega_4(A,B)=\Tr{A^2}\Tr{B^2}$ plays a leading role
relative to other terms. We also see from Remark~\ref{rem:free}
that, for the large enough dimension $d$, when observables $A$ and
$B$ taken from GUE are independent,
\begin{eqnarray}
\int\Delta A(\rho)^2\cdot\Delta
B(\rho)^2\dif\mu_{\rH\rS}(\rho)\simeq
m_1\Omega_1(A,B)+m_2\Omega_2(A,B)+m_4\Omega_4(A,B).
\end{eqnarray}
where
\begin{eqnarray}
m_1 &=& \overline{\omega}_1+d^{-1}\overline{\omega}_3+d^{-2}\overline{\omega}_5-d^{-3}\overline{\omega}_8,\\\
m_2 &=& \overline{\omega}_2+d^{-1}\overline{\omega}_6+d^{-2}\overline{\omega}_8,\\
m_4 &=& \overline{\omega}_4+d^{-1}\overline{\omega}_7.
\end{eqnarray}
Similar to the pure state case (see \eqref{eq:pure-state-of-ave}), a
mixed state $\rho$ is called the \emph{average state} with respect
to uncertainty product of observables $(A,B)$ if it satisfies that
\begin{eqnarray}
\Delta A(\rho)^2\cdot\Delta
B(\rho)^2=\sum^8_{j=1}\overline{\omega}_j \cdot\Omega_j(A,B).
\end{eqnarray}
We can ask analogous problems parallel to the pure state case. But
we are not concerned these problems in this paper.
\end{remark}

\subsection{Average lower bound of uncertainty-product}

Here we also calculate the average of the lower bound of
uncertainty-product in \eqref{eq:R-S}.
\begin{thrm}\label{th:ave-lower}
For two observables $A$ and $B$ on $\cH_d$,  it holds that
\begin{eqnarray}\label{4}
\int_{\density{\cH_d}}\dif\mu_{\rH\rS}(\rho)\Pa{\langle\set{A,B}\rangle_\rho-\langle
A\rangle_\rho\langle B\rangle_\rho}^2 =
\sum^8_{j=1}\beta_j\Omega_j(A,B),
\end{eqnarray}
where $N_d^{-1}=d^2(d^2-1)(d^2-4)(d^2-9)$ and
\begin{eqnarray}
\beta_1 &=& N_d\Pa{d^4-18d^2+158-\frac{50}{d^2+1}+\frac{792}{d^2+2}-\frac{1512}{d^2+3}},\\
\beta_2 &=& N_d\Pa{d^3-20d+\frac{50d}{d^2+1}-\frac{396d}{d^2+2}+\frac{504d}{d^2+3}},\\
\beta_3 &=&
N_d\Pa{-2d^5+38d^3-276d-\frac{792d}{d^2+2}+\frac{2016d}{d^2+3}},\\
\beta_4 &=& N_d\Pa{-2d^2-20+\frac{50}{d^2+1}-\frac{396}{d^2+2}+\frac{504}{d^2+3}},\\
\beta_5 &=&
N_d\Pa{d^6-15d^4-2d^3+60d^2+34d-140+\frac{200d+200}{d^2+1}-\frac{396d+792}{d^2+2}+\frac{1008}{d^2+3}},\\
\beta_6 &=&
N_d\Pa{-2d^3+4d^2+34d-380+\frac{200d+500}{d^2+1}-\frac{396d+1584}{d^2+2}+\frac{2016}{d^2+3}},\\
\beta_7 &=&
N_d\Pa{-d^7+\frac{27}{2}d^5-\frac{91}{2}d^3+70d-\frac{50d}{d^2+1}+\frac{396d}{d^2+2}-\frac{672d}{d^2+3}},\\
\beta_8 &=&
N_d\Pa{-d^7+\frac{27}{2}d^5-\frac{91}{2}d^3+68d+\frac{50d}{d^2+1}-\frac{336d}{d^2+3}}.
\end{eqnarray}
Thus
\begin{eqnarray}\label{5}
\int_{\density{\cH_d}}\dif\mu_{\rH\rS}(\rho)\Br{\Pa{\langle\set{A,B}\rangle_\rho-\langle
A\rangle_\rho\langle B\rangle_\rho}^2 +\langle [A,B]\rangle^2_\rho}=
\sum^8_{j=1}\beta'_j\Omega_j(A,B),
\end{eqnarray}
where
\begin{eqnarray}
&&\beta'_1=\beta_1,~ \beta'_2 =\beta_2,~ \beta'_3 = \beta_3,~
\beta'_4 = \beta_4,~ \beta'_5 = \beta_5,~ \beta'_6 = \beta_6,
 \\ &&\beta'_7 =
N_d \Pa{-d^7+14d^5-53d^3+102d-\frac{100d}{d^2+1}+\frac{396d}{d^2+2}-\frac{672d}{d^2+3}},\\
&&\beta'_8 =
N_d \Pa{-d^7+13d^5-38d^3+36d+\frac{100d}{d^2+1}-\frac{336d}{d^2+3}}.
\end{eqnarray}
\end{thrm}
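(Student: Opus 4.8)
The plan is to reduce each integrand to a combination of traces of tensor powers $\rho^{\ot k}$ ($k=2,3,4$) against fixed operators in $A,B$, perform the Haar integration with the Weingarten/Schur--Weyl machinery already set up for Theorem~\ref{th:iso}, average the resulting spectral moments over $\nu$ exactly as in the proof of Theorem~\ref{th:HS-average}, and then re-express the answer in the $\Omega_j$ basis. Writing $S:=\set{A,B}$, a direct expansion of the square gives
\begin{eqnarray}\label{eq:pp-first}
\Pa{\langle S\rangle_\rho-\langle A\rangle_\rho\langle B\rangle_\rho}^2 &=& \Tr{(S\ot S)\rho^{\ot 2}} - 2\Tr{(S\ot A\ot B)\rho^{\ot 3}}\notag\\
&&+ \Tr{(A\ot A\ot B\ot B)\rho^{\ot 4}},
\end{eqnarray}
while for the full Robertson--Schr\"{o}dinger bound I would pass to the covariance form: since $\Tr{AB\rho}=\langle S\rangle_\rho+\sqrt{-1}\,\langle[A,B]\rangle_\rho$ and $\langle A\rangle_\rho,\langle B\rangle_\rho$ are real, one has $\Pa{\langle S\rangle_\rho-\langle A\rangle_\rho\langle B\rangle_\rho}^2+\langle[A,B]\rangle^2_\rho=\Abs{\Tr{AB\rho}-\Tr{A\rho}\Tr{B\rho}}^2$; expanding the modulus with $\overline{\Tr{AB\rho}}=\Tr{BA\rho}$ then yields
\begin{eqnarray}\label{eq:pp-full}
\Pa{\langle S\rangle_\rho-\langle A\rangle_\rho\langle B\rangle_\rho}^2+\langle[A,B]\rangle^2_\rho &=& \Tr{(AB\ot BA)\rho^{\ot 2}} - 2\Tr{(S\ot A\ot B)\rho^{\ot 3}}\notag\\
&&+ \Tr{(A\ot A\ot B\ot B)\rho^{\ot 4}}.
\end{eqnarray}
Thus \eqref{4} and \eqref{5} differ only by replacing $S\ot S$ with $AB\ot BA$ in the $\rho^{\ot 2}$ term.

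With $\rho=U\Lambda U^\dagger$ and $\dif\mu_{\rH\rS}(\rho)=\dif\nu(\Lambda)\times\dif\mu_{\mathrm{Haar}}(U)$, each term above becomes $\int\Tr{M_k\,\sE_k(\Lambda)}\dif\nu(\Lambda)$, where $M_2\in\set{S\ot S,\,AB\ot BA}$, $M_3=S\ot A\ot B$, $M_4=A\ot A\ot B\ot B$ and $\sE_k$ is as in \eqref{eq:expect-UU}. For $k\le 4$ the operators $\sE_k(\Lambda)$ were computed in the Appendix from Schur--Weyl duality and the Weingarten function on $S_k$: each is a linear combination of the permutation operators $P_\sigma$, $\sigma\in S_k$, on $(\complex^d)^{\ot k}$, whose coefficients are polynomials in $t_2=\Tr{\Lambda^2},t_3,t_4$ (with $t_1=1$). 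Integrating these coefficients against $\nu$ amounts to inserting the Hilbert--Schmidt moments $\overline{t_k}$ --- precisely the step already performed in the proof of Theorem~\ref{th:HS-average} through Proposition~\ref{prop:karol} and Lemma~\ref{lem:lin} --- and produces $\overline{\sE_k}:=\int\sE_k(\Lambda)\dif\nu(\Lambda)$ explicitly.

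It then remains to evaluate $\Tr{M_k\,P_\sigma}$ for every $\sigma\in S_k$ and to collect. Each such trace factorizes over the cycles of $\sigma$ into products of traces of words in $A,B$, which reduce to the $\Omega_j$'s via $\Tr{S}=\Tr{AB}$, $\Tr{S^2}=\tfrac12\Pa{\Tr{A^2B^2}+\Tr{ABAB}}$, $\Tr{SA}=\Tr{A^2B}$, $\Tr{SB}=\Tr{AB^2}$, $\Tr{SAB}=\Tr{SBA}=\tfrac12\Pa{\Tr{A^2B^2}+\Tr{ABAB}}$, $\Tr{(AB)(BA)}=\Tr{A^2B^2}$, together with the reduction of the $24$ words $\Tr{P_\sigma(A\ot A\ot B\ot B)}$ --- which is exactly the computation already carried out for the $\Tr{(A^{\ot 2}\ot B^{\ot 2})\sE_4(\Lambda)}$ term of Theorem~\ref{th:iso}. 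Adding the $\overline{\sE_2},\overline{\sE_3},\overline{\sE_4}$ contributions and grouping by $\Omega_j$ gives the $\beta_j$. For \eqref{5} only the $\rho^{\ot 2}$ term changes, and since $\overline{\sE_2}$ is a combination of the identity and the swap, $\Tr{\Pa{(AB\ot BA)-(S\ot S)}\overline{\sE_2}}=\tfrac{\overline b}{2}\Pa{\Tr{A^2B^2}-\Tr{ABAB}}$ with $\overline b=\tfrac1{d(d^2+1)}$ the swap-coefficient of $\overline{\sE_2}$; hence $\beta'_j=\beta_j$ for $j\le 6$, $\beta'_7=\beta_7+\tfrac{\overline b}{2}$ and $\beta'_8=\beta_8-\tfrac{\overline b}{2}$.

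The main obstacle is the volume of bookkeeping --- assembling $\overline{\sE_3}$ and especially $\overline{\sE_4}$ (a sum over $S_4$ involving several distinct Weingarten values), reducing all word-traces to the $\Omega_j$ basis, and then combining with the spectral moments and collecting --- rather than any conceptual difficulty: the $\rho^{\ot 4}$ piece is exactly the operator $A^{\ot 2}\ot B^{\ot 2}$ already handled in Theorem~\ref{th:iso}, so the genuinely new work is confined to the $\rho^{\ot 2}$ and $\rho^{\ot 3}$ pieces with $S=\set{A,B}$ inserted, plus the single extra commutator term separating \eqref{5} from \eqref{4}. A convenient consistency check is that $\overline{\omega}_j-\beta'_j$ (the difference of the coefficients in Theorem~\ref{th:HS-average} and \eqref{5}) must equal the coefficients of $\int_{\density{\cH_d}}\Pa{\Delta A(\rho)^2\cdot\Delta B(\rho)^2-L_0(A,B,\rho)}\dif\mu_{\rH\rS}(\rho)$, which is nonnegative for every pair of observables $A,B$.
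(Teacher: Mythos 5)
Your proposal is correct and follows essentially the same route as the paper: expand the covariance square into $\rho^{\ot2},\rho^{\ot3},\rho^{\ot4}$ terms (the paper's Eq.~\eqref{12}), integrate via $\sE_2,\sE_3,\sE_4$ and the Hilbert--Schmidt moments from Proposition~\ref{prop:karol} and Lemma~\ref{lem:lin}, reuse the $A^{\ot2}\ot B^{\ot2}$ computation from Theorem~\ref{th:iso}, and observe that the commutator term only shifts the $\Omega_7,\Omega_8$ coefficients by $\pm\tfrac{1}{2d(d^2+1)}$. Your modulus identity $\abs{\Tr{AB\rho}-\Tr{A\rho}\Tr{B\rho}}^2$ is just a compact repackaging of the paper's separate evaluation of $\int\Tr{\sE_2(\Lambda)[A,B]^{\ot2}}\dif\nu(\Lambda)=\tfrac{1}{2d(d^2+1)}\Pa{\Omega_7-\Omega_8}$, and your stated correction to $\beta_7,\beta_8$ matches it exactly.
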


The average of the lower bound of uncertainty-product can be the
reference value for improving the lower bound of
uncertainty-product, as suggested in Section~\ref{sect:motivation}.
The proof of Theorem~\ref{th:ave-lower} is put in
Subsection~\ref{subset:ave-lower}.

\begin{remark}
From the above Theorem~\ref{th:ave-lower}, we see that in higher
dimensional space, $\Omega_7(A,B)=\Tr{A^2B^2}$ and
$\Omega_8(A,B)=\Tr{ABAB}$ play a leading role relative to other
terms.
\end{remark}

\begin{remark}
We can still compare \eqref{eq:mixed-ave} and \eqref{5} in order to
obtain another matrix trace inequality:
\begin{eqnarray}\label{eq:positive-diference-mixed}
\sum^8_{j=1}(\overline{\omega}_j - \beta'_j)
\cdot\Omega_j(A,B)\geqslant 0.
\end{eqnarray}
As a matter of fact, \eqref{eq:positive-diference-pure} and
\eqref{eq:positive-diference-mixed} are just two special cases of
the following matrix trace inequalities:
\begin{eqnarray}
\sum^8_{j=1}f_j(d) \cdot\Omega_j(A,B)\geqslant 0,
\end{eqnarray}
where $f_j(d)(j=1,\ldots,8)$ are the dimension-dependent factors
under some constraints.
\end{remark}

\section{Concentration of measure phenomenon}\label{sect:concentration}

In order to discuss the concentration of measure phenomenon might
being happened to the uncertainty-product, we will use the
concentration of measure phenomenon on the special unitary group
$\rS\rU(\cH_d)$, established recently by Oszmaniec in his thesis
\cite{oszmaniec2014phd}.
\begin{prop}[Concentration of measure on $\rS\rU(\cH_d)$]
Consider a special unitary group $\rS\rU(\cH_d)$ equipped with the
Haar measure $\mu_{\mathrm{Haar}}$ and a Riemann metric
$g_{\rH\rS}$. Let $f:\rS\rU(\cH_d)\to\real$ be a smooth function on
$\rS\rU(\cH_d)$ with the mean $\bar
f=\int_{\rS\rU(\cH_d)}f(U)\dif\mu_{\mathrm{Haar}}(U)$, let
\begin{eqnarray}
L = \sqrt{\max\Set{g_{\rH\rS}(\nabla f,\nabla f):
U\in\rS\rU(\cH_d)}}
\end{eqnarray}
be the Lipschitz constant of $f$. Then, for every
$\epsilon\geqslant0$, the following concentration inequalities hold
\begin{eqnarray}
\mu_{\mathrm{Haar}}\Set{U\in\rS\rU(\cH_d):f(U) - \bar f\geqslant
\epsilon}&\leqslant& \exp\Pa{-\frac{d\epsilon^2}{4L^2}},\\
\mu_{\mathrm{Haar}}\Set{U\in\rS\rU(\cH_d):f(U) - \bar f\leqslant
-\epsilon}&\leqslant& \exp\Pa{-\frac{d\epsilon^2}{4L^2}}.
\end{eqnarray}
\end{prop}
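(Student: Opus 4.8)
The plan is to obtain this as an instance of the general Lévy-type concentration for compact Riemannian manifolds whose Ricci curvature is bounded below by a positive multiple of the metric, the positive multiple being of order $d$ for $\rS\rU(\cH_d)$ equipped with $g_{\rH\rS}$.

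First I would fix the Riemannian picture: $\rS\rU(\cH_d)$ is the compact connected Lie group $\rS\rU(d)$, $g_{\rH\rS}$ is the bi-invariant metric coming from the Hilbert--Schmidt inner product $\langle X,Y\rangle=\Tr{X^\dagger Y}$ on the Lie algebra $\mathfrak{su}(d)$, and $\mu_{\mathrm{Haar}}$ is the normalized Riemannian volume measure of $g_{\rH\rS}$. The Levi-Civita connection of any bi-invariant metric on a compact group is $\nabla_XY=\tfrac12[X,Y]$, so the $(0,2)$ Ricci tensor is $\mathrm{Ric}=-\tfrac14 B$, where $B$ is the Killing form of $\mathfrak{su}(d)$. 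Using $B(X,Y)=2d\,\Tr{XY}$ together with anti-Hermiticity of $X\in\mathfrak{su}(d)$ (so that $\Tr{X^2}=-\Tr{X^\dagger X}=-\langle X,X\rangle$), I get
\[
\mathrm{Ric}_{g_{\rH\rS}}(X,X)=-\tfrac14 B(X,X)=\tfrac{d}{2}\langle X,X\rangle, \qquad\text{i.e.}\qquad \mathrm{Ric}_{g_{\rH\rS}}=\tfrac{d}{2}\,g_{\rH\rS}.
\]

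Second, I would invoke the Bakry--Émery criterion: on a Riemannian manifold with $\mathrm{Ric}\geqslant\rho\,g$ for some $\rho>0$, the normalized volume measure satisfies a logarithmic Sobolev inequality with constant $2/\rho$, and the Herbst argument then yields, for every $1$-Lipschitz $h$ with mean $\bar h$,
\[
\mu\Set{h-\bar h\geqslant\epsilon}\leqslant\exp\Pa{-\tfrac{\rho\,\epsilon^2}{2}}.
\]
Given a smooth $f$ on $\rS\rU(\cH_d)$ with $L=\sqrt{\max g_{\rH\rS}(\nabla f,\nabla f)}$, the function $h=f/L$ is $1$-Lipschitz for $g_{\rH\rS}$ with mean $\bar f/L$; substituting $\rho=d/2$ gives exactly
\[
\mu_{\mathrm{Haar}}\Set{f-\bar f\geqslant\epsilon}\leqslant\exp\Pa{-\tfrac{d\,\epsilon^2}{4L^2}}.
\]
Applying the same estimate to $-f$ (same Lipschitz constant, mean $-\bar f$) gives the second, lower-tail inequality. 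Connectedness and completeness of $\rS\rU(\cH_d)$ make the identification of the Lipschitz constant of $f$ with $\sup\norm{\nabla f}$ legitimate, so smoothness of $f$ suffices and no regularization is needed.

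The main point requiring care is not difficulty but the bookkeeping of normalizations and absolute constants: one must keep the Ricci lower bound $\rho$ and the Lipschitz constant $L$ computed with respect to the \emph{same} normalization of $g_{\rH\rS}$ (the ratio $\rho/L^2$ is in fact invariant under rescaling of the metric, which is what makes the final bound well posed), and one must use the form of the log-Sobolev/Herbst estimate that produces the exponent $\rho\epsilon^2/2$ — hence the factor $\tfrac14$ after dividing by $L^2$ — rather than a different absolute constant. If one prefers to bypass the log-Sobolev machinery, an alternative is the Gromov--Lévy isoperimetric comparison of $\rS\rU(\cH_d)$ with a round sphere of the appropriate radius and dimension, which reproduces the same order-$d$ exponent, though most naturally around the median and thus requiring an extra step to pass to the mean.
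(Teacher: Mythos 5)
Your proof is correct: the Ricci computation $\mathrm{Ric}_{g_{\rH\rS}}=-\tfrac14 B=\tfrac{d}{2}\,g_{\rH\rS}$ on $\mathfrak{su}(d)$ is right, and the Bakry--\'Emery log-Sobolev inequality with constant $2/\rho$ plus the Herbst argument does yield exactly the exponent $d\epsilon^2/(4L^2)$ after rescaling by the Lipschitz constant. Note that the paper does not prove this proposition at all---it is imported verbatim from Oszmaniec's thesis---so there is no in-paper argument to compare against; your derivation is the standard one underlying that cited result, and it supplies a self-contained justification the paper omits.
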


Denote
\begin{eqnarray}
\Phi(U)= \Delta A(U\rho U^\dagger)^2\cdot \Delta B(U\rho
U^\dagger)^2.
\end{eqnarray}
From \eqref{eq:prod-uncertainty}, we see that
\begin{eqnarray}
\Phi(U) = \Tr{U^{\ot4}\rho^{\ot 4}U^{\dagger,\ot4}\Br{A^2\ot
B^2\ot\I^{\ot2}+A^{\ot 2}\ot B^{\ot 2} - A^2\ot B^{\ot 2}\ot\I-
A^{\ot 2}\ot B^2\ot\I}}.
\end{eqnarray}
By using the result in \cite[Lemma~6.1]{oszmaniec2014phd}, we see
that the Lipschitz constant $L_\Phi$ of the function $\Phi$, with
respect to the metric tensor $g_{\rH\rS}$, satisfies
\begin{eqnarray}
L_\Phi&\leqslant& 8\norm{A^2\ot B^2\ot\I^{\ot2}+A^{\ot 2}\ot B^{\ot
2}
- A^2\ot B^{\ot 2}\ot\I- A^{\ot 2}\ot B^2\ot\I}_\infty\notag\\
&\leqslant& 32\norm{A}^2_\infty\norm{B}^2_\infty.
\end{eqnarray}
Thus we have the following result:
\begin{thrm}[Concentration of measure within isospectral density matrices]
For every $\epsilon\geqslant0$, the following concentration
inequalities hold
\begin{eqnarray}
\mu_{\mathrm{Haar}}\Set{U\in\rS\rU(\cH_d):\Phi(U) -
\overline{\Phi}\geqslant
\epsilon}&\leqslant& \exp\Pa{-\frac{d\epsilon^2}{4096\norm{A}^4_\infty\norm{B}^4_\infty}},\\
\mu_{\mathrm{Haar}}\Set{U\in\rS\rU(\cH_d):\Phi(U) -
\overline{\Phi}\leqslant -\epsilon}&\leqslant&
\exp\Pa{-\frac{d\epsilon^2}{4096\norm{A}^4_\infty\norm{B}^4_\infty}}.
\end{eqnarray}
\end{thrm}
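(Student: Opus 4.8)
The plan is to obtain the statement as a direct application of the concentration-of-measure proposition on $\rS\rU(\cH_d)$ recalled above, applied to the function $\Phi(U)=\Delta A(U\rho U^\dagger)^2\cdot\Delta B(U\rho U^\dagger)^2$. First I would check that $\Phi$ meets the hypotheses of that proposition. From the identity
\[
\Phi(U)=\Tr{U^{\ot 4}\rho^{\ot 4}U^{\dagger,\ot 4}\,M},\qquad
M:=A^2\ot B^2\ot\I^{\ot 2}+A^{\ot 2}\ot B^{\ot 2}-A^2\ot B^{\ot 2}\ot\I-A^{\ot 2}\ot B^2\ot\I,
\]
$\Phi$ is a polynomial in the matrix entries of $U$ and $\overline{U}$, hence a real-valued smooth (in fact real-analytic) function on the compact group $\rS\rU(\cH_d)$, so $\max\Set{g_{\rH\rS}(\nabla\Phi,\nabla\Phi):U\in\rS\rU(\cH_d)}$ is attained and finite. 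I would also record that $\Phi$ depends on $U$ only through the conjugate $U\rho U^\dagger$, hence is invariant under multiplication of $U$ by a global phase; since every element of $\rU(d)$ is such a phase times an element of $\rS\rU(\cH_d)$, pushing the Haar measures forward to the isospectral family shows that the mean $\overline{\Phi}:=\int_{\rS\rU(\cH_d)}\Phi(U)\dif\mu_{\mathrm{Haar}}(U)$ equals the average $\int\Delta A(\rho)^2\cdot\Delta B(\rho)^2\dif\mu_{\mathrm{Haar}}(U)$ over $\rU(d)$ computed in Theorem~\ref{th:iso}.

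Next I would feed in the Lipschitz estimate already derived in the text: by \cite[Lemma~6.1]{oszmaniec2014phd}, the Lipschitz constant $L_\Phi$ of $\Phi$ with respect to $g_{\rH\rS}$ satisfies $L_\Phi\leqslant 8\norm{M}_\infty$, and the triangle inequality together with $\norm{A^2}_\infty=\norm{A^{\ot 2}}_\infty=\norm{A}^2_\infty$, $\norm{\I}_\infty=1$ (and the corresponding identities for $B$) gives $\norm{M}_\infty\leqslant 4\norm{A}^2_\infty\norm{B}^2_\infty$, so that $L_\Phi\leqslant 32\norm{A}^2_\infty\norm{B}^2_\infty$. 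Applying the proposition with $f=\Phi$ and $L=L_\Phi$ then yields, for every $\epsilon\geqslant0$,
\[
\mu_{\mathrm{Haar}}\Set{U\in\rS\rU(\cH_d):\Phi(U)-\overline{\Phi}\geqslant\epsilon}
\leqslant\exp\Pa{-\frac{d\epsilon^2}{4L_\Phi^2}}
\leqslant\exp\Pa{-\frac{d\epsilon^2}{4096\,\norm{A}^4_\infty\norm{B}^4_\infty}},
\]
because $4L_\Phi^2\leqslant 4\cdot 32^2\,\norm{A}^4_\infty\norm{B}^4_\infty=4096\,\norm{A}^4_\infty\norm{B}^4_\infty$ and $x\mapsto\exp(-c/x)$ is increasing for $c>0$. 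The lower-tail inequality follows identically from the second inequality of the proposition — equivalently, by applying the upper-tail bound to $-\Phi$, which has the same Lipschitz constant and mean $-\overline{\Phi}$.

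The only points that genuinely need care, as opposed to the closing arithmetic, are the two reductions that set up the application: verifying that Oszmaniec's Lemma~6.1 indeed supplies the \emph{dimension-independent} bound $L_\Phi\leqslant 8\norm{M}_\infty$ for a degree-four conjugation functional of this form (the essential point being that the gradient norm carries no factor growing with $d$), and confirming that the mean in the concentration inequality is exactly the average evaluated in Theorem~\ref{th:iso} — that is, that passing from conjugation by $\rU(d)$ to conjugation by $\rS\rU(\cH_d)$ does not alter the average. Both follow readily from the phase-invariance of $\Phi$ and the precise statement of Lemma~6.1, so I do not expect a real obstacle; the theorem is essentially a repackaging of the Lipschitz bound already established into the quoted concentration estimate.
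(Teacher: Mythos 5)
Your proposal is correct and follows essentially the same route as the paper: write $\Phi(U)=\Tr{U^{\ot4}\rho^{\ot4}U^{\dagger,\ot4}M}$, bound $\norm{M}_\infty\leqslant 4\norm{A}^2_\infty\norm{B}^2_\infty$ by the triangle inequality, invoke Oszmaniec's Lemma~6.1 to get $L_\Phi\leqslant 32\norm{A}^2_\infty\norm{B}^2_\infty$, and substitute into the concentration proposition to obtain the constant $4\cdot32^2=4096$. The extra checks you flag (smoothness of $\Phi$ and the phase-invariance identifying the $\rS\rU(\cH_d)$ mean with the $\rU(d)$ average of Theorem~\ref{th:iso}) are sensible and are left implicit in the paper.
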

This result shows that when we consider the uncertainty-product for
two bounded observables $A$ and $B$ over the set of isospectral
density matrices, the uncertainty-product around its average, in
\eqref{eq:8-symmetric}
\begin{eqnarray}
\overline{\Phi} = \sum^8_{j=1}\omega_j(\Lambda)\cdot\Omega_j(A,B),
\end{eqnarray}
has an overwhelming probability.

\begin{lem}[L\'{e}vy's lemma]
Let $f:\mathbb{S}^k\to\real$ be a Lipschitz function from $k$-sphere
to real line with the Lipschitz constant $L$ (with respect to the
Euclidean norm) and a point $u\in\mathbb{S}^k$ be chosen uniformly
at random. Then, for all $\epsilon>0$,
\begin{eqnarray}
\mathbf{Pr}\Set{\abs{f(u)-\bar f}>\epsilon}\leqslant
2\exp\Pa{-\frac{(k+1)\epsilon^2}{9\pi^3L^2\ln2}},
\end{eqnarray}
where $\bar f:=\int_{\mathbb{S}^k}f(u)\dif \mu(u)$ means the mean
value of $f$ with respect to uniform probability measure on the unit
sphere $\mathbb{S}^k$.
\end{lem}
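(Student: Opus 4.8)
The plan is to derive L\'{e}vy's lemma from the isoperimetric inequality on the sphere, following the classical argument of L\'{e}vy together with its quantitative refinement due to Milman and Schechtman. First I would reduce the two-sided deviation estimate to a one-sided statement about a median. Fix a median $m$ of $f$, so that both $\Set{u:f(u)\leqslant m}$ and $\Set{u:f(u)\geqslant m}$ have measure at least $\tfrac12$ under the uniform probability measure $\mu$ on $\mathbb{S}^k$, and set $A=\Set{u:f(u)\leqslant m}$. Since $f$ is $L$-Lipschitz for the Euclidean norm it is also $L$-Lipschitz for the geodesic metric on $\mathbb{S}^k$ (chordal distance never exceeds geodesic distance), so every point within geodesic distance $\epsilon/L$ of $A$ satisfies $f\leqslant m+\epsilon$; hence $\Set{u:f(u)>m+\epsilon}\subseteq\mathbb{S}^k\setminus A_{\epsilon/L}$, where $A_{\epsilon/L}$ is the geodesic $(\epsilon/L)$-enlargement of $A$.

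Next I would invoke the spherical isoperimetric inequality: among all measurable subsets of $\mathbb{S}^k$ of measure at least $\tfrac12$, a hemisphere $H$ minimizes $\mu\Pa{\mathbb{S}^k\setminus(\cdot)_\delta}$ for every $\delta>0$. Thus $\mu(\mathbb{S}^k\setminus A_{\epsilon/L})\leqslant\mu(\mathbb{S}^k\setminus H_{\epsilon/L})$, and the right-hand side is the normalized area of a spherical cap of geodesic radius just below $\pi/2$, which one bounds by elementary means — comparing $\cos^{k-1}$ against a Gaussian — by an expression of the form $c\exp\Pa{-c'(k+1)\epsilon^2/L^2}$. Running the same argument for $-f$ and taking a union bound yields a two-sided tail bound for $\mu\Set{u:\abs{f(u)-m}>\epsilon}$ carrying the factor $2$.

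It then remains to pass from the median $m$ to the mean $\bar f$. Integrating the tail bound just established gives $\abs{\bar f-m}\leqslant\int_{\mathbb{S}^k}\abs{f-m}\dif\mu=O\Pa{L/\sqrt{k+1}}$, and substituting $\epsilon\mapsto\epsilon-\abs{\bar f-m}$ into the median estimate — treating the range $\epsilon=O\Pa{L/\sqrt{k+1}}$, where the asserted inequality holds trivially once the constants are chosen generously, separately — produces the claim. The only genuinely delicate point is the tracking of absolute constants: the spherical-cap estimate, the geodesic-versus-chordal comparisons (which contribute the powers of $\pi$), and the median-to-mean correction (which contributes the $\ln 2$) together fix the constant $9\pi^3\ln2$, while the logical structure is completely standard. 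Alternatively, since $\mathbb{S}^k$ has Ricci curvature at least $k-1$, the Bakry--\'{E}mery criterion yields a logarithmic Sobolev inequality and hence sub-Gaussian concentration directly around the mean; either way, for the precise form stated here one may simply cite Milman--Schechtman or Ledoux.
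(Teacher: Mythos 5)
This lemma is not proved in the paper at all: it is imported as a standard result from the concentration-of-measure literature (the constant $9\pi^3\ln 2$ is the one appearing in the quantum-information formulations of L\'evy's lemma, e.g.\ Hayden--Leung--Winter), so there is no in-paper argument to compare yours against. Your outline follows the classical route --- reduce to a one-sided statement about a median, apply the spherical isoperimetric inequality to bound the measure of the complement of the $(\epsilon/L)$-enlargement of a half-measure set by that of the complement of an enlarged hemisphere, estimate the resulting spherical cap by a Gaussian-type integral, and convert median concentration into mean concentration at the cost of an additive $O(L/\sqrt{k+1})$ shift --- and each of these steps is sound, including the observation that Euclidean-Lipschitz implies geodesic-Lipschitz with the same constant, and that the small-$\epsilon$ regime is vacuous because there the right-hand side exceeds $1$.

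The one genuine gap is that the only nontrivial content of this particular statement, namely the explicit constant $9\pi^3\ln 2$ in the exponent, is never actually derived: you assert that the cap estimate, the chordal-versus-geodesic comparison, and the median-to-mean correction ``together fix'' it, but you do not carry out that bookkeeping, and at the decisive moment you fall back on citing Milman--Schechtman or Ledoux. As a verification that the inequality holds with \emph{some} universal constant in place of $(9\pi^3\ln 2)^{-1}$, the argument is complete; as a proof of the inequality \emph{as stated} it is not, since an unlucky accumulation of factors in the cap estimate and the median-to-mean step could in principle force a worse constant. In fairness, the paper does exactly what your last sentence does --- it quotes the lemma with this constant from the literature without proof --- so your proposal is no less rigorous than the source; but to count as a proof of the displayed bound you would need either to track the constants explicitly through all three stages or to present the lemma honestly as a citation.
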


Let $f(\rho) = \Delta A(\rho)^2\cdot \Delta B(\rho)^2$. Then
\begin{eqnarray}
f(\rho) - f(\sigma) &=& \Tr{\Br{A^2\ot B^2}\Br{\rho^{\ot
2}-\sigma^{\ot 2}}} + \Tr{\Br{A^{\ot 2}\ot B^{\ot 2}}\Br{\rho^{\ot
4}-\sigma^{\ot 4}}} \notag\\
&&- \Tr{\Br{A^2\ot B^{\ot 2}}\Br{\rho^{\ot 3}-\sigma^{\ot 3}}} -
\Tr{\Br{B^2\ot A^{\ot 2}}\Br{\rho^{\ot 3}-\sigma^{\ot 3}}}.
\end{eqnarray}
Thus
\begin{eqnarray}
\abs{f(\rho) - f(\sigma)} &\leqslant& \abs{\Tr{\Br{A^2\ot
B^2}\Br{\rho^{\ot 2}-\sigma^{\ot 2}}}} + \abs{\Tr{\Br{A^{\ot 2}\ot
B^{\ot 2}}\Br{\rho^{\ot
4}-\sigma^{\ot 4}}}} \notag\\
&&+ \abs{\Tr{\Br{A^2\ot B^{\ot 2}}\Br{\rho^{\ot 3}-\sigma^{\ot 3}}}}
+ \abs{\Tr{\Br{B^2\ot A^{\ot 2}}\Br{\rho^{\ot 3}-\sigma^{\ot 3}}}}\notag\\
&\leqslant& \norm{A^2\ot B^2}_\infty\norm{\rho^{\ot 2}-\sigma^{\ot
2}}_1 + \norm{A^{\ot 2}\ot B^{\ot 2}}_\infty\norm{\rho^{\ot
4}-\sigma^{\ot 4}}_1\notag\\
&&+\norm{A^2\ot B^{\ot 2}}_\infty\norm{\rho^{\ot 3}-\sigma^{\ot
3}}_1+\norm{B^2\ot A^{\ot 2}}_\infty\norm{\rho^{\ot 3}-\sigma^{\ot
3}}_1.
\end{eqnarray}
Since
\begin{eqnarray}
\norm{\rho^{\ot k} - \sigma^{\ot k}}_1\leqslant
k\norm{\rho-\sigma}_1,
\end{eqnarray}
it follows that
\begin{eqnarray}
\abs{f(\rho) - f(\sigma)} \leqslant
\Pa{12\norm{A}^2_\infty\norm{B}^2_\infty}\norm{\rho-\sigma}_1
\end{eqnarray}

For the pure states, that is, $\rho=\out{\psi}{\psi}$ and
$\sigma=\out{\phi}{\phi}$, we have $\norm{\psi-\phi}_1 \leqslant
\sqrt{2}\norm{\psi-\phi}_2$, implying
\begin{eqnarray}
\abs{f(\psi) - f(\phi)} \leqslant L\cdot\norm{\psi-\phi}_2,
\end{eqnarray}
where $L:=12\sqrt{2}\norm{A}^2_\infty\norm{B}^2_\infty$. Note here
that $k=2d-1$ since pure states live in $\complex^d$. Then
\begin{eqnarray}
\mathbf{Pr}\Set{\abs{f(\psi)-\bar f}>\epsilon}\leqslant
2\exp\Pa{-\frac{d\epsilon^2}{1296\pi^3\norm{A}^4_\infty\norm{B}^4_\infty\ln2}}.
\end{eqnarray}
When $\norm{A}_\infty$ and $\norm{B}_\infty$ are independent of the
dimension $d$, it shows the concentration of measure phenomenon.

In fact, we can view $\rho$ and $\sigma$ in $\density{\cH_d}$ as
reduced states of Haar-distributed bipartite states
$\ket{\Psi_\rho}$ and $\ket{\Psi_\sigma}$ in $\cH_d\ot\cH_d$, then
let $g(\Psi_\rho)=\Delta A(\rho)^2\cdot \Delta B(\rho)^2$, where
$\Psi_\rho=\out{\Psi_\rho}{\Psi_\rho}$ and
$\rho=\Ptr{2}{\out{\Psi_\rho}{\Psi_\rho}}$. Thus
\begin{eqnarray}
\abs{g(\Psi_\rho) - g(\Psi_\sigma)} \leqslant
(12\sqrt{2}\norm{A}^2_\infty\norm{B}^2_\infty)\norm{\Psi_\rho -
\Psi_\sigma}_2.
\end{eqnarray}
Then
\begin{eqnarray}
\mathbf{Pr}\Set{\abs{f(\rho)-\bar f}>\epsilon}\leqslant
2\exp\Pa{-\frac{d^2\epsilon^2}{1296\pi^3\norm{A}^4_\infty\norm{B}^4_\infty\ln2}}.
\end{eqnarray}
Thus we have the following:
\begin{thrm}[Concentration of measure]
Assume that $\norm{A}_\infty$ and $\norm{B}_\infty$ are independent
of dimension, where $A$ and $B$ are bounded observables. It holds
that
\begin{eqnarray}
\mathbf{Pr}\Set{\abs{\Delta A(\psi)^2\cdot \Delta B(\psi)^2 -
\langle \Delta A(\psi)^2\cdot \Delta B(\psi)^2
\rangle}>\epsilon}\leqslant
2\exp\Pa{-\frac{d\epsilon^2}{1296\pi^3\norm{A}^4_\infty\norm{B}^4_\infty\ln2}}
\end{eqnarray}
and
\begin{eqnarray}
\mathbf{Pr}\Set{\abs{\Delta A(\rho)^2\cdot \Delta B(\rho)^2 -
\langle \Delta A(\rho)^2\cdot \Delta B(\rho)^2
\rangle}>\epsilon}\leqslant
2\exp\Pa{-\frac{d^2\epsilon^2}{1296\pi^3\norm{A}^4_\infty\norm{B}^4_\infty\ln2}}.
\end{eqnarray}
Here $\langle f(\rho)\rangle =\int f(\rho) \dif\mu_{\rH\rS}(\rho)$.
\end{thrm}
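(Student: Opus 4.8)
The plan is to recognize $f(\rho)=\Delta A(\rho)^2\cdot\Delta B(\rho)^2$ as a function with a controlled Lipschitz constant on the relevant sphere and then to invoke L\'{e}vy's lemma; the two concentration rates will differ only through the real dimension of the sphere on which the function is defined, namely $2d-1$ for pure states in $\complex^d$ and $2d^2-1$ once mixed states are realized as partial traces of bipartite purifications living in $\cH_d\ot\cH_d$.

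First I would read off from \eqref{eq:prod-uncertainty} that for any two states $\rho$ and $\sigma$ the difference $f(\rho)-f(\sigma)$ is a sum of four terms of the shape $\Tr{M\Pa{\rho^{\ot k}-\sigma^{\ot k}}}$ with $k\in\set{2,3,4}$ and $M$ assembled from $A^2,B^2,A^{\ot 2},B^{\ot 2}$. Estimating each term by H\"{o}lder's inequality $\abs{\Tr{MX}}\leqslant\norm{M}_\infty\norm{X}_1$, using multiplicativity of the operator norm under tensor products together with $\norm{A^2}_\infty=\norm{A}^2_\infty$, and applying the telescoping bound $\norm{\rho^{\ot k}-\sigma^{\ot k}}_1\leqslant k\norm{\rho-\sigma}_1$, I expect the four contributions to add up to $\abs{f(\rho)-f(\sigma)}\leqslant 12\norm{A}^2_\infty\norm{B}^2_\infty\norm{\rho-\sigma}_1$; that is, $f$ is Lipschitz in the trace distance with constant $12\norm{A}^2_\infty\norm{B}^2_\infty$.

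For the pure-state inequality I would put $\rho=\out{\psi}{\psi}$, $\sigma=\out{\phi}{\phi}$ and pass to the Euclidean norm on the sphere using $\norm{\out{\psi}{\psi}-\out{\phi}{\phi}}_1\leqslant\sqrt 2\,\norm{\ket{\psi}-\ket{\phi}}_2$, so that $f$ becomes Lipschitz on the unit sphere of $\complex^d$ --- a real $(2d-1)$-sphere --- with constant $L=12\sqrt 2\,\norm{A}^2_\infty\norm{B}^2_\infty$. Substituting $L$ and $k=2d-1$ into L\'{e}vy's lemma then yields the first bound: here $L^2=288\norm{A}^4_\infty\norm{B}^4_\infty$, the factor $k+1=2d$ supplies the linear dependence on $d$, and $9\pi^3L^2/2=1296\pi^3\norm{A}^4_\infty\norm{B}^4_\infty$ is the remaining constant. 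For the mixed-state inequality I would purify: by the construction recalled in Section~\ref{sect:metric-on-state-space} (the case $d=k$ there), $\mu_{\rH\rS}$ is the image of the uniform measure on unit vectors $\ket{\Psi}\in\cH_d\ot\cH_d$ under the partial trace $\ket{\Psi}\mapsto\rho=\Ptr{2}{\out{\Psi}{\Psi}}$, and since this partial trace is a contraction for the trace norm, the function $g(\Psi):=f\Pa{\Ptr{2}{\out{\Psi}{\Psi}}}$ is Lipschitz on the $(2d^2-1)$-sphere with the \emph{same} constant $L$. Applying L\'{e}vy's lemma with $k=2d^2-1$ and transporting the concentration event back through the partial trace gives the second bound, the factor $k+1=2d^2$ now producing the quadratic dependence on $d$.

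The delicate point is not any individual estimate but the bookkeeping: one should confirm the telescoping inequality for tensor powers of density operators (it follows from $\rho^{\ot k}-\sigma^{\ot k}=\sum_{j=1}^k\rho^{\ot(j-1)}\ot(\rho-\sigma)\ot\sigma^{\ot(k-j)}$ and $\norm{\rho}_\infty,\norm{\sigma}_\infty\leqslant 1$), keep the factor $\sqrt 2$ in the pure-state norm comparison, and --- this is exactly what makes the mixed-state exponent scale as $d^2$ rather than $d$ --- realize the Hilbert-Schmidt ensemble through purifications with \emph{both} subsystems of dimension $d$, so that the induced law on reduced states is genuinely $\mu_{\rH\rS}$. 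Once those are in hand the numerical constants assemble into the two inequalities exactly as stated.
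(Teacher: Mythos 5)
Your proposal is correct and follows essentially the same route as the paper: the four-term decomposition from \eqref{eq:prod-uncertainty} with H\"{o}lder and the telescoping bound $\norm{\rho^{\ot k}-\sigma^{\ot k}}_1\leqslant k\norm{\rho-\sigma}_1$ to get the Lipschitz constant $12\norm{A}^2_\infty\norm{B}^2_\infty$, the $\sqrt{2}$ comparison to the Euclidean norm, and L\'{e}vy's lemma on the $(2d-1)$-sphere for pure states and, via purification in $\cH_d\ot\cH_d$, on the $(2d^2-1)$-sphere for the Hilbert--Schmidt ensemble. The only difference is presentational: you make explicit the trace-norm contractivity of the partial trace and the identification of $\mu_{\rH\rS}$ as the pushforward of the uniform spherical measure, which the paper leaves implicit.
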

Generally, observables $A$ and $B$ are dimension-dependent, thus we
cannot obtain the concentration of measure phenomenon universally.
But of course, even though $A$ and $B$ are dimension-dependent, we
could still get the concentration of measure phenomenon, for
instance, whenever their operator norms are uniformly bounded.
Besides, inequalities presented above do not have to be tight, i.e.,
even if the right hand side is "large", the relevant left hand side
might still be very small.

\section{Concluding remarks}

This paper deals with uncertainty relations in various random state
ensembles. As suggested, taking a state at random also corresponds
to assuming minimal prior knowledge about the system in question. We
make an attempt in describing uncertainty relation using only
observables by taking average of uncertainty-product of any two
bounded observables in our random state ensemble (see
\eqref{eq:positive-diference-pure}). We also establish the
typicality of a random state with respect to any two bounded
observables under restricted conditions. The concentration of
measure phenomenon is a very important property for a random state
since it predicates the bulk behavior of a large number of quantum
particles without any practical detections. Theoretically, sampled
states randomly will show up average behavior with respect to a pair
of bounded observables as we increases the level of the quantum
system under consideration. In addition, we have also present an
interesting result: beyond the set of zero-measure of all pure qubit
states, it holds that
\begin{eqnarray}
\Delta A(\psi)^2\cdot\Delta B(\psi)^2 =
\Pa{\langle\set{A,B}\rangle_\psi-\langle A\rangle_\psi\langle
B\rangle_\psi}^2 + \langle [A,B]\rangle^2_\psi.
\end{eqnarray}
This result indicates that any desire to improve the
uncertainty-product universally seems impossible, at least in the
qubit case for two distinct observables $\sigma_i$ and $\sigma_j$
chosen from three Pauli's matrices. Our calculations can help us
check how large the gap is between the uncertainty-product and any
obtained lower bounds about the uncertainty-product. We hope the
results obtained in this paper will shed new light on quantum
information processing tasks.

\subsubsection*{Acknowledgements}
L.Z. is supported Natural Science Foundation of Zhejiang Province of
China (LY17A010027), and by National Natural Science Foundation of
China (No.11301124), and also supported by the cross-disciplinary
innovation team building project of Hangzhou Dianzi University.
J.M.W. is also supported by NSFC (No.11401007). L.Z. also would like
to thank Shao-Ming Fei and Naihuan Jing for providing some important
remarks on this paper, and thank both Yichen Huang and S. Salimi for
their comments on the first version of our manuscript. Both authors
would like to thank the Referee for reading our manuscript very
carefully and for insightful comments in improving the presentation
of the results.


\section{Appendix: the computation of $\sE_k(\Lambda)$}\label{sect:appendix}

Consider a system of $k$ qudits, each with a standard local
computational basis $\set{\ket{i},i=1,\ldots,d}$. The Schur-Weyl
duality relates transforms on the system performed by local
$d$-dimensional unitary operations to those performed by permutation
of the qudits. Recall that the symmetric group $S_k$ is the group of
all permutations of $k$ objects. This group is naturally represented
in our system by
\begin{eqnarray}
\bP(\pi)\ket{i_1\cdots i_k} := \ket{i_{\pi^{-1}(1)}\cdots
i_{\pi^{-1}(k)}},
\end{eqnarray}
where $\pi\in S_k$ is a permutation and $\ket{i_1\cdots i_k}$ is
shorthand for $\ket{i_1}\ot\cdots\ot\ket{i_k}$. Let $\unitary{d}$ be
the group of $d\times d$ unitary operators. This group is naturally
represented in our system by
\begin{eqnarray}
\bQ(U)\ket{i_1\cdots i_k} := U\ket{i_1}\ot\cdots\ot U\ket{i_k},
\end{eqnarray}
where $U\in\unitary{d}$. Thus we have the following famous result:
\begin{thrm}[Schur]\label{th-schur}
Let $\cA = \spn\Set{\bP(\pi): \pi\in S_k}$ and $\cB =
\spn\Set{\bQ(U): U\in\unitary{d}}$. Then:
\begin{eqnarray}
\cA' = \cB\quad\text{and}\quad \cA = \cB'
\end{eqnarray}
\end{thrm}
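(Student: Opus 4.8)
\emph{Proof sketch (plan).} The plan is to prove the single commutation identity $\cA'=\cB$; the companion identity $\cA=\cB'$ then follows at once, because $\cA$ is a unital, $\ast$-closed subalgebra of $\End((\complex^d)^{\otimes k})$ (it is the image of the group algebra $\complex[S_k]$ under $\pi\mapsto\bP(\pi)$, and $\bP(\pi)^{\dagger}=\bP(\pi^{-1})$), so the finite-dimensional double commutant theorem gives $\cA''=\cA$, and applying $(\cdot)'$ to $\cA'=\cB$ produces $\cA=\cA''=\cB'$. Of the two inclusions making up $\cA'=\cB$, one is immediate: $\bQ(U)=U^{\otimes k}$ acts diagonally on the $k$ tensor legs and hence commutes with every reshuffling operator $\bP(\pi)$, so passing to linear spans gives $\cB\subseteq\cA'$. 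All the work is in the reverse inclusion $\cA'\subseteq\cB$.

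For that inclusion I would go through the canonical isomorphism $\End((\complex^d)^{\otimes k})\cong(\End(\complex^d))^{\otimes k}$ which regroups the $j$-th ``output'' factor with the $j$-th ``input'' factor of an operator. Under this identification the conjugation $M\mapsto\bP(\pi)\,M\,\bP(\pi)^{-1}$ turns into the ordinary permutation of the $k$ tensor legs of $(\End(\complex^d))^{\otimes k}$, so that the commutant $\cA'$ — the operators fixed by all these conjugations — is carried precisely onto the symmetric subspace $\mathrm{Sym}^k(\End(\complex^d))$, while $U^{\otimes k}$ is carried onto the diagonal tensor $U\otimes U\otimes\cdots\otimes U$. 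Consequently $\cA'\subseteq\cB$ becomes the assertion that the diagonal tensors $U^{\otimes k}$, $U\in\unitary{d}$, already span all of $\mathrm{Sym}^k(\End(\complex^d))$.

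I would settle that assertion with two standard facts. First, for any finite-dimensional complex vector space $W$ the symmetric power $\mathrm{Sym}^k(W)$ is spanned by the pure powers $\{w^{\otimes k}:w\in W\}$: the polarization identity $\sum_{\emptyset\neq S\subseteq\{1,\dots,k\}}(-1)^{k-|S|}(\sum_{i\in S}w_i)^{\otimes k}=k!\,(w_1\odot\cdots\odot w_k)$ expresses every symmetrized product as a linear combination of $k$-th powers, and $k!\neq 0$ in characteristic zero; with $W=\End(\complex^d)$ this identifies $\mathrm{Sym}^k(\End(\complex^d))$ with $\spn\{X^{\otimes k}:X\in\End(\complex^d)\}$. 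Second, that span is unchanged if $X$ is restricted to $\unitary{d}$: a linear functional $\varphi$ annihilating every $U^{\otimes k}$ makes $X\mapsto\varphi(X^{\otimes k})$ a holomorphic polynomial on $\End(\complex^d)$ vanishing on $\unitary{d}$, and since $\unitary{d}$ is Zariski-dense in $\mathrm{GL}_d(\complex)$ (its Zariski closure is a complex algebraic group whose Lie algebra contains the anti-Hermitian matrices and their $i$-multiples, hence is all of $\End(\complex^d)$), which is in turn Zariski-dense in $\End(\complex^d)$, the polynomial vanishes identically, so $\varphi$ also kills $\spn\{X^{\otimes k}\}$; finite-dimensional duality then gives $\spn\{U^{\otimes k}:U\in\unitary{d}\}=\spn\{X^{\otimes k}:X\in\End(\complex^d)\}=\mathrm{Sym}^k(\End(\complex^d))$, which is $\cA'$. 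The main obstacle is bookkeeping rather than depth: one must set up the regrouping isomorphism and the induced $S_k$-action correctly and check that $U^{\otimes k}$ is sent to a diagonal tensor on each side. The only genuinely analytic ingredient is the density of $\unitary{d}$; note that the heavier representation-theoretic machinery (Weingarten functions, $S_k$-characters) is not needed for this theorem and enters only afterwards, in the explicit evaluation of the moments $\sE_k(\Lambda)$.
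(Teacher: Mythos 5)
Your proof sketch is correct, but note that the paper itself does not prove this statement at all: Theorem~\ref{th-schur} is quoted in the appendix as a classical background fact (``the following famous result''), on the same footing as Schur--Weyl duality and the hook length formula, and is used only to justify the structure of the integrals $\sE_k(\Lambda)$. So there is no paper proof to compare against; what you have supplied is the standard self-contained argument. Your route is the canonical one: the easy inclusion $\cB\subseteq\cA'$ from the fact that $U^{\otimes k}$ commutes with the leg permutations; the identification of $\cA'$ with the symmetric tensors in $\bigl(\End(\complex^d)\bigr)^{\otimes k}$ via the regrouping isomorphism; polarization to reduce the spanning question to pure powers $X^{\otimes k}$; Zariski density of $\unitary{d}$ in $\mathrm{GL}_d(\complex)$ (and of $\mathrm{GL}_d$ in $\End(\complex^d)$) to pass from arbitrary $X$ to unitary $U$; and finally the finite-dimensional double commutant theorem, applicable because $\cA$ is the unital $\ast$-closed image of $\complex[S_k]$, to convert $\cA'=\cB$ into $\cA=\cB'$. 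All of these steps are sound, and your observation that the Weingarten/character machinery of the paper is not needed for this theorem itself is accurate --- that machinery enters only in the subsequent explicit evaluation of $\sE_k(\Lambda)$ for $k\leqslant 4$.
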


The following result concerns with a wonderful decomposition of the
representations on $k$-fold tensor space $(\complex^d)^{\ot k}$ of
$\unitary{d}$ and $S_k$, respectively, using their corresponding
irreps accordingly.

\begin{thrm}[Schur-Weyl duality]\label{th:S-W-Duality}
There exist a basis, known as Schur basis, in which representation
$\Pa{\bQ\bP,(\complex^d)^{\ot k}}$ of $\unitary{d}\times S_k$
decomposes into irreducible representations $\bQ_\lambda$ and
$\bP_\lambda$ of $\unitary{d}$ and $S_k$, respectively:
\begin{enumerate}[(i)]
\item $(\complex^d)^{\ot k}\cong \bigoplus_{\lambda\vdash(k,d)} \bQ_\lambda\ot
\bP_\lambda$;
\item $\bP(\pi)\cong \bigoplus_{\lambda\vdash(k,d)} \I_{\bQ_\lambda}\ot
\bP_\lambda(\pi)$;
\item $\bQ(U)\cong \bigoplus_{\lambda\vdash(k,d)} \bQ_\lambda(U)\ot
\I_{\bP_\lambda}$.
\end{enumerate}
Since $\bQ$ and $\bP$ commute, we can define representation
$\Pa{\bQ\bP,(\complex^d)^{\ot k}}$ of $\unitary{d}\times S_k$ as
\begin{eqnarray}
\bQ\bP(U,\pi) = \bQ(U)\bP(\pi) = \bP(\pi)\bQ(U)\quad \forall
(U,\pi)\in \unitary{d}\times S_k.
\end{eqnarray}
Then:
\begin{eqnarray}\label{schur-duality}
\bQ\bP(U,\pi) = U^{\ot k}P_\pi = P_\pi U^{\ot k}\cong
\bigoplus_{\lambda\vdash(k,d)} \bQ_\lambda(U)\ot \bP_\lambda(\pi).
\end{eqnarray}
\end{thrm}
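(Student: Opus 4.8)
The plan is to deduce the Schur--Weyl decomposition from the double commutant statement of Theorem~\ref{th-schur} together with the structure theory of semisimple algebras (Artin--Wedderburn). First I would note that $\cA=\spn\Set{\bP(\pi):\pi\in S_k}$ is a unital $*$-subalgebra of $\Lin{(\complex^d)^{\ot k}}$: it is closed under products because $S_k$ is a group, and closed under the adjoint because each $\bP(\pi)$ is a permutation unitary with $\bP(\pi)^\dagger=\bP(\pi^{-1})$. A $*$-closed subalgebra of a full matrix algebra is semisimple, so Artin--Wedderburn gives an orthogonal decomposition $(\complex^d)^{\ot k}\cong\bigoplus_{\lambda}\bQ_\lambda\ot\bP_\lambda$ on which $\cA$ acts as $\bigoplus_\lambda \I_{\bQ_\lambda}\ot\End(\bP_\lambda)$, the index $\lambda$ running over the isomorphism classes of irreducible $S_k$-modules that actually occur, $\bP_\lambda$ a copy of such a module, and $\bQ_\lambda$ the associated multiplicity space. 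This already yields parts~(i) and~(ii); what remains is to identify each $\bQ_\lambda$ as an irreducible $\unitary{d}$-module, to check these are pairwise inequivalent, and to pin down the index set.

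For the $\unitary{d}$-side I would invoke Theorem~\ref{th-schur}, which says $\cB=\cA'$ and $\cA=\cB'$, where $\cB=\spn\Set{\bQ(U):U\in\unitary{d}}$. Computing the commutant of the block algebra $\bigoplus_\lambda \I_{\bQ_\lambda}\ot\End(\bP_\lambda)$ inside $\Lin{\bigoplus_\lambda\bQ_\lambda\ot\bP_\lambda}$ gives exactly $\bigoplus_\lambda\End(\bQ_\lambda)\ot\I_{\bP_\lambda}$; hence $\cB$ has this form, which is precisely part~(iii) with $\bQ(U)\cong\bigoplus_\lambda \bQ_\lambda(U)\ot\I_{\bP_\lambda}$. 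Irreducibility of $\bQ_\lambda$ then comes from the second half of Theorem~\ref{th-schur}: since $\cA=\cB'$ is the \emph{full} commutant of $\cB$, the operators $\bQ_\lambda(U)$ must generate all of $\End(\bQ_\lambda)$ on each block, i.e.\ act irreducibly; and two blocks with $\lambda\ne\mu$ cannot be equivalent $\unitary{d}$-modules, else $\cB=\cA'$ would contain an off-diagonal intertwiner, contradicting its computed block form. (Implicit here is that $\spn\Set{\bQ(U):U\in\unitary{d}}$ is the same whether $U$ ranges over the unitaries or over all of $\Lin{\complex^d}$, a polarization argument; this is already built into the statement of Theorem~\ref{th-schur}.)

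It remains to match the index set with $\Set{\lambda\vdash(k,d)}$. The irreducible representations of $S_k$ are labelled by Young diagrams $\lambda\vdash k$, so it suffices to determine for which $\lambda$ the multiplicity space $\bQ_\lambda$ is nonzero. Taking the image of a Young symmetrizer for $\lambda$ applied to a suitable basis tensor of $(\complex^d)^{\ot k}$, one sees this image vanishes as soon as a column of $\lambda$ has more than $d$ boxes (antisymmetrizing more than $d$ vectors of $\complex^d$ gives zero), and conversely is nonzero when $\lambda$ has at most $d$ rows; thus the surviving $\lambda$ are exactly those with $\ell(\lambda)\le d$, which is the meaning of $\lambda\vdash(k,d)$. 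Assembling orthonormal bases of the summands $\bQ_\lambda$ and $\bP_\lambda$ into one basis of $(\complex^d)^{\ot k}$ produces the Schur basis, and the final displayed identity $\bQ\bP(U,\pi)=U^{\ot k}P_\pi\cong\bigoplus_\lambda\bQ_\lambda(U)\ot\bP_\lambda(\pi)$ is the simultaneous refinement of~(ii) and~(iii). The main obstacle I expect is not a single deep step but the careful bookkeeping of the block-commutant computation --- proving rigorously that the commutant of $\bigoplus_\lambda\I\ot\End(\bP_\lambda)$ is $\bigoplus_\lambda\End(\bQ_\lambda)\ot\I$ and that double-commutant forces fullness on every block --- together with verifying the row-length criterion that selects the diagrams $\lambda\vdash(k,d)$.
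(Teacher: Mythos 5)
Your argument is essentially correct, and it is worth noting that the paper itself offers no proof of this theorem: it is quoted as a classical result, with only the double commutant statement (Theorem~\ref{th-schur}) recorded beforehand, so there is no in-paper argument to compare against. Your route --- Artin--Wedderburn applied to the semisimple $*$-algebra $\cA=\spn\Set{\bP(\pi):\pi\in S_k}$, giving the isotypic decomposition $\bigoplus_\lambda\bQ_\lambda\ot\bP_\lambda$ with $\cA$ acting as $\bigoplus_\lambda\I_{\bQ_\lambda}\ot\End(\bP_\lambda)$; the block-commutant computation identifying $\cA'$ with $\bigoplus_\lambda\End(\bQ_\lambda)\ot\I_{\bP_\lambda}$; the identification $\cB=\cA'$ from Theorem~\ref{th-schur}; and the Young-symmetrizer criterion $\ell(\lambda)\leqslant d$ selecting the surviving multiplicity spaces --- is the standard derivation and fills in exactly what the paper delegates to the literature. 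One small correction on the inequivalence step: an intertwiner $T:\bQ_\lambda\to\bQ_\mu$ of $\unitary{d}$-modules, extended by an arbitrary map $\bP_\lambda\to\bP_\mu$ and by zero elsewhere, commutes with every $\bQ(U)$ and therefore lies in $\cB'=\cA$, not in $\cB=\cA'$ as you wrote; the contradiction is with the block-diagonal form $\cA=\bigoplus_\lambda\I_{\bQ_\lambda}\ot\End(\bP_\lambda)$, i.e.\ with the second half of Theorem~\ref{th-schur}. (Your use of $\cA=\cB'$ for irreducibility of each $\bQ_\lambda$ is fine, since a nontrivial $\cB$-invariant subspace of $\bQ_\lambda$ would produce a projection in $\cB'$ not of the form $\I\ot(\cdot)$; alternatively it follows at once from $\cB=\cA'$ acting as the full $\End(\bQ_\lambda)$ on each block.) With that attribution straightened out, the sketch is complete at the level of detail appropriate for a quoted classical theorem.
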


The dimensions of pairing irreps for $\unitary{d}$ and $S_k$,
respectively, in Schur-Weyl duality can be computed by so-called
\emph{hook length formulae}. The hook of box $(i, j)$ in a Young
diagram determined by a partition $\lambda$ is given by the box
itself, the boxes to its right and below. The hook length is the
number of boxes in a hook. Specifically, we have the following
result without its proof:
\begin{thrm}[Hook length formulae]
The dimensions of pairing irreps for $\unitary{d}$ and $S_k$,
respectively, in Schur-Weyl duality can be given as follows:
\begin{eqnarray}
\dim(\bQ_\lambda) &=& \prod_{(i,j)\in\lambda}\frac{d+j-i}{h(i,j)} =
\prod_{1\leqslant i<j\leqslant
d}\frac{\lambda_i - \lambda_j + j-i}{j-i}, \\
\dim(\bP_\lambda) &=& \frac{k!}{\prod_{(i,j)\in\lambda}h(i,j)}.
\end{eqnarray}
\end{thrm}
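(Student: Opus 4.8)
The two identities are the classical hook-length formulae, and the plan is to reduce both of them to one combinatorial identity relating the hook lengths of $\lambda$ to its shifted parts. By the Schur--Weyl duality of Theorem~\ref{th:S-W-Duality}, $\bP_\lambda$ is the irreducible $S_k$-module labelled by $\lambda$ and $\bQ_\lambda$ is the irreducible polynomial $\unitary{d}$-module of highest weight $\lambda$; hence $\dim(\bP_\lambda)=f^\lambda$, the number of standard Young tableaux of shape $\lambda$, while $\dim(\bQ_\lambda)$ is the character of $\bQ_\lambda$ at the identity, i.e.\ the Schur polynomial $s_\lambda$ in $d$ variables evaluated at $x_1=\cdots=x_d=1$, which counts semistandard tableaux of shape $\lambda$ with entries in $\set{1,\ldots,d}$. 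So it suffices to prove (a) $f^\lambda=k!\big/\prod_{(i,j)\in\lambda}h(i,j)$, and (b) $s_\lambda(1,\ldots,1)=\prod_{(i,j)\in\lambda}\frac{d+j-i}{h(i,j)}=\prod_{1\leqslant i<j\leqslant d}\frac{\lambda_i-\lambda_j+j-i}{j-i}$.

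Let $n$ be any integer at least the number of nonzero parts of $\lambda$ and put $\ell_i=\lambda_i+n-i$ (the beta-numbers), so $\ell_1>\cdots>\ell_n\geqslant0$. The key lemma is
\[
\prod_{(i,j)\in\lambda}h(i,j)=\frac{\prod_{i=1}^{n}\ell_i!}{\prod_{1\leqslant i<j\leqslant n}(\ell_i-\ell_j)},
\]
which I would prove by the standard ``missing beta-number'' bookkeeping: in row $i$ the hook lengths that occur form precisely the set $\set{1,\ldots,\ell_i}\setminus\set{\ell_i-\ell_j:\ j>i}$, so the product over row $i$ equals $\ell_i!\big/\prod_{j>i}(\ell_i-\ell_j)$, and multiplying over $i$ gives the claim; alternatively one can induct on $n$ by deleting the bottom row of $\lambda$ and tracking the first-column hooks.

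Given this lemma, (a) follows on combining it with the Frobenius determinantal formula $f^\lambda=k!\,\prod_{i<j}(\ell_i-\ell_j)\big/\prod_{i=1}^{n}\ell_i!$, which comes from the Specht-module construction or, inductively, from the branching rule $f^\lambda=\sum_{c}f^{\lambda\setminus c}$ over removable corners $c$ together with the matching recursion for the hook product; one may instead invoke the Greene--Nijenhuis--Wilf hook-walk argument, which establishes (a) directly with no determinants. For (b) I would start from the bialternant (Weyl character) formula $s_\lambda(x_1,\ldots,x_d)=\det\!\bigl(x_j^{\lambda_i+d-i}\bigr)_{i,j}\big/\det\!\bigl(x_j^{d-i}\bigr)_{i,j}$, whose denominator is the Vandermonde $\prod_{1\leqslant i<j\leqslant d}(x_i-x_j)$. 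Letting $x_i=1+\varepsilon u_i$ and comparing the leading $\varepsilon$-orders of numerator and denominator (a confluent-Vandermonde computation, with $n=d$) produces the ratio form $\prod_{i<j}\frac{\ell_i-\ell_j}{(d-i)-(d-j)}=\prod_{i<j}\frac{\lambda_i-\lambda_j+j-i}{j-i}$; the hook-content form then follows either from the principal specialization $s_\lambda(1,q,\ldots,q^{d-1})$ as $q\to1$, or simply from the lemma above with $n=d$. (Self-contained alternative: both products in (b) satisfy the interlacing/corner recursion of $\dim(\bQ_\lambda)$ coming from $\unitary{d}\downarrow\unitary{d-1}$ branching and agree on the empty diagram.)

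\textbf{Main obstacle.} Two technical points will absorb most of the work: justifying that the indeterminate $x_i\to1$ (resp.\ $q\to1$) limit of the bialternant equals exactly the claimed product, which requires a careful confluent-Vandermonde / L'H\^{o}pital-type evaluation of a generalized Vandermonde determinant; and the ``missing beta-number'' bookkeeping in the displayed lemma, which is elementary but needs precise tracking of which integers in $\set{1,\ldots,\ell_i}$ do and do not appear as row-$i$ hook lengths. Once these are settled, assembling (a) and (b) is purely formal.
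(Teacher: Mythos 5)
The paper states this theorem explicitly ``without its proof'' --- it is quoted as a classical fact (the Frame--Robinson--Thrall hook length formula for $S_k$ and the hook-content/Weyl dimension formula for $\unitary{d}$), so there is no in-paper argument to compare against. Your outline is the standard textbook route and is correct as a plan: the identification $\dim(\bP_\lambda)=f^\lambda$ and $\dim(\bQ_\lambda)=s_\lambda(1,\ldots,1)$ is right, the beta-number lemma $\prod_{(i,j)\in\lambda}h(i,j)=\prod_i\ell_i!\big/\prod_{i<j}(\ell_i-\ell_j)$ is correct (I checked it on $\lambda=(2,1)$, and the ``missing hook lengths are $\ell_i-\ell_j$'' bookkeeping is the classical argument), and combining it with the Frobenius determinantal formula and the bialternant specialization does yield both displayed identities, including the equality of the hook-content product with the $\prod_{i<j}\frac{\lambda_i-\lambda_j+j-i}{j-i}$ form. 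Two caveats: first, what you have written is a proof \emph{plan}, not a proof --- the two steps you flag (the confluent-Vandermonde limit $x_i\to1$ and the precise verification of which integers in $\{1,\ldots,\ell_i\}$ occur as row-$i$ hooks) are exactly where all the content lives, and you would need to supply them in full for this to count as a proof; second, you should be careful that the Frobenius formula itself is not independently ``free'' --- if you derive it from the branching rule you need the matching recursion for $\prod_i\ell_i!/\prod_{i<j}(\ell_i-\ell_j)$ under removing a corner, which is another nontrivial (if elementary) verification. Given that the paper treats the result as citable background, either filling in these steps or simply citing Macdonald or Fulton--Harris would be the appropriate resolution.
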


In \cite{zhang2014arxiv}, Schur-Weyl duality is employed to give a
computation about the integral of the following form:
\begin{eqnarray}
\int_{\unitary{d}} U^{\ot k}M (U^{\ot k})^\dagger
\dif\mu_{\mathrm{Haar}}(U).
\end{eqnarray}
Moreover we have obtained that
\begin{eqnarray}
\int_{\unitary{d}} U^{\ot k}M (U^{\ot
k})^\dagger\dif\mu_{\mathrm{Haar}}(U) = \Pa{\sum_{\pi\in
S_k}\Tr{MP(\pi)}P(\pi^{-1})}\Pa{\sum_{\pi\in
S_k}\mathrm{Wg}(\pi)P(\pi^{-1})},
\end{eqnarray}
where \emph{Weingarten function} $\mathrm{Wg}$ is defined over $S_k$
by
\begin{eqnarray}\label{eq:weingarten}
\mathrm{Wg}(\pi) := \frac1{(k!)^2}\sum_{\lambda\vdash
k}\frac{\dim(\bP_\lambda)^2}{\dim(\bQ_\lambda)}\chi_\lambda(\pi)
\end{eqnarray}
for each $\pi\in S_k$ and $\chi_\lambda(\pi) = \Tr{P_\lambda(\pi)}$
is the value of the character of irrep $\bP_\lambda$ at $\pi\in
S_k$.

Here we consider a special case where the above-mentioned
$M=\Lambda^{\ot k}$ for a given spectrum $\Lambda$ and any natural
number $k$, thus we introduce a new symbol for convenience:
\begin{eqnarray}
\sE_k(\Lambda):= \int (U\Lambda U^\dagger)^{\ot k}
\dif\mu_{\mathrm{Haar}}(U).
\end{eqnarray}
Throughout this paper, we frequently leave out the integral domain
$\unitary{d}$ when we consider matrix integral taken over the whole
unitary group $\unitary{d}$ unless stated otherwise. We see that
\begin{eqnarray}
\sE_k(\Lambda) =\sum_{\lambda\vdash(k,d)}\frac{\Tr{\Lambda^{\ot
k}C_\lambda}}{\Tr{C_\lambda}}C_\lambda,
\end{eqnarray}
where
\begin{eqnarray}
C_\lambda:=\frac{\dim(\bP_\lambda)}{k!}\sum_{\pi\in
S_k}\chi_\lambda(\pi)P(\pi).
\end{eqnarray}

\subsection{The case where $k=1,2$.}

It is already known in \cite{zhang2014arxiv} that
\begin{eqnarray}
\int UXU^\dagger \dif\mu_{\mathrm{Haar}}(U) = \frac{\Tr{X}}d\I_d
\end{eqnarray}
and
\begin{eqnarray}
&&\int (U\ot U)M(U\ot U)^\dagger \dif\mu_{\mathrm{Haar}}(U)\notag\\
&&= \Pa{\frac{\Tr{M}}{d^2-1} - \frac{\Tr{MF}}{d(d^2-1)}}\I_d\ot\I_d
- \Pa{\frac{\Tr{M}}{d(d^2-1)} - \frac{\Tr{MF}}{d^2-1}}F,
\end{eqnarray}
where $F:=\sum_{i,j=1}^d\out{ji}{ij}$ is called a swap operator.
Thus
\begin{eqnarray}
\sE_1(\Lambda)&=&\frac{\I_d}d,\label{e1}\\
\sE_2(\Lambda)&=&\frac1{d^2-1}\Br{\Pa{1-\frac{\Tr{\Lambda^2}}d}\I_d\ot\I_d
-\Pa{\frac1d-\Tr{\Lambda^2}}F}\notag\\
&=&\Delta^{(2)}_2C_{(2)}+\Delta^{(1,1)}_2C_{(1,1)},
\label{e2}\end{eqnarray} where
\begin{eqnarray}
\Delta^{(2)}_2:=\frac{1+\Tr{\Lambda^2}}{d(d+1)},\quad
\Delta^{(1,1)}_2:=\frac{1-\Tr{\Lambda^2}}{(d-1)d}
\end{eqnarray}
and
\begin{eqnarray}
C_\lambda =
\begin{cases}
\frac12(P_{(1)}+P_{(12)}),&\text{if}~\lambda=(2),\\
\frac12(P_{(1)}-P_{(12)}),&\text{if}~\lambda=(1,1).
\end{cases}
\end{eqnarray}

\subsection{The formula of $\sE_3(\Lambda)$}
In what follows, we compute $\sE_3(\Lambda)$. Note that we get the
following decomposition via Schur-Weyl duality
\begin{eqnarray}
(\complex^d)^{\ot3}\cong
\bQ_{(3)}\ot\bP_{(3)}\bigoplus\bQ_{(2,1)}\ot\bP_{(2,1)}\bigoplus\bQ_{(1,1,1)}\ot\bP_{(1,1,1)}
\end{eqnarray}
where
\begin{eqnarray}
\dim(\bQ_\lambda)=
\begin{cases}
\frac{d(d+1)(d+2)}6,&\text{if}~
\lambda=(3),\\
\frac{(d-1)d(d+1)}3, &\text{if}~ \lambda=(2,1),\\
\frac{(d-2)(d-1)d}6,&\text{if}~ \lambda=(1,1,1),
\end{cases}~\text{and}~\dim(\bP_\lambda)=
\begin{cases}
1,&\text{if}~
\lambda=(3),\\
2, &\text{if}~ \lambda=(2,1),\\
1,&\text{if}~ \lambda=(1,1,1).
\end{cases}
\end{eqnarray}
Hence
\begin{eqnarray}
C_\lambda =
\begin{cases}
\frac16\Pa{P_{(1)}+P_{(12)}+P_{(13)}+P_{(23)}+P_{(123)}+P_{(132)}},&\text{if}~
\lambda=(3),\\
\frac13\Pa{2P_{(1)}-P_{(123)}-P_{(132)}},&\text{if}~\lambda=(2,1),\\
\frac16\Pa{P_{(1)}-P_{(12)}-P_{(13)}-P_{(23)}+P_{(123)}+P_{(132)}},&\text{if}~\lambda=(1,1,1).
\end{cases}
\end{eqnarray}
It follows that
\begin{eqnarray}
\Tr{C_\lambda} =
\begin{cases}
\frac{d(d+1)(d+2)}6,&\text{if}~
\lambda=(3),\\
\frac{2(d-1)d(d+1)}3,&\text{if}~\lambda=(2,1),\\
\frac{(d-2)(d-1)d}6,&\text{if}~\lambda=(1,1,1)
\end{cases}
\end{eqnarray}
and
\begin{eqnarray}
\Tr{\Lambda^{\ot 3}C_\lambda} =
\begin{cases}
\frac16\Br{1+3\Tr{\Lambda^2}+2\Tr{\Lambda^3}},&\text{if}~
\lambda=(3),\\
\frac23\Br{1-\Tr{\Lambda^3}},&\text{if}~\lambda=(2,1),\\
\frac16\Br{1-3\Tr{\Lambda^2}+2\Tr{\Lambda^3}},&\text{if}~\lambda=(1,1,1).
\end{cases}
\end{eqnarray}

Therefore
\begin{eqnarray}
\sE_3(\rho) = \Delta^{(3)}_3C_{(3)} + \Delta^{(2,1)}_3C_{(2,1)}
+\Delta^{(1,1,1)}_3C_{(1,1,1)},
\end{eqnarray}
where
\begin{eqnarray}
\Delta^{(3)}_3 &:=& \frac{1+3\Tr{\Lambda^2}+2\Tr{\Lambda^3}}{d(d+1)(d+2)},\\
\Delta^{(2,1)}_3 &:=&\frac{1-\Tr{\Lambda^3}}{(d-1)d(d+1)},\\
\Delta^{(1,1,1)}_3
&:=&\frac{1-3\Tr{\Lambda^2}+2\Tr{\Lambda^3}}{(d-2)(d-1)d}.
\end{eqnarray}

\subsection{The formula of $\sE_4(\Lambda)$}

Similar we get the following decomposition:
\begin{eqnarray}
(\complex^d)^{\ot 4}&\cong& \bQ_{(4)}\ot\bP_{(4)}\bigoplus
\bQ_{(3,1)}\ot\bP_{(3,1)}\bigoplus
\bQ_{(2,2)}\ot\bP_{(2,2)}\nonumber\\
&&\bigoplus \bQ_{(2,1,1)}\ot\bP_{(2,1,1)}\bigoplus
\bQ_{(1,1,1,1)}\ot\bP_{(1,1,1,1)},
\end{eqnarray}
where
\begin{eqnarray}
\dim(\bQ_\lambda)=
\begin{cases}
\frac{d(d+1)(d+2)(d+3)}{24},&\text{if}~
\lambda=(4),\\
\frac{(d-1)d(d+1)(d+2)}8, &\text{if}~ \lambda=(3,1),\\
\frac{(d-1)d^2(d+1)}{12},&\text{if}~
\lambda=(2,2),\\
\frac{(d-2)(d-1)d(d+1)}8,&\text{if}~ \lambda=(2,1,1),\\
\frac{(d-3)(d-2)(d-1)d}{24},&\text{if}~ \lambda=(1,1,1,1),
\end{cases}~\text{and}\quad\dim(\bP_\lambda)=
\begin{cases}
1,&\text{if}~
\lambda=(4),\\
3, &\text{if}~ \lambda=(3,1),\\
2,&\text{if}~
\lambda=(2,2),\\
3,&\text{if}~ \lambda=(2,1,1),\\
1,&\text{if}~ \lambda=(1,1,1,1).
\end{cases}
\end{eqnarray}
Hence we have:
\begin{eqnarray}
C_{(4)} &=&
\frac1{24}P_{(1)}+\frac1{24}\Pa{P_{(12)}+P_{(13)}+P_{(14)}+P_{(23)}+P_{(24)}+P_{(34)}}\notag\\
&&+\frac1{24}\Pa{P_{(12)(34)}+P_{(13)(24)}+P_{(14)(23)}}\notag\\
&&+\frac1{24}\Pa{P_{(123)}+P_{(132)}+P_{(124)}+P_{(142)}+P_{(134)}+P_{(143)}+P_{(234)}+P_{(243)}}\notag\\
&&+\frac1{24}\Pa{P_{(1234)}+P_{(1243)}+P_{(1324)}+P_{(1342)}+P_{(1423)}+P_{(1432)}},
\end{eqnarray}
\begin{eqnarray}
C_{(3,1)} &=&
\frac38P_{(1)}+\frac18\Pa{P_{(12)}+P_{(13)}+P_{(14)}+P_{(23)}+P_{(24)}+P_{(34)}}\notag\\
&&-\frac18\Pa{P_{(12)(34)}+P_{(13)(24)}+P_{(14)(23)}}\notag\\
&&-\frac18\Pa{P_{(1234)}+P_{(1243)}+P_{(1324)}+P_{(1342)}+P_{(1423)}+P_{(1432)}},
\end{eqnarray}
\begin{eqnarray}
C_{(2,2)} &=& \frac16P_{(1)}+\frac16\Pa{P_{(12)(34)}+P_{(13)(24)}+P_{(14)(23)}}\notag\\
&&-\frac1{12}\Pa{P_{(123)}+P_{(132)}+P_{(124)}+P_{(142)}+P_{(134)}+P_{(143)}+P_{(234)}+P_{(243)}},
\end{eqnarray}
\begin{eqnarray}
C_{(2,1,1)} &=&
\frac38P_{(1)}-\frac18\Pa{P_{(12)}+P_{(13)}+P_{(14)}+P_{(23)}+P_{(24)}+P_{(34)}}\notag\\
&&-\frac18\Pa{P_{(12)(34)}+P_{(13)(24)}+P_{(14)(23)}}\notag\\
&&+\frac18\Pa{P_{(1234)}+P_{(1243)}+P_{(1324)}+P_{(1342)}+P_{(1423)}+P_{(1432)}},
\end{eqnarray}
\begin{eqnarray}
C_{(1,1,1,1)} &=&
\frac1{24}P_{(1)}-\frac1{24}\Pa{P_{(12)}+P_{(13)}+P_{(14)}+P_{(23)}+P_{(24)}+P_{(34)}}\notag\\
&&+\frac1{24}\Pa{P_{(12)(34)}+P_{(13)(24)}+P_{(14)(23)}}\notag\\
&&+\frac1{24}\Pa{P_{(123)}+P_{(132)}+P_{(124)}+P_{(142)}+P_{(134)}+P_{(143)}+P_{(234)}+P_{(243)}}\notag\\
&&-\frac1{24}\Pa{P_{(1234)}+P_{(1243)}+P_{(1324)}+P_{(1342)}+P_{(1423)}+P_{(1432)}},
\end{eqnarray}
\begin{eqnarray}
\sE_4(\rho) = \Delta^{(4)}_4C_{(4)}+\Delta^{(3,1)}_4C_{(3,1)}
+\Delta^{(2,2)}_4C_{(2,2)}+\Delta^{(2,1,1)}_4C_{(2,1,1)}
+\Delta^{(1,1,1,1)}_4C_{(1,1,1,1)},
\end{eqnarray}
where
\begin{eqnarray}
\Delta^{(4)}_4&:=&\frac{1+6\Tr{\Lambda^2}+3\Tr{\Lambda^2}^2
+8\Tr{\Lambda^3}+6\Tr{\Lambda^4}}{d(d+1)(d+2)(d+3)},\label{eq:delta4}\\
\Delta^{(3,1)}_4&:=&\frac{1+2\Tr{\Lambda^2}-\Tr{\Lambda^2}^2-2\Tr{\Lambda^4}}{(d-1)d(d+1)(d+2)},\label{eq:delta31}\\
\Delta^{(2,2)}_4&:=&\frac{1+3\Tr{\Lambda^2}^2-4\Tr{\Lambda^3}}{(d-1)d^2(d+1)},\label{eq:delta211}\\
\Delta^{(2,1,1)}_4&:=&\frac{1-2\Tr{\Lambda^2}-\Tr{\Lambda^2}^2+2\Tr{\Lambda^4}}{(d-2)(d-1)d(d+1)},\\
\Delta^{(1,1,1,1)}_4&:=&\frac{1-6\Tr{\Lambda^2}+3\Tr{\Lambda^2}^2+8\Tr{\Lambda^3}-6\Tr{\Lambda^4}}{(d-3)(d-2)(d-1)d}.\label{eq:delta1111}
\end{eqnarray}

\subsubsection{The $(k,d)=(2,2)$ case}

We have
\begin{eqnarray}
\Delta^{(2)}_2=\frac{1+t_2}6,\quad \Delta^{(1,1)}_2=\frac{1-t_2}2.
\end{eqnarray}

\subsubsection{The $(k,d)=(3,2)$ case}

We have
\begin{eqnarray}
\Delta^{(3)}_3 = \frac{1+3t_2+2t_3}{24},\quad \Delta^{(2,1)}_3
:=\frac{1-t_3}6,\quad \Delta^{(1,1,1)}_3=0.
\end{eqnarray}

\subsubsection{The $(k,d)=(4,2)$ case}

We have
\begin{eqnarray}
\Delta^{(4)}_4&=&\frac{1+6t_2+3t^2_2
+8t_3+6t_4}{120},\\
\Delta^{(3,1)}_4&=&\frac{1+2t_2-t^2_2-2t_4}{24},\\
\Delta^{(2,2)}_4&=&\frac{1+3t^2_2-4t_3}{12},\\
\Delta^{(1,1,1,1)}_4&=& 0.
\end{eqnarray}

\subsection{The moment of $\Tr{\rho^k}=t_k$}

In fact, we have already known that
\begin{prop}[\cite{Zyczkowski2001jpa}]\label{prop:karol}
We have:
\begin{eqnarray}
\label{7}\langle t_2\rangle=\int \dif\mu_{\rH\rS}(\rho)\Tr{\rho^2} &=& \frac{2d}{d^2+1},\\
\label{8}\langle t_3\rangle=\int \dif\mu_{\rH\rS}(\rho)\Tr{\rho^3}
&=&
\frac{5d^2+1}{\Pa{d^2+1}\Pa{d^2+2}},\\
\label{9}\langle t_4\rangle=\int \dif\mu_{\rH\rS}(\rho)\Tr{\rho^4}
&=& \frac{14d^3+10d}{\Pa{d^2+1}\Pa{d^2+2}\Pa{d^2+3}}.
\end{eqnarray}
\end{prop}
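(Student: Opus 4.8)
The plan is to realize the Hilbert--Schmidt ensemble through the complex Ginibre ensemble and to decouple the overall scale, thereby reducing the three moments to moments of a Wishart matrix. Let $G$ be a $d\times d$ matrix with i.i.d.\ standard complex Gaussian entries and put $W=GG^\dagger$. Writing a Haar-random pure state of $\cH_d\ot\cH_d$ in the computational basis with (unnormalized) coefficient matrix $G$ and tracing out the second tensor factor produces $\rho=W/\Tr{W}$; this is exactly the $d=k$ specialization of the partial-trace construction recalled in Section~\ref{sect:metric-on-state-space}, so $\rho$ is $\dif\mu_{\rH\rS}$-distributed. The first step is to note the decoupling of the trace: in the coordinates $W=r\rho$ with $r=\Tr{W}>0$ and $\Tr{\rho}=1$, the Wishart density $\propto e^{-\Tr{W}}\dif W$ factors as $e^{-r}r^{d^2-1}\dif r\times\dif\mu_{\rH\rS}(\rho)$, the $r^{d^2-1}$ being the Jacobian of the scaling of the $(d^2-1)$-dimensional trace-one slice of the Hermitian matrices. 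Hence $r=\Tr{W}$ is a $\mathrm{Gamma}(d^2)$ variable, independent of $\rho=W/\Tr{W}$, so that
\begin{eqnarray}
\expect\br{\Tr{W^k}}=\expect\br{(\Tr{W})^k}\cdot\expect\br{\Tr{\rho^k}}=\frac{\Gamma(d^2+k)}{\Gamma(d^2)}\,\langle t_k\rangle,\qquad k=2,3,4.
\end{eqnarray}

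The second step is to evaluate $\expect\br{\Tr{W^k}}$. Expanding $\Tr{W^k}=\Tr{(GG^\dagger)^k}$ into a sum of degree-$k$ monomials in the entries of $G$ and $\overline{G}$ and applying Wick's theorem, the contractions are indexed by $\sigma\in S_k$, each contributing $d^{\,c(\sigma)+c(\gamma_k\sigma)}$, where $\gamma_k=(1\,2\,\cdots\,k)$ is the full cycle and $c(\cdot)$ denotes the number of cycles; hence $\expect\br{\Tr{W^k}}=\sum_{\sigma\in S_k}d^{\,c(\sigma)+c(\gamma_k\sigma)}$. Writing $\abs{\tau}:=k-c(\tau)$ for the minimal transposition length, the triangle inequality $\abs{\sigma}+\abs{\gamma_k\sigma}\geqslant\abs{\gamma_k}=k-1$ gives $c(\sigma)+c(\gamma_k\sigma)\leqslant k+1$, the bound being attained by exactly $C_k=\frac1{k+1}\binom{2k}{k}$ permutations (the geodesics from $e$ to $\gamma_k$), while every exponent is congruent to $k+1$ modulo $2$. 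Reading this off for $k=2,3,4$: $\expect\br{\Tr{W^2}}=2d^3$ (here $2=C_2$, with no lower-order term), $\expect\br{\Tr{W^3}}=5d^4+d^2$ (here $5=C_3$, and the single remaining permutation contributes $d^2$), and $\expect\br{\Tr{W^4}}=14d^5+10d^3$ (here $14=C_4$ and, by the parity constraint, the remaining $24-14=10$ permutations each contribute $d^3$). These three numbers may equally well be quoted from any table of moments of the Laguerre unitary ensemble.

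The third step is mere substitution into the displayed identity:
\begin{eqnarray}
\langle t_2\rangle=\frac{2d^3}{d^2(d^2+1)}=\frac{2d}{d^2+1},\quad
\langle t_3\rangle=\frac{5d^4+d^2}{d^2(d^2+1)(d^2+2)}=\frac{5d^2+1}{(d^2+1)(d^2+2)},
\end{eqnarray}
and likewise $\langle t_4\rangle=(14d^5+10d^3)\big/\big(d^2(d^2+1)(d^2+2)(d^2+3)\big)=(14d^3+10d)\big/\big((d^2+1)(d^2+2)(d^2+3)\big)$, which are precisely the asserted formulas.

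The only genuine computation I expect is the $k=4$ moment of $W$ --- equivalently, exhibiting the $C_4=14$ geodesic (``non-crossing'') permutations in $S_4$ relative to the $4$-cycle, the remaining $10$ being then forced to contribute $d^3$ by the parity constraint; the rest is the short Jacobian argument together with bookkeeping. A fully self-contained alternative avoiding the Ginibre picture would integrate $\sum_j\lambda_j^k$ directly against the normalized eigenvalue density $\dif\nu(\Lambda)$ of Section~\ref{sect:metric-on-state-space} via the pertinent Selberg/Dixon--Anderson integrals, but that route is computationally heavier and conceptually less transparent.
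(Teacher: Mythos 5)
Your argument is correct, and it takes a genuinely different route from the paper. The paper offers no proof of Proposition~\ref{prop:karol} at all---it quotes the three moments from \cite{Zyczkowski2001jpa}---and the method it uses for the companion moment $\langle t_2^2\rangle$ in Lemma~\ref{lem:lin} (which is also the method of the cited reference) is the eigenvalue-density one: integrate symmetric functions of the spectrum against $\dif\nu(\Lambda)$, remove the $\delta$-constraint by a Laplace transform, and evaluate the unconstrained integrals with Mehta's formula for the Laguerre weight. You instead decouple the scale, observing that for $W=GG^\dagger$ with $G$ Ginibre the density $\propto e^{-\Tr{W}}$ factors under $W=r\rho$ into $e^{-r}r^{d^2-1}\dif r$ times the flat measure on the trace-one slice, so $\Tr{W}\sim\mathrm{Gamma}(d^2)$ is independent of $\rho$ (and the $k=d$ induced measure is indeed $\dif\mu_{\rH\rS}$, as Section~\ref{sect:metric-on-state-space} records); you then compute $\expect\br{\Tr{W^k}}=\sum_{\sigma\in S_k}d^{\,c(\sigma)+c(\gamma_k\sigma)}$ by Wick's theorem. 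Your combinatorial inputs all check out: the parity $c(\sigma)+c(\gamma_k\sigma)\equiv k+1\pmod 2$ follows from multiplicativity of the sign, the geodesic count is the Catalan number $C_k$, and for $k=4$ the ten non-geodesic permutations are forced to exponent $3$ because exponent $1$ is impossible. Dividing $2d^3$, $5d^4+d^2$, $14d^5+10d^3$ by $\Gamma(d^2+k)/\Gamma(d^2)$ gives exactly the stated values. What your route buys is uniformity and transparency for all single-trace moments $\langle t_k\rangle$ without any Selberg-type integration; what the paper's route buys is that the same Laplace-transform machinery is needed anyway for the mixed moment $\langle t_2^2\rangle$ of Lemma~\ref{lem:lin}---although even there your decoupling reduces the task to the two-trace Wick computation of $\expect\br{(\Tr{W^2})^2}$, so the Ginibre picture would in fact cover the whole appendix.
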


\begin{remark}
It is obvious that\begin{eqnarray}
\label{e7}\langle t_2\rangle=\int \dif\nu(\Lambda)\Tr{\Lambda^2} &=& \frac{2d}{d^2+1},\\
\label{e8}\langle t_3\rangle=\int \dif\nu(\Lambda)\Tr{\Lambda^3} &=&
\frac{5d^2+1}{\Pa{d^2+1}\Pa{d^2+2}},\\
\label{e9}\langle t_4\rangle=\int \dif\nu(\Lambda)\Tr{\Lambda^4} &=&
\frac{14d^3+10d}{\Pa{d^2+1}\Pa{d^2+2}\Pa{d^2+3}}.
\end{eqnarray}
\end{remark}

\begin{lem}\label{lem:lin}
It holds that
\begin{eqnarray}
\label{10}\left\langle t^2_2\right\rangle = \int
\dif\mu_{\rH\rS}(\rho)\Br{\Tr{\rho^2}}^2
=\frac{4d^4+18d^2+2}{\Pa{d^2+1}\Pa{d^2+2}\Pa{d^2+3}}.
\end{eqnarray}
\end{lem}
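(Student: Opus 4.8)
\emph{Proof proposal.} The plan is to turn $\langle t_2^2\rangle$ into a finite sum over the symmetric group $S_4$ by purifying the Hilbert--Schmidt ensemble. The starting point is the operator identity $\bigl(\Tr{\rho^2}\bigr)^2 = \Tr{\rho^{\ot 4}P(\tau_0)}$ on $(\complex^d)^{\ot 4}$, where $\tau_0=(12)(34)\in S_4$; this follows from $\Tr{\rho^2}=\Tr{\rho^{\ot 2}F}$ (with $F$ the swap operator of Section~\ref{sect:appendix}) applied to each pair of factors, since $F\ot F=P(\tau_0)$. As recalled in Section~\ref{sect:metric-on-state-space}, $\mu_{\rH\rS}$ on $\density{\cH_d}$ is the law of $\rho=\Ptr{2}{\out{\Psi}{\Psi}}$ for $\ket\Psi$ uniformly distributed on the unit sphere of $\cH_d\ot\cH_d$ (the $d=k$ case). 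Substituting and pulling the partial trace outside the integral,
\begin{eqnarray*}
\langle t_2^2\rangle = \int\Tr{(\out{\Psi}{\Psi})^{\ot 4}\Pa{P_A(\tau_0)\ot\I_B}}\dif\mu(\Psi),
\end{eqnarray*}
where, after the natural reordering $(\cH_d\ot\cH_d)^{\ot 4}\cong\cH_d^{\ot 4}\ot\cH_d^{\ot 4}$, the first factor $A$ carries the four copies of $\rho$, the operator $P_A(\tau_0)$ permutes them, and the purifying factor $B$ is untouched.

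Next I would invoke the standard fact (the same one used for Theorem~\ref{th:pure-case}) that $\int(\out{\Psi}{\Psi})^{\ot 4}\dif\mu(\Psi)$ equals the orthogonal projector $\Pi_{\mathrm{sym}}$ onto the symmetric subspace of $(\cH_d\ot\cH_d)^{\ot4}$ divided by its dimension $\binom{d^2+3}{4}$, together with the observation that under the reordering above $\Pi_{\mathrm{sym}}=\tfrac1{4!}\sum_{\tau\in S_4}P_A(\tau)\ot P_B(\tau)$. Taking the trace over $B$ replaces $P_B(\tau)$ by $\Tr{P_B(\tau)}=d^{c(\tau)}$, where $c(\tau)$ is the number of cycles of $\tau$, so pairing with $P_A(\tau_0)$ and using $\Tr{P(\tau_0)P(\tau)}=\Tr{P(\tau_0\tau)}=d^{c(\tau_0\tau)}$ gives
\begin{eqnarray*}
\langle t_2^2\rangle = \frac{1}{d^2(d^2+1)(d^2+2)(d^2+3)}\sum_{\tau\in S_4}d^{\,c(\tau)+c(\tau_0\tau)},
\end{eqnarray*}
since $4!\,\binom{d^2+3}{4}=d^2(d^2+1)(d^2+2)(d^2+3)$, which is exactly the denominator in \eqref{10}.

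It then remains to evaluate the $24$-term sum, which I would organize by running over the conjugacy classes of $\tau$ and, for each class, counting how many $\tau$ send $\tau_0\tau$ into each cycle type (the parity of $\tau_0\tau$ already halves the possibilities). The bookkeeping gives $4$ permutations contributing $d^6$ (namely $e$, $(12)$, $(34)$, $(12)(34)$), $18$ contributing $d^4$, and $2$ contributing $d^2$ (the $4$-cycles $(1324)$ and $(1423)$), so the numerator is $4d^6+18d^4+2d^2=d^2(4d^4+18d^2+2)$, and cancelling $d^2$ yields \eqref{10}. The one genuinely laborious step is this enumeration: because $c(\tau_0\tau)$ is not a class function of $\tau$, the $24$ products cannot simply be grouped by the conjugacy class of $\tau$ and each must be inspected; as a cross-check the formula gives $1$ at $d=1$ and $23/35$ at $d=2$. (Alternatively one could avoid purification entirely and integrate $\Tr{\Lambda^2}^2$ directly against the eigenvalue density $\dif\nu(\Lambda)$ via Selberg-type integrals, but the route above is shorter and reuses machinery already developed in the paper.)
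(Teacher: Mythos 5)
Your proof is correct, and it takes a genuinely different route from the paper's. The paper works entirely at the level of the eigenvalue density $\dif\nu(\Lambda)$: it expands $\Tr{\Lambda^2}^2=\Tr{\Lambda^4}+2\sum_{i<j}\lambda_i^2\lambda_j^2$, reuses the known value of $\langle t_4\rangle$, and evaluates the remaining integral by a Laplace-transform rescaling to the Laguerre ensemble followed by Mehta's moment formula (Eq.~(17.8.3) of his book) --- precisely the ``Selberg-type'' route you mention only in passing. You instead purify the Hilbert--Schmidt measure, write $\bigl(\Tr{\rho^2}\bigr)^2=\Tr{\rho^{\ot4}P(\tau_0)}$ with $\tau_0=(12)(34)$, replace $\int(\out{\Psi}{\Psi})^{\ot4}\dif\mu(\Psi)$ by the normalized symmetric projector on $(\complex^{d^2})^{\ot4}$, and reduce everything to the combinatorial sum $\sum_{\tau\in S_4}d^{\,c(\tau)+c(\tau_0\tau)}$. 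I verified your enumeration: the exponent $c(\tau)+c(\tau_0\tau)$ equals $6$ exactly for $\tau\in\{e,(12),(34),(12)(34)\}$, equals $2$ exactly for the two $4$-cycles $(1324),(1423)$, and equals $4$ for the remaining $18$ elements, giving $d^2(4d^4+18d^2+2)$ over $d^2(d^2+1)(d^2+2)(d^2+3)$ as claimed; the checks at $d=2$ ($23/35$) and $d=3$ ($61/165$) agree with the values quoted later in the paper. What each approach buys: yours is more elementary and self-contained (no Laplace transform, no external random-matrix formula), reuses the symmetric-subspace machinery already present in the proof of Theorem~\ref{th:pure-case}, and generalizes at no cost both to the induced measures $\dif\nu_{d,k}$ with $k\neq d$ (replace one $d^{c(\tau)}$ by $k^{c(\tau)}$) and to arbitrary products $\prod_i\Tr{\rho^{m_i}}$ with $\sum_i m_i=n$ by choosing $\tau_0$ of the corresponding cycle type in $S_n$; the paper's eigenvalue-density route, by contrast, extends more readily to statistics that are not polynomial in $\rho$, such as the average entropy. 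The only step in your argument that deserves to be written out in full is the $24$-term enumeration, since, as you correctly note, $c(\tau_0\tau)$ is not a class function of $\tau$ (e.g.\ $(12)$ and $(13)$ lie in the same class but contribute $d^6$ and $d^4$ respectively), so the inspection really must be done element by element.
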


\begin{proof}
In what follows, we calculate the integral:
\begin{eqnarray}
&&\int \dif\mu_{\rH\rS}(\rho)\Br{\Tr{\rho^2}}^2=
\int\dif\nu(\Lambda)\Tr{\Lambda^2}^2 = \int\dif\nu(\Lambda)\Pa{\Tr{\Lambda^4}+2\sum_{i<j}\lambda^2_i\lambda^2_j}\notag\\
&&= \int \dif\nu(\Lambda)\Tr{\Lambda^4}+
2C^d_{\rH\rS}\int\Pa{\sum_{i<j}\lambda^2_i\lambda^2_j}
\delta\Pa{1-\sum^d_{j=1}\lambda_j}\abs{\Delta(\lambda)}^2\prod^d_{j=1}\dif\lambda_j\notag\\
&&=\int
\dif\nu(\Lambda)\Tr{\Lambda^4}+2C^d_{\rH\rS}\binom{d}{2}\int\Pa{\lambda^2_1\lambda^2_2}
\delta\Pa{1-\sum^d_{j=1}\lambda_j}\abs{\Delta(\lambda)}^2\prod^d_{j=1}\dif\lambda_j,
\end{eqnarray}
where $C^d_{\rH\rS}$ is the normalization constant:
\begin{eqnarray}
C^d_{\rH\rS} =
\frac{\Gamma\Pa{d^2}}{\Gamma(d+1)\prod^d_{j=1}\Gamma(j)^2}.
\end{eqnarray}

Next, we calculate the following integral:
\begin{eqnarray}
\int\Pa{\lambda^2_1\lambda^2_2}
\delta\Pa{1-\sum^d_{j=1}\lambda_j}\abs{\Delta(\lambda)}^2\prod^d_{j=1}\dif\lambda_j.
\end{eqnarray}
Let
\begin{eqnarray}
F(t) = \int\Pa{\lambda^2_1\lambda^2_2}
\delta\Pa{t-\sum^d_{j=1}\lambda_j}\abs{\Delta(\lambda)}^2\prod^d_{j=1}\dif\lambda_j.
\end{eqnarray}
Performing Laplace transform $(t\to s)$ of $F(t)$ gives rise to
\begin{eqnarray}
\widetilde F(s)&=& \int^\infty_0
\Pa{\lambda^2_1\lambda^2_2}\exp\Pa{-s\sum^d_{j=1}\lambda_j}\abs{\Delta(\lambda)}^2\prod^d_{j=1}\dif\lambda_j\notag\\
&=&s^{-(d^2+4)}\int^\infty_0
\Pa{\mu^2_1\mu^2_2}\exp\Pa{-\sum^d_{j=1}\mu_j}\abs{\Delta(\mu)}^2\prod^d_{j=1}\dif\mu_j.
\end{eqnarray}
Using the inverse Laplace transform result $(s\to t)$:
$\sL^{-1}(s^\alpha)=\frac{t^{-\alpha-1}}{\Gamma(-\alpha)}$, it
follows that
\begin{eqnarray}
F(t) = \frac1{\Gamma(d^2+4)}t^{d^2+3}\int^\infty_0
\Pa{\mu^2_1\mu^2_2}\exp\Pa{-\sum^d_{j=1}\mu_j}\abs{\Delta(\mu)}^2\prod^d_{j=1}\dif\mu_j.
\end{eqnarray}
Then
\begin{eqnarray}
\int\Pa{\lambda^2_1\lambda^2_2}
\delta\Pa{1-\sum^d_{j=1}\lambda_j}\abs{\Delta(\lambda)}^2\prod^d_{j=1}\dif\lambda_j
= \frac1{\Gamma(d^2+4)}\int^\infty_0
\Pa{\mu^2_1\mu^2_2}\exp\Pa{-\sum^d_{j=1}\mu_j}\abs{\Delta(\mu)}^2\prod^d_{j=1}\dif\mu_j.
\end{eqnarray}
Denote
\begin{eqnarray}
\langle f(\mu)\rangle_q = \frac{\int^\infty_0\cdots \int^\infty_0
f(\mu)q(\mu)\dif\mu}{\int^\infty_0\cdots\int^\infty_0q(\mu)\dif\mu},
\end{eqnarray}
where
\begin{eqnarray}
q(\mu)\equiv
q(\mu_1,\ldots,\mu_d)=\abs{\Delta(\mu)}^{2\gamma}\prod^d_{j=1}\mu^{\alpha-1}_je^{-\mu_j}\dif
\mu_j.
\end{eqnarray}
From Mehta's book \cite[Eq.~(17.8.3), pp.~324]{mehta2004elsevier},
we see that
\begin{eqnarray}
\langle \mu^2_1\cdots \mu^2_k \mu_{k+1}\cdots \mu_m\rangle_q
=\prod^k_{j=1}(\alpha+1+\gamma(2d-m-j))\prod^m_{j=1}(\alpha+\gamma(d-j)).
\end{eqnarray}
Letting $k=m=2$ and $(\alpha,\gamma)=(1,1)$ in the above equation,
we obtain that
\begin{eqnarray}
\langle \mu^2_1\mu^2_2\rangle_q
=\prod^2_{j=1}(2d-j)\cdot\prod^2_{j=1}(d-j+1) = 2d(d-1)^2(2d-1).
\end{eqnarray}
This implies that
\begin{eqnarray}
\int^\infty_0
\Pa{\mu^2_1\mu^2_2}\exp\Pa{-\sum^d_{j=1}\mu_j}\abs{\Delta(\mu)}^2\prod^d_{j=1}\dif\mu_j
=2d(d-1)^2(2d-1)\int^\infty_0\exp\Pa{-\sum^d_{j=1}\mu_j}\abs{\Delta(\mu)}^2\prod^d_{j=1}\dif\mu_j.
\end{eqnarray}
Therefore
\begin{eqnarray}
\int\Pa{\lambda^2_1\lambda^2_2}
\delta\Pa{1-\sum^d_{j=1}\lambda_j}\abs{\Delta(\lambda)}^2\prod^d_{j=1}\dif\lambda_j
=
\frac{2d(d-1)^2(2d-1)}{\Gamma(d^2+4)}\int^\infty_0\exp\Pa{-\sum^d_{j=1}\mu_j}\abs{\Delta(\mu)}^2\prod^d_{j=1}\dif\mu_j.
\end{eqnarray}
Since
\begin{eqnarray}
\int^\infty_0\cdots\int^\infty_0 q(\mu)\dif\mu=
\prod^d_{j=1}\Gamma(j)\Gamma(j+1) =
\Gamma(d+1)\prod^d_{j=1}\Gamma(j)^2.
\end{eqnarray}
Finally we get
\begin{eqnarray}
\int\Pa{\lambda^2_1\lambda^2_2}
\delta\Pa{1-\sum^d_{j=1}\lambda_j}\abs{\Delta(\lambda)}^2\prod^d_{j=1}\dif\lambda_j
=2d(d-1)^2(2d-1)\frac{\Gamma(d+1)}{\Gamma(d^2+4)}\prod^d_{j=1}\Gamma(j)^2.
\end{eqnarray}
Based on this computation, we finally obtain that
\begin{eqnarray}
\int \dif\mu_{\rH\rS}(\rho)\Br{\Tr{\rho^2}}^2 &=&
\frac{14d^3+10d}{\Pa{d^2+1}\Pa{d^2+2}\Pa{d^2+3}} \notag\\
&&+2
\frac{\Gamma\Pa{d^2}}{\Gamma(d+1)\prod^d_{j=1}\Gamma(j)^2}\binom{d}{2}2d(d-1)^2(2d-1)\frac{\Gamma(d+1)}{\Gamma(d^2+4)}\prod^d_{j=1}\Gamma(j)^2\notag\\
&=&\frac{14d^3+10d}{\Pa{d^2+1}\Pa{d^2+2}\Pa{d^2+3}}+
\frac{2(d-1)^3(2d-1)}{\Pa{d^2+1}\Pa{d^2+2}\Pa{d^2+3}}.
\end{eqnarray}
Therefore we completes the proof.
\end{proof}

\subsection{The proof of Theorem~\ref{th:iso}}\label{subsect:iso}

For the first term in the left hands (lhs) of the above equation:
\begin{eqnarray}
\Tr{\Br{A^2\ot B^2}\sE_2(\Lambda)} &=& \frac{\Delta^{(2)}_2}2\Br{\Tr{A^2}\Tr{B^2}+\Tr{A^2B^2}}\notag\\
 &&+ \frac{\Delta^{(1,1)}_2}2\Br{\Tr{A^2}\Tr{B^2}-\Tr{A^2B^2}}.
\end{eqnarray}
Then for the third and fourth terms:
\begin{eqnarray}
&&\Tr{\Br{A^2\ot B^{\ot 2}}\sE_3(\Lambda)}\notag \\
&&= \frac{\Delta^{(3)}_3}6\Br{\Tr{A^2}\Tr{B}^2 +
2\Tr{A^2B}\Tr{B}+\Tr{A^2}\Tr{B^2}+2\Tr{A^2B^2}}\notag
\\
&&~~~+ \frac{2\Delta^{(2,1)}_3}3\Br{\Tr{A^2}\Tr{B}^2 -
\Tr{A^2B^2}}\notag\\
&&~~~+\frac{\Delta^{(1,1,1)}_3}6\Br{\Tr{A^2}\Tr{B}^2 -
2\Tr{A^2B}\Tr{B}-\Tr{A^2}\Tr{B^2}+2\Tr{A^2B^2}}
\end{eqnarray}
and
\begin{eqnarray}
&&\Tr{\Br{B^2\ot A^{\ot 2}}\sE_3(\Lambda)} \notag\\
&&= \frac{\Delta^{(3)}_3}6\Br{\Tr{B^2}\Tr{A}^2 +
2\Tr{B^2A}\Tr{A}+\Tr{B^2}\Tr{A^2}+2\Tr{B^2A^2}}\notag
\\
&&~~+ \frac{2\Delta^{(2,1)}_3}3\Br{\Tr{B^2}\Tr{A}^2 -
\Tr{B^2A^2}}\notag\\
&&~~+\frac{\Delta^{(1,1,1)}_3}6\Br{\Tr{B^2}\Tr{A}^2 -
2\Tr{B^2A}\Tr{A}-\Tr{B^2}\Tr{A^2}+2\Tr{B^2A^2}}.
\end{eqnarray}
The second term is:
\begin{eqnarray}
&&\Tr{\Br{A^{\ot 2}\ot B^{\ot 2}}\sE_4(\Lambda)} \notag\\
&&= \frac{\Delta^{(4)}_4}{24}\left[\Tr{A}^2\Tr{B}^2 + \Tr{A^2}\Tr{B}^2 + 4\Tr{AB}\Tr{A}\Tr{B}+\Tr{A}^2\Tr{B^2}+\Tr{A^2}\Tr{B^2}\right.\notag\\
&&~~~~~~~~~~~\left.+2\Tr{AB}^2+ 4\Tr{A^2B}\Tr{B}+ 4\Tr{A}\Tr{AB^2}+4\Tr{A^2B^2}+2\Tr{ABAB}\right]\notag\\
&&~~+\frac{\Delta^{(3,1)}_4}8\left[3\Tr{A}^2\Tr{B}^2 + \Tr{A^2}\Tr{B}^2 + 4\Tr{AB}\Tr{A}\Tr{B}+\Tr{A}^2\Tr{B^2}\right.\notag\\
&&~~~~~~~~~~~~\left.-\Tr{A^2}\Tr{B^2}-2\Tr{AB}^2- 4\Tr{A^2B^2}-2\Tr{ABAB}\right]\notag\\
&&~~+\frac{\Delta^{(2,2)}_4}{12}\Br{2\Tr{A}^2\Tr{B}^2+2\Tr{A^2}\Tr{B^2}+4\Tr{AB}^2-4\Tr{A^2B}\Tr{B}-4\Tr{A}\Tr{AB^2}}\notag\\
&&~~+\frac{\Delta^{(2,1,1)}_4}8\left[3\Tr{A}^2\Tr{B}^2- \Tr{A^2}\Tr{B}^2 - 4\Tr{AB}\Tr{A}\Tr{B}-\Tr{A}^2\Tr{B^2}\right.\notag\\
&&~~~~~~~~~~~~\left.-\Tr{A^2}\Tr{B^2}-2\Tr{AB}^2+4\Tr{A^2B^2}+2\Tr{ABAB}\right]\notag\\
&&~~+\frac{\Delta^{(1,1,1,1)}_4}{24}\left[\Tr{A}^2\Tr{B}^2 - \Tr{A^2}\Tr{B}^2 - 4\Tr{AB}\Tr{A}\Tr{B}-\Tr{A}^2\Tr{B^2}+\Tr{A^2}\Tr{B^2}\right.\notag\\
&&~~~~~~~~~~~\left.+2\Tr{AB}^2+ 4\Tr{A^2B}\Tr{B}+
4\Tr{A}\Tr{AB^2}-4\Tr{A^2B^2}-2\Tr{ABAB}\right].
\end{eqnarray}
Therefore, we get the conclusion.

\subsection{The proof of Theorem~\ref{th:pure-case}}\label{subsect:pure-case}

Clearly, for $k=2,3,4$, we know that
\begin{eqnarray}\label{eq:symmetric-proj}
\int\dif\mu(\psi)\out{\psi}{\psi}^{\ot k} = \Delta^{(k)}_k C_{(k)}.
\end{eqnarray}
Then
\begin{eqnarray}
&&\int\Delta A(\psi)^2\cdot\Delta B(\psi)^2 \dif\mu(\psi)\notag \\
&&= \int\dif\mu(\psi)\Tr{\Br{A^2\ot B^2}\psi^{\ot 2}} +
\int\dif\mu(\psi)\Tr{\Br{A^{\ot 2}\ot B^{\ot 2}}\psi^{\ot 4}} \notag\\
&&~~~- \int\dif\mu(\psi)\Tr{\Br{A^2\ot B^{\ot 2}}\psi^{\ot 3}} -
\int\dif\mu(\psi)\Tr{\Br{B^2\ot A^{\ot 2}}\psi^{\ot 3}},
\end{eqnarray}
Now,
\begin{eqnarray}
\int\Tr{\Br{A^2\ot B^2}\psi^{\ot 2}}\dif\mu(\psi) =
\frac1{d(d+1)}\Br{\Tr{A^2}\Tr{B^2}+\Tr{A^2B^2}}
\end{eqnarray}
\begin{eqnarray}
\int\Tr{\Br{A^{\ot 2}\ot B^{\ot 2}}\psi^{\ot 4}} \dif\mu(\psi)=
\frac1{d(d+1)(d+2)(d+3)}\Omega(A,B),
\end{eqnarray}
where
\begin{eqnarray}
\Omega(A,B)&=&\Tr{A}^2\Tr{B}^2 + \Tr{A^2}\Tr{B}^2+ 4\Tr{AB}\Tr{A}\Tr{B} + \Tr{A}^2\Tr{B^2}\notag\\
&&+\Tr{A^2}\Tr{B^2} + 2\Tr{AB}^2 + 4\Tr{A^2B}\Tr{B}+ 4\Tr{A}\Tr{AB^2}\notag\\
&&+ 4\Tr{A^2B^2} + 2\Tr{ABAB}.
\end{eqnarray}
Moreover
\begin{eqnarray}
&&\int\Tr{\Br{A^2\ot B^{\ot 2}}\psi^{\ot 3}}\dif\mu(\psi) \notag\\
&&= \frac1{d(d+1)(d+2)}\Br{\Tr{A^2}\Tr{B}^2 + 2\Tr{A^2B}\Tr{B} +
\Tr{A^2}\Tr{B^2} + 2\Tr{A^2B^2}}
\end{eqnarray}
and
\begin{eqnarray}
&&\int\Tr{\Br{B^2\ot A^{\ot 2}}\psi^{\ot 3}}\dif\mu(\psi) \notag\\
&&= \frac1{d(d+1)(d+2)}\Br{\Tr{B^2}\Tr{A}^2 + 2\Tr{B^2A}\Tr{A} +
\Tr{A^2}\Tr{B^2} + 2\Tr{A^2B^2}}.
\end{eqnarray}
Thus
\begin{eqnarray}
&&d(d+1)(d+2)(d+3)\int\Delta A(\psi)^2\cdot\Delta B(\psi)^2\dif\mu(\psi)\notag\\
&& = \Omega(A,B) + (d+2)(d+3)\Br{\Tr{A^2}\Tr{B^2}+\Tr{A^2B^2}}\notag\\
&&~~~~- (d+3)\Br{\Tr{A^2}\Tr{B}^2 + 2\Tr{A^2B}\Tr{B} +
\Tr{A^2}\Tr{B^2}
+ 2\Tr{A^2B^2}}\notag\\
&&~~~~- (d+3)\Br{\Tr{B^2}\Tr{A}^2 + 2\Tr{B^2A}\Tr{A} +
\Tr{A^2}\Tr{B^2} + 2\Tr{A^2B^2}}
\end{eqnarray}

\begin{eqnarray}
&&d(d+1)(d+2)(d+3)\int\Delta A(\psi)^2\cdot\Delta B(\psi)^2\dif\mu(\psi)\notag\\
&& = \Tr{A}^2\Tr{B}^2 -(d+2) \Tr{A^2}\Tr{B}^2+ 4\Tr{AB}\Tr{A}\Tr{B} \notag\\
&&~~~~-(d+2)\Tr{A}^2\Tr{B^2} +(d^2+3d+1)\Tr{A^2}\Tr{B^2} + 2\Tr{AB}^2 \notag\\
&&~~~~- 2(d+1)\Tr{A^2B}\Tr{B}- 2(d+1)\Tr{A}\Tr{AB^2}\notag\\
&&~~~~+ (d^2+d-2)\Tr{A^2B^2} + 2\Tr{ABAB}.
\end{eqnarray}
Therefore
\begin{eqnarray}
&&d(d+1)(d+2)(d+3)\int\Delta A(\psi)^2\cdot\Delta B(\psi)^2\dif\mu(\psi)\notag\\
&& = \Tr{A}^2\Tr{B}^2 + 4\Tr{AB}\Tr{A}\Tr{B}  + 2\Tr{AB}^2  + 2\Tr{ABAB}\notag\\
&&~~~~-(d+2) \Br{\Tr{A^2}\Tr{B}^2+\Tr{A}^2\Tr{B^2}} \notag\\
&&~~~~- 2(d+1)\Br{\Tr{A^2B}\Tr{B}+\Tr{A}\Tr{AB^2}}\notag\\
&&~~~~+(d^2+3d+1)\Tr{A^2}\Tr{B^2}+ (d^2+d-2)\Tr{A^2B^2},
\end{eqnarray}
that is,
\begin{eqnarray}
\int\Delta A(\psi)^2\cdot\Delta B(\psi)^2\dif\mu(\psi) =
\sum^8_{j=1}u_j\Omega_j(A,B),
\end{eqnarray}
where $\Omega_j(A,B)$ is from Theorem~\ref{th:iso}, and for
$K_d=(d(d+1)(d+2)(d+3))^{-1}$,
\begin{eqnarray}
&&u_1 = K_d,\quad u_2=-(d+2)K_d,\quad u_3 =4K_d,\quad u_4=(d^2+3d+1)K_d,\\
&&u_5 = 2K_d, \quad u_6=-2(d+1)K_d,\quad u_7=(d^2+d-2)K_d, \quad
u_8=2K_d.
\end{eqnarray}
In the following we calculate the average lower bound,
\begin{eqnarray}
&&\int\dif\mu(\psi)\Pa{\langle\set{A,B}\rangle_\psi-\langle
A\rangle_\psi\langle
B\rangle_\psi}^2=\int\dif\mu(\psi)\Br{\langle\set{A,B}\rangle^2_\psi+\langle
A\rangle^2_\psi\langle B\rangle^2_\psi -
2\langle\set{A,B}\rangle_\psi\langle A\rangle_\psi\langle
B\rangle_\psi}\notag\\
&&=\frac1{d(d+1)}\Br{\Tr{\set{A,B}}^2 + \Tr{\set{A,B}^2}} +
\frac1{d(d+1)(d+2)(d+3)}\Omega(A,B)\notag\\
&&~~~-
\frac2{d(d+1)(d+2)}[\Tr{A}\Tr{B}\Tr{\set{A,B}}+\Tr{AB}\Tr{\set{A,B}}
+ \Tr{A\set{A,B}}\Tr{B}  \notag\\
&&~~~+\Tr{A}\Tr{B\set{A,B}} + \Tr{AB\set{A,B}} + \Tr{A\set{A,B}B}]\notag\\
&&=\frac1{d(d+1)}\Br{\Tr{AB}^2 +
\frac12\Tr{ABAB}+\frac12\Tr{A^2B^2}} +
\frac1{d(d+1)(d+2)(d+3)}\Omega(A,B)\notag\\
&&~~~- \frac2{d(d+1)(d+2)}[\Tr{A}\Tr{B}\Tr{AB}+\Tr{AB}^2
+ \Tr{A^2B}\Tr{B}  \notag\\
&&~~~\left.+\Tr{B^2A}\Tr{A} + 2\Tr{ABAB} + 2\Tr{A^2B^2}\right]
\end{eqnarray}
and
\begin{eqnarray}
\int\langle [A,B]\rangle^2_\psi\dif\mu(\psi) &=&
\int\dif\mu(\psi)\Br{\Innerm{\psi}{AB}{\psi}^2 +
\Innerm{\psi}{BA}{\psi}^2 -
2\Innerm{\psi}{AB}{\psi}\Innerm{\psi}{BA}{\psi}}\notag\\
&=& \frac1{2d(d+1)}\Br{\Tr{A^2B^2}-\Tr{ABAB}}.
\end{eqnarray}
Therefore we have that
\begin{eqnarray}
\int\dif\mu(\psi)\Br{\Pa{\langle\set{A,B}\rangle_\psi-\langle
A\rangle_\psi\langle B\rangle_\psi}^2 + \langle [A,B]\rangle^2_\psi}
= \sum^8_{j=1}l_j\Omega_j(A,B),
\end{eqnarray}
where
\begin{eqnarray}
&&l_1 = K_d,\quad l_2 = K_d,\quad l_3 = -2(d+1)K_d,\quad l_4 = K_d,\quad l_5 = (d+1)(d+2)K_d, \\
&&\quad l_6 = -2(d+1)K_d,\quad l_7 = (d^2+d-2)K_d, \quad l_8 =
-2(2d+5)K_d.
\end{eqnarray}

\subsection{The proof of Theorem~\ref{th:ave-lower}}\label{subset:ave-lower}

Since the first term in the rhs of \eqref{eq:R-S} can be rewritten
as
\begin{eqnarray}
&&\Pa{\langle\set{A,B}\rangle_\rho-\langle A\rangle_\rho\langle
B\rangle_\rho}^2 = \langle\set{A,B}\rangle^2_\rho + \langle
A\rangle^2_\rho\langle B\rangle^2_\rho -
2\langle\set{A,B}\rangle_\rho\langle A\rangle_\rho\langle
B\rangle_\rho\notag\\
\label{12}&&= \Tr{\rho^{\ot 2}\set{A,B}^{\ot 2}} + \Tr{\rho^{\ot
4}\Br{A^{\ot2}\ot B^{\ot 2}}} - 2\Tr{\rho^{\ot 3}\Br{\set{A,B}\ot
A\ot B}},
\end{eqnarray}
it follows that
\begin{eqnarray}
\nonumber
\int_{\density{\cH_d}}\dif\mu_{\rH\rS}(\rho)\langle\set{A,B}\rangle^2_\rho
&=&\nonumber\int\dif\nu(\Lambda)\Tr{\sE_2(\Lambda)\set{A,B}^{\ot2}} \\
&=& \label{1}\frac1{d^2+1}\Tr{AB}^2 +
\frac1{2d(d^2+1)}\Br{\Tr{A^2B^2} + \Tr{ABAB}}
\end{eqnarray}
and
\begin{eqnarray}
\nonumber &&\int_{\density{\cH_d}}\dif\mu_{\rH\rS}(\rho)\langle
A\rangle^2_\rho\langle B\rangle^2_\rho =\nonumber
\int\dif\nu(\Lambda)\Tr{\sE_4(\Lambda)\Br{A^{\ot2}\ot B^{\ot 2}}}
\\
&&=\nonumber\alpha_1\Omega_1(A,B)+\alpha_2\Pa{\Omega_2(A,B)+4\Omega_3(A,B)}
+\alpha_3\Pa{\Omega_4(A,B)+2\Omega_5(A,B)}
\\\label{2}&&~~~+\alpha_4\Omega_6(A,B)+\alpha_5\Pa{2\Omega_7(A,B)+\Omega_8(A,B)},
\end{eqnarray}
where
\begin{eqnarray}
\alpha_1&=&N_d\Pa{d^4-18d^2+158-\frac{50}{d^2+1}+\frac{792}{d^2+2}-\frac{1512}{d^2+3}}
\\ \alpha_2&=&N_d\Pa{d^3-20d+\frac{50d}{d^2+1}-\frac{396d}{d^2+2}+\frac{504d}{d^2+3}},
\\ \alpha_3&=&N_d\Pa{-2d^2-20+\frac{50}{d^2+1}-\frac{396}{d^2+2}+\frac{504}{d^2+3}},
\\ \alpha_4&=&N_d\Pa{4d^2-380+\frac{500}{d^2+1}-\frac{1584}{d^2+2}+\frac{2016}{d^2+3}},
\\
\alpha_5&=&N_d\Pa{2d-\frac{100d}{d^2+1}+\frac{396d}{d^2+2}-\frac{336d}{d^2+3}}.
\end{eqnarray}
Moreover
\begin{eqnarray}
\Tr{\sE_3(\Lambda)\Br{\set{A,B}\ot A\ot
B}}&=&\frac{1}{6}(\Delta^{(3)}_3+4\Delta^{(2,1)}_3+\Delta^{(1,1,1)}_3)\Br{\Tr
{AB}\Tr A\Tr B}\notag
\\&&+\frac{1}{6}(\Delta^{(3)}_3-\Delta^{(1,1,1)}_3)\Br{\Tr{A^2B}\Tr B+\Tr{AB^2}\Tr
A+(\Tr{AB})^2}\notag
\\&&+\frac{1}{6}(\Delta^{(3)}_3-2\Delta^{(2,1)}_3+\Delta^{(1,1,1)}_3)\Br{\Tr{ABAB}+\Tr{A^2B^2}}.
\end{eqnarray}
Therefore
\begin{eqnarray}
\nonumber
&&\int_{\density{\cH_d}}\dif\mu_{\rH\rS}(\rho)\langle\set{A,B}\rangle_\rho\langle
A\rangle_\rho\langle B\rangle_\rho
=\int\dif\nu(\Lambda)\Tr{\sE_3(\Lambda)\Br{\set{A,B}\ot A\ot
B}}\\
\nonumber
&&=L_d\Pa{d^2-8-\frac{10}{d^2+1}+\frac{36}{d^2+2}}\Omega_3(A,B)
\\ \nonumber &&+L_d\Pa{1+\frac{10}{d^2+1}-\frac{18}{d^2+2}}\Br{\Omega_5(A,B)+\Omega_6(A,B)}
\\\label{3}&&+L_d\Pa{\frac{1}{2}d^4-2d^2+1+\frac{10}{d^2+1}-\frac{18}{d^2+2}}\Br{\Omega_7(A,B)+\Omega_8(A,B)}
\end{eqnarray}
where $L_d=d(d^2-1)(d^2-4).$

Since\begin{eqnarray}
\nonumber&&\int_{\density{\cH_d}}\dif\mu_{\rH\rS}(\rho)\Pa{\langle\set{A,B}\rangle_\rho-\langle
A\rangle_\rho\langle B\rangle_\rho}^2
\\&&=\int_{\density{\cH_d}}\dif\mu_{\rH\rS}(\rho)
\Pa{\langle\set{A,B}\rangle_\rho^2-2\langle\set{A,B}\rangle_\rho\langle
A\rangle_\rho\langle B\rangle_\rho+\langle A\rangle_\rho^2\langle
B\rangle_\rho^2},
\end{eqnarray}
and by \eqref{1}, \eqref{2} and \eqref{3}, we obtain the equality
\eqref{4}.

Moreover,
\begin{eqnarray}
\int_{\density{\cH_d}}\dif\mu_{\rH\rS}(\rho)\langle
[A,B]\rangle^2_\rho &=&
\int\dif\nu(\Lambda)\Tr{\sE_2(\Lambda)[A,B]^{\ot2}} \notag\\
&=&\frac1{2d(d^2+1)}\Br{\Tr{A^2B^2} - \Tr{ABAB}},\notag
 \\&=&\frac1{2d(d^2+1)}\Br{\Omega_7(A,B)-\Omega_8(A,B)}.
\end{eqnarray}
so we get \eqref{5}.

\subsection{Two examples in lower dimensions}\label{sect:examples}

In this section, we will present two examples in lower dimensions.
Note that the results obtained previously are live in the space of
the dimension being larger than three, as examples, we will deal
with the same problem in the 2-dimensional and 3-dimensional spaces,
respectively.

\begin{thrm}\label{th:dim=2}
For two observables $A$ and $B$ on $\complex^2$, the average of
uncertainty-product taken over the whole set of all density matrices
$\density{\complex^2}$ is given by
\begin{eqnarray}
&&\int\Delta A(\rho)^2\cdot\Delta B(\rho)^2\dif\mu_{\rH\rS}(\rho)\notag\\
&&=\frac{2}{105}\Omega_1-\frac{2}{35}\Omega_2+\frac{4}{105}\Omega_3+\frac{29}{210}\Omega_4+\frac{1}{105}\Omega_5
-\frac{1}{21}\Omega_6+\frac{3}{70}\Omega_7+\frac{1}{210}\Omega_8.
\end{eqnarray}
Moreover, we have
\begin{eqnarray}
&&\int
\dif\mu_{\rH\rS}(\rho)\Br{\Pa{\langle\set{A,B}\rangle_\rho-\langle
A\rangle_\rho\langle B\rangle_\rho}^2 +\langle
[A,B]\rangle^2_\rho}\notag\\
&&=\frac{2}{105}\Omega_1+\frac{1}{105}\Omega_2-\frac{2}{21}\Omega_3+\frac{1}{210}\Omega_4+\frac{1}{7}\Omega_5
-\frac{1}{21}\Omega_6+\frac{8}{105}\Omega_7-\frac{1}{35}\Omega_8.
\end{eqnarray}
\end{thrm}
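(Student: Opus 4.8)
The plan is to redo the computations of Theorems~\ref{th:iso}, \ref{th:HS-average} and \ref{th:ave-lower} directly in dimension two, rather than substituting $d=2$ into their statements: the latter is illegitimate, since those coefficients all carry the factor $N_d^{-1}=d^2(d^2-1)(d^2-4)(d^2-9)$, which vanishes at $d=2$, and the auxiliary quantities $\Delta^{(1,1,1)}_3$, $\Delta^{(2,1,1)}_4$, $\Delta^{(1,1,1,1)}_4$ have denominators that vanish there too. What makes the $d=2$ computation clean is that on $(\complex^2)^{\ot k}$ every Young diagram with more than two rows indexes a zero isotypic block, so the associated $C_\lambda$ is the zero operator and simply drops out of $\sE_k(\Lambda)$.

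Concretely, I would start from the $d=2$ specializations of $\sE_2,\sE_3,\sE_4$ already recorded in the Appendix (the $(k,d)=(2,2),(3,2),(4,2)$ cases): $\sE_2(\Lambda)=\Delta^{(2)}_2C_{(2)}+\Delta^{(1,1)}_2C_{(1,1)}$, $\sE_3(\Lambda)=\Delta^{(3)}_3C_{(3)}+\Delta^{(2,1)}_3C_{(2,1)}$ (the $(1,1,1)$ term being absent), and $\sE_4(\Lambda)=\Delta^{(4)}_4C_{(4)}+\Delta^{(3,1)}_4C_{(3,1)}+\Delta^{(2,2)}_4C_{(2,2)}$ (the $(2,1,1)$ and $(1,1,1,1)$ terms being absent), with the listed values of the $\Delta^{(\cdot)}_k$ as polynomials in $t_2,t_3,t_4$. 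Substituting these into \eqref{6} and contracting each $C_\lambda$ against $A^2\ot B^2$, $A^2\ot B^{\ot2}$, $B^2\ot A^{\ot2}$ and $A^{\ot2}\ot B^{\ot2}$ exactly as in Subsection~\ref{subsect:iso} gives the $d=2$ analogue of Theorem~\ref{th:iso}, i.e.\ $\sum_{j=1}^8\omega_j^{(2)}(\Lambda)\,\Omega_j(A,B)$ with coefficients affine in $t_2,t_3,t_4,t_2^2$. Integrating over the $d=2$ spectral measure $\nu$, Proposition~\ref{prop:karol} and Lemma~\ref{lem:lin} give $\langle t_2\rangle=\tfrac45$, $\langle t_3\rangle=\tfrac7{10}$, $\langle t_4\rangle=\tfrac{22}{35}$, $\langle t_2^2\rangle=\tfrac{23}{35}$ (equivalently one may use the qubit identities $t_3=\tfrac{3t_2-1}{2}$, $t_4=t_2^2-\tfrac12(1-t_2)^2$); inserting these averaged moments into $\omega_j^{(2)}$ yields the claimed coefficients $\tfrac2{105},-\tfrac2{35},\tfrac4{105},\tfrac{29}{210},\tfrac1{105},-\tfrac1{21},\tfrac3{70},\tfrac1{210}$.

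For the second identity I would run the same bookkeeping on the lower bound. Using the expansion \eqref{12} of $\Pa{\langle\set{A,B}\rangle_\rho-\langle A\rangle_\rho\langle B\rangle_\rho}^2$ together with $\langle[A,B]\rangle^2_\rho=\Tr{\rho^{\ot2}[A,B]^{\ot2}}$, I contract the $d=2$ forms of $\sE_2,\sE_3,\sE_4$ against $\set{A,B}^{\ot2}$, $\set{A,B}\ot A\ot B$ and $A^{\ot2}\ot B^{\ot2}$ — the three contractions already appearing in Subsection~\ref{subset:ave-lower}, but now without the $d\geqslant4$ simplifications — then integrate over $\nu$ with the same moments and collect the coefficient of each $\Omega_j(A,B)$, arriving at $\tfrac2{105},\tfrac1{105},-\tfrac2{21},\tfrac1{210},\tfrac17,-\tfrac1{21},\tfrac8{105},-\tfrac1{35}$.

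The only genuine labor is the $S_4$ contraction, namely computing $\Tr{(A^{\ot2}\ot B^{\ot2})P_\pi}$ over the twenty-four permutations and regrouping them into the class sums defining $C_{(4)},C_{(3,1)},C_{(2,2)}$; but this is exactly the display already carried out in Subsection~\ref{subsect:iso}, so it may be reused almost verbatim, only the scalar prefactors changing. The main pitfall to flag is the one noted above — one must not read off Theorem~\ref{th:dim=2} by setting $d=2$ in Theorems~\ref{th:iso}, \ref{th:HS-average} or \ref{th:ave-lower}, because of the vanishing denominators — so the computation must restart from the $d=2$ forms of $\sE_2,\sE_3,\sE_4$. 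Two cheap consistency checks: with $A=B$ the first identity must reproduce $\langle\Delta A(\rho)^4\rangle$ computed independently from $\sE_2$ and $\sE_4$ at $d=2$, and with $A=\sigma_i$, $B=\sigma_j$, $i\neq j$, the value should be comparable with the pure-state number $\tfrac25$ from the Corollary following Theorem~\ref{th:pure-case}.
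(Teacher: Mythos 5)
Your proposal is correct and follows essentially the same route as the paper's own proof: the paper likewise restarts from the $d=2$ specializations of $\sE_2,\sE_3,\sE_4$ (where the partitions with more than two rows drop out), contracts against the same operators via \eqref{6} and \eqref{12}, and averages with the $d=2$ moments $\langle t_2\rangle=\tfrac45$, $\langle t_3\rangle=\tfrac7{10}$, $\langle t_4\rangle=\tfrac{22}{35}$, $\langle t_2^2\rangle=\tfrac{23}{35}$. The only cosmetic difference is order of operations — the paper averages the $\Delta^{(\lambda)}_k$ first and then contracts, while you contract first and then average — which is immaterial by linearity; your warning against substituting $d=2$ into the general theorems is exactly the reason the paper treats this case separately.
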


\begin{proof}
For $d=2$, we have
\begin{eqnarray}
\langle t_2\rangle_2=\frac{4}{5}, ~~\langle
t_3\rangle_2=\frac{7}{10}, ~~\langle t_4\rangle_2=\frac{22}{35},~~
\langle t_2^2\rangle_2=\frac{23}{35}.
\end{eqnarray}
Hence,
\begin{eqnarray}
&&\langle \Delta_2^{(2)}\rangle_2=\frac{3}{10},~ \langle \Delta_2^{(1,1)}\rangle_2=\frac{1}{10},\\
&&\langle \Delta_3^{(3)}\rangle_2=\frac{1}{5},~ \langle
\Delta_3^{(2,1)}\rangle_2=\frac{1}{20},
\\
&&\langle \Delta_4^{(4)}\rangle_2=\frac{1}{7},~ \langle
\Delta_4^{(3,1)}\rangle_2=\frac{1}{35}, ~ \langle
\Delta_4^{(2,2)}\rangle_2=\frac{1}{70}.
\end{eqnarray}
Since
\begin{eqnarray}
\int_{\density{\cH_2}} \Tr{\Br{A^2\ot
B^2}\sE_2(\Lambda)}\dif\nu(\Lambda)&=&\frac{3}{20}(\Omega_4+\Omega_7)+\frac{1}{20}(\Omega_4-\Omega_7)
\notag\\
&=&\frac{1}{5}\Omega_4+\frac{1}{10}\Omega_7,
\end{eqnarray}
it follows that
\begin{eqnarray}
&&\int_{\density{\cH_2}} \Tr{\Br{A^{\otimes2}\ot
B^{\otimes2}}\sE_4(\Lambda)}\dif\nu(\Lambda)\notag\\
&&=\frac{1}{7}\times\frac{1}{24}
\Pa{\Omega_1+\Omega_2+4\Omega_3+\Omega_4+2\Omega_5+4\Omega_6+4\Omega_7+2\Omega_8}
\notag\\&&+\frac{1}{35}\times\frac{1}{8}\Pa{3\Omega_1+\Omega_2+4\Omega_3-\Omega_4-2\Omega_5-4\Omega_7-2\Omega_8}
\notag\\&&+\frac{1}{70}\times\frac{1}{12}\Pa{2\Omega_1+2\Omega_4+4\Omega_5-4\Omega_6}
\notag\\&&=\frac{2}{105}\Omega_1+\frac{1}{105}\Omega_2+\frac{4}{105}\Omega_3+\frac{1}{210}\Omega_4+\frac{1}{105}\Omega_5
+\frac{2}{105}\Omega_6+\frac{1}{105}\Omega_7+\frac{1}{210}\Omega_8,
\end{eqnarray}
and
\begin{eqnarray}
&&\int_{\density{\cH_2}} \Tr{\Br{A^2\ot B^{\otimes2}+B^2\ot
A^{\otimes2}}\sE_3(\Lambda)}\dif\nu(\Lambda)\notag\\
&&=\frac{1}{5}\times\frac{1}{6}
\Pa{\Omega_2+2\Omega_4+2\Omega_6+4\Omega_7}
+\frac{1}{20}\times\frac{2}{3}\Pa{\Omega_2-2\Omega_7}
\notag\\
&&=\frac{1}{15}\Omega_2+\frac{1}{15}\Omega_4+\frac{1}{15}\Omega_6+\frac{1}{15}\Omega_7,
\end{eqnarray}
then by ~\eqref{6}, we get
\begin{eqnarray}
&&\int_{\density{\cH_2}} \Delta A(\rho)^2\cdot\Delta
B(\rho)^2\dif\mu_{\rH\rS}(\rho)\notag\\
&&=\frac{2}{105}\Omega_1-\frac{2}{35}\Omega_2+\frac{4}{105}\Omega_3+\frac{29}{210}\Omega_4+\frac{1}{105}\Omega_5
-\frac{1}{21}\Omega_6+\frac{3}{70}\Omega_7+\frac{1}{210}\Omega_8.
\end{eqnarray}
Since
\begin{eqnarray}
\int_{\density{\cH_2}} \Tr{\{A,B\}^{\otimes
2}\sE_2(\Lambda)}\dif\nu(\Lambda)
=\frac{1}{5}\Omega_5+\frac{1}{20}\Pa{\Omega_7+\Omega_8},
\end{eqnarray}
\begin{eqnarray}
\int_{\density{\cH_2}} \Tr{\Br{\{A,B\}\otimes A\otimes
B}\sE_3(\Lambda)}\dif\nu(\Lambda)
=\frac{1}{15}\Omega_3+\frac{1}{30}\Pa{\Omega_5+\Omega_6}+\frac{1}{60}\Pa{\Omega_7+\Omega_8}
\end{eqnarray}
and
\begin{eqnarray}
\int_{\density{\cH_2}} \Tr{[A,B]^{\otimes
2}\sE_2(\Lambda)}\dif\nu(\Lambda)
=\frac{1}{20}\Pa{\Omega_7-\Omega_8},
\end{eqnarray}
then by \eqref{12}, we get
\begin{eqnarray}
&&\int_{\density{\cH_2}}
\dif\mu_{\rH\rS}(\rho)\Br{\Pa{\langle\set{A,B}\rangle_\rho-\langle
A\rangle_\rho\langle B\rangle_\rho}^2 +\langle
[A,B]\rangle^2_\rho}\notag\\
&&=\frac{2}{105}\Omega_1+\frac{1}{105}\Omega_2-\frac{2}{21}\Omega_3+\frac{1}{210}\Omega_4+\frac{1}{7}\Omega_5
-\frac{1}{21}\Omega_6+\frac{8}{105}\Omega_7-\frac{1}{35}\Omega_8.
\end{eqnarray}
We are done.
\end{proof}

\begin{thrm}\label{th:dim=3}
For two observables $A$ and $B$ on $\complex^3$, the average of
uncertainty-product taken over the whole set of all density matrices
$\density{\complex^3}$ is given by
\begin{eqnarray}
&&\int\Delta A(\rho)^2\cdot\Delta B(\rho)^2\dif\mu_{\rH\rS}(\rho)\notag\\
&&=\frac{3}{440}\Omega_1-\frac{1}{40}\Omega_2+\frac{1}{110}\Omega_3+\frac{109}{1032}\Omega_4+\frac{1}{660}\Omega_5
-\frac{1}{66}\Omega_6+\frac{1}{45}\Omega_7+\frac{1}{1980}\Omega_8.
\end{eqnarray}
Moreover, we have
\begin{eqnarray}
&&\int
\dif\mu_{\rH\rS}(\rho)\Br{\Pa{\langle\set{A,B}\rangle_\rho-\langle
A\rangle_\rho\langle B\rangle_\rho}^2 +\langle
[A,B]\rangle^2_\rho}\notag\\
&&=\frac{3}{440}\Omega_1+\frac{1}{440}\Omega_2-\frac{1}{22}\Omega_3+\frac{1}{1320}\Omega_4+\frac{1}{12}\Omega_5
-\frac{1}{66}\Omega_6+\frac{14}{495}\Omega_7-\frac{1}{180}\Omega_8.
\end{eqnarray}
\end{thrm}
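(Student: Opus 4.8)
The plan is to mimic the proof of Theorem~\ref{th:dim=2}, redoing every computation at $d=3$ from scratch. Note that one cannot simply substitute $d=3$ into the general formulas of the Remark following Theorem~\ref{th:iso} (or into Theorem~\ref{th:HS-average}), since there the prefactor $N_d^{-1}=d^2(d^2-1)(d^2-4)(d^2-9)$ vanishes at $d=3$ (equivalently, a denominator $d-3$ appears in $\Delta^{(1,1,1,1)}_4$); this is precisely why the three-dimensional case warrants a separate statement.

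First I would specialize Proposition~\ref{prop:karol} and Lemma~\ref{lem:lin} to $d=3$, obtaining
\[
\langle t_2\rangle_3=\tfrac35,\qquad \langle t_3\rangle_3=\tfrac{23}{55},\qquad \langle t_4\rangle_3=\tfrac{17}{55},\qquad \langle t_2^2\rangle_3=\tfrac{61}{165}.
\]
Since each $\Delta^{(\lambda)}_k$ in \eqref{eq:delta4}--\eqref{eq:delta1111} is an affine function of $t_2,t_3,t_4$ (and of $t_2^2$ when $k=4$), linearity of the integral over $\dif\nu(\Lambda)$ immediately gives the averaged coefficients $\langle\Delta^{(\lambda)}_k\rangle_3$. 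Here $\langle\Delta^{(1,1,1,1)}_4\rangle_3=0$ because there is no four-row Young diagram in dimension three; however, in contrast with the $d=2$ case, $\langle\Delta^{(1,1,1)}_3\rangle_3\neq0$, so every partition of $k=3$ as well as of $k=4$ must be carried through the calculation.

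Next I would take the fully expanded expressions for $\Tr{[A^2\ot B^2]\sE_2(\Lambda)}$, $\Tr{[A^2\ot B^{\ot 2}]\sE_3(\Lambda)}$, $\Tr{[B^2\ot A^{\ot 2}]\sE_3(\Lambda)}$ and $\Tr{[A^{\ot 2}\ot B^{\ot 2}]\sE_4(\Lambda)}$ recorded in Subsection~\ref{subsect:iso}, replace each $\Delta^{(\lambda)}_k$ by its average $\langle\Delta^{(\lambda)}_k\rangle_3$, and substitute into \eqref{6}; collecting the coefficients of $\Omega_1,\ldots,\Omega_8$ produces the first displayed identity. For the second identity I would perform the analogous substitution in the pieces $\Tr{\{A,B\}^{\ot 2}\sE_2(\Lambda)}$, $\Tr{[\{A,B\}\ot A\ot B]\sE_3(\Lambda)}$, $\Tr{[A^{\ot 2}\ot B^{\ot 2}]\sE_4(\Lambda)}$ and $\Tr{[A,B]^{\ot 2}\sE_2(\Lambda)}$, assemble them according to \eqref{12}, and read off the $\Omega_j$-coefficients.

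The only genuine obstacle is bookkeeping: dimension three supports one more nonzero partition at level $k=3$ than dimension two, so the cancellations that shortened the $d=2$ computation no longer occur, and one must track all five partitions at level $4$ together with all three at level $3$ while collecting like terms. Once this arithmetic is carried out, the stated rational coefficients follow.
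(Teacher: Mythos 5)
Your proposal follows essentially the same route as the paper's own proof: specialize the moments $\langle t_2\rangle,\langle t_3\rangle,\langle t_4\rangle,\langle t_2^2\rangle$ to $d=3$, average the coefficients $\Delta^{(\lambda)}_k$ by linearity (with $\langle\Delta^{(1,1,1,1)}_4\rangle=0$ but $\langle\Delta^{(1,1,1)}_3\rangle\neq 0$, exactly as you note), substitute into the trace expansions and assemble via \eqref{6} and \eqref{12}. The plan and the subtleties you flag (in particular why $d=3$ cannot be read off the general $N_d$ formulas) are correct; only the final arithmetic remains to be carried out.
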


\begin{proof}
For $d=3$, we have
\begin{eqnarray}
\langle
t_2\rangle_3=\frac{3}{5}, ~~\langle t_3\rangle_3=\frac{23}{55},
~~\langle t_4\rangle_3=\frac{17}{55},~~ \langle
t_2^2\rangle_3=\frac{61}{165}.
\end{eqnarray}
Hence,
\begin{eqnarray}
&&\langle \Delta_2^{(2)}\rangle_3=\frac{2}{15},~ \langle \Delta_2^{(1,1)}\rangle_3=\frac{1}{15},\\
&&\langle \Delta_3^{(3)}\rangle_3=\frac{2}{33},~ \langle
\Delta_3^{(2,1)}\rangle_3=\frac{4}{165},~ \langle
\Delta_3^{(1,1,1)}\rangle_3=\frac{1}{165}
\\
&&\langle \Delta_4^{(4)}\rangle_3=\frac{1}{33},~ \langle
\Delta_4^{(3,1)}\rangle_3=\frac{1}{99}, ~ \langle
\Delta_4^{(2,2)}\rangle_3=\frac{1}{165}~ \langle
\Delta_4^{(2,1,1)}\rangle_3=\frac{1}{495}.
\end{eqnarray}
Since
\begin{eqnarray}
\int_{\density{\cH_3}} \Tr{\Br{A^2\ot
B^2}\sE_2(\Lambda)}\dif\nu(\Lambda)&=&\frac{1}{15}(\Omega_4+\Omega_7)+\frac{1}{30}(\Omega_4-\Omega_7)
\\&=&\frac{1}{10}\Omega_4+\frac{1}{30}\Omega_7,
\end{eqnarray}
\begin{eqnarray}
&&\int_{\density{\cH_3}} \Tr{\Br{A^{\otimes2}\ot
B^{\otimes2}}\sE_4(\Lambda)}\dif\nu(\Lambda)\notag\\
&&=\frac{1}{33}\times\frac{1}{24}
\Pa{\Omega_1+\Omega_2+4\Omega_3+\Omega_4+2\Omega_5+4\Omega_6+4\Omega_7+2\Omega_8}
\notag\\&&+\frac{1}{99}\times\frac{1}{8}\Pa{3\Omega_1+\Omega_2+4\Omega_3-\Omega_4-2\Omega_5-4\Omega_7-2\Omega_8}
\notag\\&&+\frac{1}{165}\times\frac{1}{12}\Pa{2\Omega_1+2\Omega_4+4\Omega_5-4\Omega_6}
\notag\\&&+\frac{1}{495}\times\frac{1}{8}\Pa{3\Omega_1-\Omega_2-4\Omega_3-\Omega_4-2\Omega_5+4\Omega_7+2\Omega_8}
\notag\\&&=\frac{3}{440}\Omega_1+\frac{1}{440}\Omega_2+\frac{1}{110}\Omega_3+\frac{1}{1320}\Omega_4+\frac{1}{660}\Omega_5
+\frac{1}{330}\Omega_6+\frac{1}{990}\Omega_7+\frac{1}{1980}\Omega_8
\end{eqnarray}
and
\begin{eqnarray}
&&\int_{\density{\cH_3}} \Tr{\Br{A^2\ot B^{\otimes2}+B^2\ot
A^{\otimes2}}\sE_3(\Lambda)}\dif\nu(\Lambda)\notag\\
&&=\frac{2}{33}\times\frac{1}{6}
\Pa{\Omega_2+2\Omega_4+2\Omega_6+4\Omega_7}
+\frac{4}{165}\times\frac{2}{3}\Pa{\Omega_2-2\Omega_7}
\notag\\&&+\frac{1}{165}\times\frac{1}{6}
\Pa{\Omega_2-2\Omega_4-2\Omega_6+4\Omega_7}
\notag\\&&=\frac{3}{110}\Omega_2+\frac{1}{55}\Omega_4+\frac{1}{55}\Omega_6+\frac{2}{165}\Omega_7,
\end{eqnarray}
then by \eqref{6}, we get
\begin{eqnarray}
&&\int_{\density{\cH_3}} \Delta A(\rho)^2\cdot\Delta
B(\rho)^2\dif\mu_{\rH\rS}(\rho)\notag\\
&&=\frac{3}{440}\Omega_1-\frac{1}{40}\Omega_2+\frac{1}{110}\Omega_3+\frac{109}{1032}\Omega_4+\frac{1}{660}\Omega_5
-\frac{1}{66}\Omega_6+\frac{1}{45}\Omega_7+\frac{1}{1980}\Omega_8.
\end{eqnarray}
Since
\begin{eqnarray}
\int_{\density{\cH_3}} \Tr{\{A,B\}^{\otimes
2}\sE_2(\Lambda)}\dif\nu(\Lambda)
=\frac{1}{10}\Omega_5+\frac{1}{60}\Pa{\Omega_7+\Omega_8},
\end{eqnarray}
\begin{eqnarray}
\int_{\density{\cH_3}} \Tr{\Br{\{A,B\}\otimes A\otimes B}\sE_3(\Lambda)}\dif\nu(\Lambda)
=\frac{3}{110}\Omega_3+\frac{1}{110}\Pa{\Omega_5+\Omega_6}+\frac{1}{330}\Pa{\Omega_7+\Omega_8}
\end{eqnarray}
and
\begin{eqnarray}
\int_{\density{\cH_3}} \Tr{[A,B]^{\otimes 2}\sE_2(\Lambda)}\dif\nu(\Lambda)
=\frac{1}{60}\Pa{\Omega_7-\Omega_8},
\end{eqnarray}
then by \eqref{12}, we get
\begin{eqnarray}
&&\int_{\density{\cH_3}}
\dif\mu_{\rH\rS}(\rho)\Br{\Pa{\langle\set{A,B}\rangle_\rho-\langle
A\rangle_\rho\langle B\rangle_\rho}^2 +\langle
[A,B]\rangle^2_\rho}\notag\\
&&=\frac{3}{440}\Omega_1+\frac{1}{440}\Omega_2-\frac{1}{22}\Omega_3+\frac{1}{1320}\Omega_4+\frac{1}{12}\Omega_5
-\frac{1}{66}\Omega_6+\frac{14}{495}\Omega_7-\frac{1}{180}\Omega_8.
\end{eqnarray}
This completes the proof.
\end{proof}


\end{document}